\documentclass[11pt,a4paper]{article}

\usepackage[margin=1in]{geometry}
\usepackage{amsmath,amssymb,amsthm}
\usepackage{booktabs}
\usepackage{graphicx}
\graphicspath{{figures/}}
\usepackage{hyperref}
\usepackage[authoryear,round]{natbib} 
\usepackage[protrusion=true,expansion=false]{microtype}
\usepackage{caption}
\usepackage{subcaption}
\usepackage{enumerate}
\usepackage{float}
\newfloat{algorithm}{t}{loa}
\floatname{algorithm}{Algorithm}

\hypersetup{colorlinks=true,linkcolor=blue,citecolor=blue,urlcolor=blue}

\newtheorem{theorem}{Theorem}
\newtheorem{corollary}{Corollary}
\newtheorem{proposition}{Proposition}
\newtheorem{lemma}{Lemma}
\theoremstyle{remark}
\newtheorem{remark}{Remark}

\newcommand{\op}{\mathrm{op}}
\newcommand{\AR}{\mathrm{AR}}
\newcommand{\gaptwelve}{\mathrm{gap}_{12}}
\newcommand{\R}{\mathbb{R}}
\newcommand{\E}{\mathbb{E}}

\title{Error Propagation in Spectral Functionals of Shrinkage Covariance
Estimators:\\
Perturbation Bounds and Calibrated Inference}

\author{Ahmad Koman\\
  Department of Mathematics\\
  Chalmers University of Technology\\
  Chalmers Tv\"argata 3, 412 58 G\"oteborg, Sweden\\
  \texttt{ahmadko@chalmers.se}}

\date{\today}

\begin{document}
\maketitle

\begin{abstract}
Rolling covariance estimates feed two objects that are routinely treated as
market structure. The first is the dominant eigenspace, monitored through the projector
movement $\widehat D_{K,t}=\|\widehat P_{K,t}-\widehat P_{K,t-1}\|_F$; the second
comprises scalar spectral functionals such as the absorption ratio and the
leading-eigenvalue share.
Both fluctuate under estimation noise, and shrinkage changes the law of that
noise, so reading their movements as structural change requires calibration.
For the eigenspace, we derive a first-order null law for $\widehat D_{K,t}$
between \emph{overlapping} windows that share most of their data and show that it
transfers without change to rotation-equivariant shrinkage estimators.
A distribution-free Davis--Kahan band gauges whether the eigenspace is
identified, an estimator-aware bootstrap provides the calibrated test, and a
companion power analysis gives an approximate design rule for the smallest
detectable rotation.
For the scalar functionals, we show that first-order immunity to elliptical kurtosis
holds for scale-invariant functionals and only for them, so that one estimated
scalar calibrates the projector null and the absorption-ratio and leading-share
intervals across the elliptical family.
In high dimensions, where shrinkage cleaning biases the absorption ratio, we give
a trace-preserving spike-debiased estimator that removes the bias.
The results are verified by simulation under a known population covariance; an
equity-panel appendix shows the procedures as diagnostics when the population is
unknown.
\end{abstract}

\noindent\textbf{Keywords:} covariance shrinkage; high-dimensional covariance; spectral
functionals; subspace monitoring; Davis--Kahan perturbation; eigengap; projector distance;
random matrix theory; Monte Carlo calibration; operator norm

\newpage
\setcounter{tocdepth}{2}
\tableofcontents

\newpage
\section{Introduction}
\label{sec:intro}

A covariance matrix estimated on a rolling window is re-read continually for the
structure it encodes. Its leading eigenvalues measure how concentrated risk is,
the share of variance they carry is tracked as a barometer of systemic
co-movement \citep{Kritzman2011}, and the subspace spanned by its top
eigenvectors is taken to be the set of dominant factors. When these quantities
move, the change is interpreted as a shift in market structure and acted on. The
covariance is never observed, however. It is estimated, increasingly through
shrinkage or cleaning maps that improve matrix risk
\citep{LedoitWolf2004,LedoitWolf2020,LedoitWolf2022QIS}, and the eigenstructure of
an estimate drifts under estimation noise even when the population is fixed.
Separating genuine structural change from estimation noise therefore requires
knowing how much movement the noise alone produces, and how the cleaning map
reshapes it.

We take up that question here. When a window sample covariance $S_t$ is mapped
to an estimate $C_t$ (shrinkage, cleaning, or the identity $C_t=S_t$), how large
are the errors it induces in the nonlinear spectral functionals and in the
sequential top-$K$ projector that a monitor reports? Ledoit--Wolf theory bounds
matrix-level shrinkage risk
\citep{LedoitWolf2004,LedoitWolf2020,LedoitWolf2022QIS}; we instead condition on
the operator-norm error $\eta_t=\|C_t-\Sigma_t\|_{\op}$ and propagate it to the
spectral functionals and projector monitors through perturbation theory
\citep{DavisKahan1970,YuWangSamworth2015,Bhatia1997}. The propagation is
deterministic and is verified by finite-sample simulation under a known $\Sigma$,
with a real-data illustration when $\Sigma_t$ is unknown \citep{FanJiaoYao2018}.

Consider the dominant eigenspace first. When the gap between the $K$th and
$(K{+}1)$th population eigenvalues is wide, the top-$K$ eigenspace is a
well-identified feature of the population, and a genuine change in it signals a
change in the factor structure worth detecting. Its movement between consecutive
dates is therefore the natural quantity to monitor, and we measure it by the
Frobenius distance between the top-$K$ eigenprojectors estimated at $t-1$ and at
$t$. The observed movement mixes this genuine movement with estimation noise,
whose size is governed by the ratio of the estimation error to that same eigengap.
A wide gap thus makes observed movement informative, whereas a narrow gap lets
small errors produce large apparent rotation.

A deterministic, gap-adjusted bound makes the separation precise, controlling the
discrepancy between observed and latent movement by a quantity of exactly this
error-over-gap form and capping it at the largest distance two rank-$K$ projectors
can attain, so that it never exceeds the trivial bound
(Theorem~\ref{thm:dynamic-projector}).

The bound is worst-case and, in the dimensions typical of return panels, too
conservative to serve as a test, so we calibrate the monitor against an explicit
sampling model. The distribution of observed movement under no population change
follows from a first-order analysis between \emph{overlapping} windows that share
most of their data (Proposition~\ref{prop:noise-floor}), and the same calibration
serves the widely used rotation-equivariant shrinkage estimators without
modification (\S\ref{sec:rot-equiv}). A companion power analysis
(Proposition~\ref{prop:tier-b-power}) turns the detection limits of two-window
monitoring into closed-form expressions and, under a median-shift approximation,
into an approximate design rule for the smallest rotation a monitor can detect at
a target power and the number of fresh observations it requires. The analytic null
is exact only near the ends of the overlap range, and an estimator-aware
parametric bootstrap covers intermediate overlap, clipping-based estimators, and
heavy tails.

First-order immunity to elliptical kurtosis holds for scale-invariant spectral
functionals, and for them alone; the sufficiency direction is classical
\citep{Tyler1983,ShapiroBrowne1987}, while we establish the converse and the exact
finite-sample window-scale pivotality of Proposition~\ref{prop:exact-pivot}
(Proposition~\ref{prop:scale-immunity}). This immunity calibrates the two scalar
functionals the monitor reports. Proposition~\ref{prop:ar-clt} gives a fixed-$N$
elliptical central limit theorem with confidence intervals for the absorption
ratio and the leading share whose only elliptical input is a single estimated
scalar, and the same scalar calibrates the projector null. Spectral entropy is an
exception, its first-order intervals being unreliable near flat spectra, and we
report it as such.

The absorption ratio needs one further caution in high dimensions.
Matrix-shrinkage rankings and ratio-functional error need not agree
(Propositions~\ref{prop:ar-expansion} and~\ref{prop:matrix-vs-ar}), and in the
spiked proportional regime $N/M\to c$ the sample and Frobenius-optimal plug-in
absorption ratios acquire explicit, oppositely signed relative biases
(Proposition~\ref{prop:wedge}) that the trace-preserving spike-debiased estimator
of \S\ref{sec:ar-hd-caution} removes (Figure~\ref{fig:ar-wedge}).
Proposition~\ref{prop:propagate} records the complementary conditional propagation
of $\eta_t$ to the functionals.

The remainder of the paper is organized as follows.
Section~\ref{sec:setting} fixes notation and states the monitoring procedure.
Section~\ref{sec:theory} develops the worst-case perturbation bounds.
Section~\ref{sec:inference} provides the calibrated inference for eigenspace
movement, covering the overlapping-window null, its transfer to shrinkage
estimators, the bootstrap, and the power frontier.
Section~\ref{sec:functional-inference} calibrates the scalar functionals that the
monitor reports, with the kurtosis-immunity principle at its center and a
high-dimensional caution for the absorption ratio.
Section~\ref{sec:mc} collects all simulation evidence, and
Section~\ref{sec:discussion} concludes.
Proofs are collected in Appendix~\ref{app:proofs-detailed} and the equity-panel
illustration in Appendix~\ref{app:panel}.

\section{Setting and notation}
\label{sec:setting}

At calendar date $t$, let $r_{t-M+1},\ldots,r_t\in\R^N$ be the observations in a
window of length $M$, and let $\tilde r_\tau=r_\tau-\bar r$ be the deviations from
the window mean $\bar r$.
The sample covariance is $S_t=(M-1)^{-1}\sum_{\tau}\tilde r_\tau\tilde r_\tau^\top$.
We assume throughout that the estimator $\mathcal{A}$ is \emph{rotation-equivariant}, meaning
it maps the sample matrix to $C_t=\mathcal{A}(S_t)$ by acting on the eigenvalues of
$S_t$ alone and leaving its eigenvectors unchanged. The estimators studied are quadratic-inverse shrinkage (QIS)
\citep{LedoitWolf2022QIS}, linear shrinkage (LW) \citep{LedoitWolf2004},
oracle-approximating shrinkage (OAS) \citep{ChenWieselHeroEldar2010}, the sample
covariance itself, and Marchenko--Pastur eigenvalue clipping with single-spike
debiasing, denoted MP-clip \citep{BunBouchaudPotters2017,AhnHorenstein2013}; other
rotation-equivariant choices plug into the same pipeline.
The population matrix $\Sigma_t$ is the covariance of the data-generating
process.
Writing the eigenvalues of a matrix $C$ in decreasing order
$\lambda_1(C)\ge\cdots\ge\lambda_N(C)$, we let $U_K(C)$ collect its top-$K$
eigenvectors and write $P_{K,t}=U_K(\Sigma_t)U_K(\Sigma_t)^\top$ and
$\widehat P_{K,t}=U_K(C_t)U_K(C_t)^\top$ for the orthogonal projectors onto the span
of the top-$K$ eigenvectors of $\Sigma_t$ and $C_t$.
The monitored statistic is the movement of this projector between consecutive
dates,
\begin{equation}
  D_{K,t}=\|P_{K,t}-P_{K,t-1}\|_F,
  \qquad
  \widehat D_{K,t}=\|\widehat P_{K,t}-\widehat P_{K,t-1}\|_F,
  \label{eq:D-def-intro}
\end{equation}
the Frobenius distance between successive projectors; $D_{K,t}$ is the latent
movement and $\widehat D_{K,t}$ its observed counterpart.

Two derived quantities organize the analysis.
The first is the perturbation size $\eta_t=\|C_t-\Sigma_t\|_{\op}$, the largest
absolute eigenvalue of the error matrix $C_t-\Sigma_t$, equivalently the
worst-case error the estimate makes in any single direction. The bounds of
\S\ref{sec:theory} depend on the error only through $\eta_t$.
When $\Sigma_t$ is unobserved, $\eta_t$ is replaced by the sample-relative
quantity $\|C_t-S_t\|_{\op}$.
The second is the eigengap at the monitoring cut,
\begin{equation}
  \Delta_{K,t}=\lambda_K(\Sigma_t)-\lambda_{K+1}(\Sigma_t),
  \label{eq:eigengap}
\end{equation}
which has the same units as $\eta_t$ and sets how strongly an eigenvalue
perturbation can rotate the top-$K$ eigenspace, with a wide gap pinning it
down and a narrow gap leaving it loosely determined.

\paragraph{Spectral functionals.}
\begin{align*}
  f_1(C) &= \lambda_1(C)\Big/\textstyle\sum_j \lambda_j(C), \\
  \AR_K(C) &= \textstyle\sum_{i\le K}\lambda_i(C)\Big/\textstyle\sum_j \lambda_j(C), \\
  H(C) &= -\textstyle\sum_i p_i\log p_i,\quad p_i=\lambda_i(C)/\textstyle\sum_j\lambda_j(C).
\end{align*}
Each is invariant under rescaling the covariance, $W(bC)=W(C)$ for $b>0$,
because the numerator and denominator scale together; this scale invariance is
the basis of the calibration theory of \S\ref{sec:functional-inference}.
The monitoring band depends only on the eigengap $\Delta_{K,t}$ of \eqref{eq:eigengap}.
Some empirical results are additionally stratified by descriptive quantities,
the normalized gap $\gaptwelve=(\lambda_1-\lambda_2)/\lambda_1$, the angle
between successive leading eigenvectors, and the chordal distance between
successive eigenspaces, but none of these enters the band.

\subsection{Monitoring procedure}
\label{sec:algorithm}

Algorithm~\ref{alg:spectral-monitor} is the pipeline used in \S\ref{sec:mc} and
Appendix~\ref{app:panel}.
It computes the projector movement of \eqref{eq:D-def-intro} and the band of
Theorem~\ref{thm:dynamic-projector}, using the population error $\eta_t$ in
simulation and the sample-relative error $\|C_t-S_t\|_{\op}$ on real data; the
bound itself is unchanged by the substitution.

\begin{algorithm}[!ht]
\caption{Sequential gap-adjusted spectral monitor at date $t$.}
\label{alg:spectral-monitor}
\hrule\smallskip
\noindent\textbf{Input:} return window $W_t$; covariance estimator
$\mathcal{A}$; maximum dimension $K_{\max}$; optional scale $\alpha\in(0,1]$.
\smallskip\hrule\smallskip
\begin{enumerate}\setlength{\itemsep}{2pt}
  \item $S_t\gets$ sample covariance of $W_t$;\quad $C_t\gets\mathcal{A}(S_t)$.
  \item Eigendecompose $C_t$ and select $\widehat K_t\le K_{\max}$ (largest
  eigengap, or an $\AR_k$ threshold).
  \item $\widehat P_{K,t}\gets U_{\widehat K_t}U_{\widehat K_t}^\top$;\quad
  $\widehat D_{K,t}\gets\|\widehat P_{K,t}-\widehat P_{K,t-1}\|_F$.
  \item $\eta_t\gets\|C_t-\Sigma_t\|_{\op}$ in simulation, or
  $\|C_t-S_t\|_{\op}$ on real data.
  \item Band $\tau^*_{K,t}\gets\min\!\bigl(\sqrt{2K},\;T_{K,t}+T_{K,t-1}\bigr)$, with
  per-date term $T_{K,s}=\min\!\bigl(2\sqrt{2}\,\sqrt{K}\,\eta_s/\Delta_{K,s},\;\sqrt{2K}\bigr)$
  of \eqref{eq:dk-term} capped at the maximum projector distance $\sqrt{2K}$; on real
  data replace the population gap $\Delta_{K,s}$ by the observable
  $(\widehat\Delta_{K,s}-2\eta_s)_+$.
  \item If $\widehat K_t\ne\widehat K_{t-1}$, record a rank-change event
  (\S\ref{sec:theory-monitor}).
\end{enumerate}
\smallskip\hrule\smallskip
\noindent\textbf{Output:} $C_t$; the functionals $f_1,\AR_K,H$; the movement
$\widehat D_{K,t}$; the band $\tau^*_{K,t}$; the flag
$\mathbb{1}\{\widehat D_{K,t}>\alpha\tau^*_{K,t}\}$.
\smallskip\hrule
\end{algorithm}

\noindent The estimator $\mathcal{A}$ is QIS, LW, MP-clip, or the identity (the
sample covariance $S_t$).

\newpage
\section{Worst-case perturbation bounds}
\label{sec:theory}

All estimation error enters the monitored objects through the operator-norm
budget $\eta_t=\|C_t-\Sigma_t\|_{\op}$, and we first ask what this budget alone
implies. Under the single assumption $\|C_t-\Sigma_t\|_{\op}\le\eta_t$,
Theorem~\ref{thm:dynamic-projector} bounds the deviation of the observed top-$K$
projector movement from the latent movement in terms of $\eta_t$ and the eigengap.
The bound holds for every estimator and every return distribution and is
correspondingly conservative.

\subsection{A band for sequential eigenspace monitoring}
\label{sec:theory-monitor}

Only the estimated projector $\widehat P_{K,t}$ is observed, so the observed
movement $\widehat D_{K,t}$ of \eqref{eq:D-def-intro} differs from the latent
movement $D_{K,t}$. We bound this difference by first bounding the per-date
deviation of an estimated projector from its population target, and then combining
two consecutive dates through the reverse triangle inequality. The estimator
enters only through the bound $\|C_s-\Sigma_s\|_{\op}\le\eta_s$.

The projector distance is determined by the principal angles between the two
subspaces.

\begin{lemma}[Projector--$\sin\Theta$ identity]
\label{lem:proj-sin}
For rank-$K$ projectors $P_K=U_KU_K^\top$ and
$\widehat P_K=\widehat U_K\widehat U_K^\top$,
\[
  \|P_K-\widehat P_K\|_F=\sqrt{2}\,\|\sin\Theta(U_K,\widehat U_K)\|_F
\]
\citep{StewartSun1990,YuWangSamworth2015}.
\end{lemma}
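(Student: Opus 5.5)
The plan is to expand the squared Frobenius norm with the trace inner product and then express everything through the principal angles. Assume $U_K,\widehat U_K\in\R^{N\times K}$ have orthonormal columns, so that $P_K^2=P_K$, $\widehat P_K^2=\widehat P_K$, and $\operatorname{tr}P_K=\operatorname{tr}\widehat P_K=K$. Expanding gives
\[
  \|P_K-\widehat P_K\|_F^2=\operatorname{tr}P_K+\operatorname{tr}\widehat P_K-2\operatorname{tr}(P_K\widehat P_K)=2K-2\operatorname{tr}(P_K\widehat P_K).
\]
The cross term is handled by cyclicity of the trace:
\[
  \operatorname{tr}(U_KU_K^\top\widehat U_K\widehat U_K^\top)=\operatorname{tr}\bigl((U_K^\top\widehat U_K)(U_K^\top\widehat U_K)^\top\bigr)=\|U_K^\top\widehat U_K\|_F^2 .
\]

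Next I would use the definition of the principal angles $\theta_1,\dots,\theta_K\in[0,\pi/2]$. Their cosines are the singular values of the $K\times K$ matrix $U_K^\top\widehat U_K$, and these lie in $[0,1]$ because both factors have orthonormal columns. Writing $\sigma_i=\cos\theta_i$, we get $\|U_K^\top\widehat U_K\|_F^2=\sum_i\cos^2\theta_i$. Therefore
\[
  \|P_K-\widehat P_K\|_F^2=2\sum_{i=1}^K(1-\cos^2\theta_i)=2\sum_{i=1}^K\sin^2\theta_i=2\|\sin\Theta(U_K,\widehat U_K)\|_F^2 .
\]
Here $\sin\Theta$ is the diagonal matrix of the $\sin\theta_i$. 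Taking square roots yields the identity.

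The only point needing care is the convention for $\sin\Theta$. The paper cites \citet{YuWangSamworth2015}, so I would adopt their definition via the singular values of $U_K^\top\widehat U_K$, which makes the argument above immediate. For a reader who prefers the alternative form $\|\sin\Theta\|_F=\|(I-P_K)\widehat U_K\|_F$, I would add a one-line check of equivalence. That norm squared equals $\operatorname{tr}(\widehat U_K^\top(I-P_K)\widehat U_K)=K-\|U_K^\top\widehat U_K\|_F^2$, which matches the sum above. No genuine obstacle is expected, since the identity is purely algebraic and uses no eigengap.
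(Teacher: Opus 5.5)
Your proposal is correct and follows essentially the same route as the paper's proof: expand $\|P_K-\widehat P_K\|_F^2$ via the trace and idempotence to get $2K-2\|U_K^\top\widehat U_K\|_F^2$, then identify the singular values of $U_K^\top\widehat U_K$ with the principal cosines. Your extra check that $\|(I-P_K)\widehat U_K\|_F^2=K-\|U_K^\top\widehat U_K\|_F^2$ is a harmless addition that covers the alternative $\sin\Theta$ convention.
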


The Davis--Kahan theorem bounds these angles by the perturbation size relative to
the eigengap $\Delta_{K,t}$, the bound deteriorating as $\Delta_{K,t}\to 0$.

\begin{lemma}[Davis--Kahan $\sin\Theta$ bound]
\label{lem:dk-sin}
Let $\Delta_{K,t}=\lambda_K(\Sigma_t)-\lambda_{K+1}(\Sigma_t)>0$ and
$\|E_t\|_{\op}\le\eta_t$, where $E_t=C_t-\Sigma_t$. Then
\begin{equation}
  \|\sin\Theta(\widehat U_{K,t},U_{K,t})\|_F
  \le \frac{2\min\bigl(\sqrt{K}\,\eta_t,\|E_t\|_F\bigr)}{\Delta_{K,t}}
  \le \frac{2\sqrt{K}\,\eta_t}{\Delta_{K,t}}.
  \label{eq:dk-sin}
\end{equation}
\end{lemma}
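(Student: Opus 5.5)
This is the Yu--Wang--Samworth variant of Davis--Kahan. It needs only a gap in the population matrix $\Sigma_t$, not in $C_t$. The plan is to reduce to the classical $\sin\Theta$ theorem and then handle the case where the perturbation eats the gap.

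\emph{Case split on $\eta_t$ versus $\Delta_{K,t}$.} Write $\widehat\lambda_i=\lambda_i(C_t)$ and $\lambda_i=\lambda_i(\Sigma_t)$. Weyl's inequality gives $|\widehat\lambda_i-\lambda_i|\le\|E_t\|_{\op}$ for every $i$.

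If $\|E_t\|_{\op}\ge\Delta_{K,t}/2$, the bound is trivial. Indeed $\|\sin\Theta\|_F\le\sqrt K$ always holds, and $\|\sin\Theta\|_F\le\|\sin\Theta\|_{\op}\sqrt{K}$ with $\|\sin\Theta\|_{\op}\le1$. The other branch of the minimum is also covered: since $\|\sin\Theta\|_F^2$ is a sum of $K$ squared sines, each at most one, and $\|E_t\|_F\ge\|E_t\|_{\op}\ge\Delta_{K,t}/2$, we get
\[
2\min(\sqrt K\|E_t\|_{\op},\|E_t\|_F)/\Delta_{K,t}\ge\min(\sqrt K,1).
\]
The $\|E_t\|_F$ branch therefore needs a little more care. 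I would instead argue directly through the Sylvester-type bound below, which avoids the trivial branch for that term.

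\emph{Main step.} Let $U=U_{K,t}$, let $U_\perp$ be its complement, and let $\widehat U=\widehat U_{K,t}$. Write $\Lambda_1=\mathrm{diag}(\lambda_1,\dots,\lambda_K)$ and $\widehat\Lambda_1=\mathrm{diag}(\widehat\lambda_1,\dots,\widehat\lambda_K)$. From $C_t\widehat U=\widehat U\widehat\Lambda_1$ and $\Sigma_t U_\perp=U_\perp\Lambda_2$ we obtain the Sylvester identity
\[
U_\perp^\top\widehat U\,\widehat\Lambda_1-\Lambda_2\,U_\perp^\top\widehat U=U_\perp^\top E_t\widehat U .
\]
Note that $\|U_\perp^\top\widehat U\|_F=\|\sin\Theta\|_F$. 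The spectra of $\widehat\Lambda_1$ and $\Lambda_2$ are separated by $\delta=\widehat\lambda_K-\lambda_{K+1}$. The standard entrywise Sylvester bound (both matrices are diagonal) then gives $\|\sin\Theta\|_F\le\|U_\perp^\top E_t\widehat U\|_F/\delta$. The numerator is at most $\min(\sqrt K\|E_t\|_{\op},\|E_t\|_F)$, because $U_\perp^\top E_t\widehat U$ is $(N-K)\times K$ and so has rank at most $K$.

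\emph{Handling $\delta$, the main obstacle.} By Weyl, $\delta\ge\Delta_{K,t}-\|E_t\|_{\op}$. This is at least $\Delta_{K,t}/2$ whenever $\|E_t\|_{\op}\le\Delta_{K,t}/2$, which yields the factor $2$ in \eqref{eq:dk-sin}.

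In the complementary case the bound must be shown to exceed $\|\sin\Theta\|_F$. Following Yu, Wang and Samworth, I would compare with $\min(\sqrt K,\|\sin\Theta\|_F)$. For the $\sqrt K\eta_t$ branch, $2\sqrt K\|E_t\|_{\op}/\Delta_{K,t}\ge\sqrt K\ge\|\sin\Theta\|_F$. For the $\|E_t\|_F$ branch, I would use the bound $\|\sin\Theta\|_F^2\le$ (number of angles)$\le\mathrm{rank}$-type considerations. This is the delicate point. If it resists, I would fall back on the YWS trick of working with $\widehat U^\top\Sigma_t\widehat U$ and the identity $\mathrm{tr}(\Lambda_1-\widehat U^\top\Sigma\widehat U)$ to bound $\|\sin\Theta\|_F^2\,\Delta_{K,t}\le 2\|E_t\|_F\|\sin\Theta\|_F$ directly, avoiding $\delta$ altogether.

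The final inequality in \eqref{eq:dk-sin} is immediate from $\min(a,b)\le a$ and $\|E_t\|_{\op}\le\eta_t$.
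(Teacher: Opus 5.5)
The paper gives no proof of its own here. It simply invokes Theorem~2 of Yu, Wang and Samworth with $r=1$, $s=K$.

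\textbf{Where your argument has a gap.} Your Sylvester step is correct. But it measures separation by the mixed gap $\delta=\widehat\lambda_K-\lambda_{K+1}$, so it proves the bound only when $\|E_t\|_{\op}\le\Delta_{K,t}/2$, and the complementary regime is never closed. There the $\sqrt K\eta_t$ branch is indeed trivial, but the $\|E_t\|_F$ branch is not. Take $K\ge2$ and a rank-one $E_t$ with $\|E_t\|_{\op}$ just above $\Delta_{K,t}/2$. The target $2\|E_t\|_F/\Delta_{K,t}$ is then barely above $1$, while the trivial bound allows $\|\sin\Theta\|_F$ up to $\sqrt K$. Meanwhile $\delta$ may be nonpositive, so the Sylvester bound gives nothing.

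Neither of your proposed closings is yet an argument. ``Rank-type considerations'' control $\|U_\perp^\top E_t\widehat U\|_F$, not $\|\sin\Theta\|_F$. The trace fallback is the right idea but omits the two inequalities that make it work:
\begin{itemize}
\item Ky Fan maximality, $\langle C_t,\widehat P-P\rangle\ge0$. This gives $\langle\Sigma_t,P-\widehat P\rangle\le\langle E_t,\widehat P-P\rangle\le\sqrt2\,\|E_t\|_F\|\sin\Theta\|_F$.
\item The curvature bound $\langle\Sigma_t,P-\widehat P\rangle\ge\Delta_{K,t}\|\sin\Theta\|_F^2$. To get it, write $\widehat U=UA+U_\perp B$ and use $\Lambda_1\succeq\lambda_K I$, $\Lambda_2\preceq\lambda_{K+1}I$, and $\|B\|_F=\|\sin\Theta\|_F$.
\end{itemize}
With both in place, that route proves the whole lemma with no case split. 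The operator branch then follows from $\|\widehat P-P\|_*\le2\sqrt K\|\sin\Theta\|_F$.

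\textbf{The smallest repair.} Keep your own framework, but put population eigenvalues on the left. Moving the eigenvalue shift to the right-hand side of your identity gives
\[
U_\perp^\top\widehat U\,\Lambda_1-\Lambda_2\,U_\perp^\top\widehat U
=U_\perp^\top E_t\widehat U-U_\perp^\top\widehat U\,(\widehat\Lambda_1-\Lambda_1).
\]
The $(j,k)$ entry of the left side is $(U_\perp^\top\widehat U)_{jk}(\lambda_k-\lambda_{K+j})$, and $\lambda_k-\lambda_{K+j}\ge\Delta_{K,t}$. Using $\|U_\perp^\top\widehat U\|_{\op}\le1$, this yields
\[
\Delta_{K,t}\|\sin\Theta\|_F\le\|U_\perp^\top E_t\widehat U\|_F+\|\widehat\Lambda_1-\Lambda_1\|_F .
\]
Each term on the right is at most $\min(\sqrt K\|E_t\|_{\op},\|E_t\|_F)$. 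For the first term this is your rank argument. For the second, use Weyl for the operator branch and the Hoffman--Wielandt inequality $\sum_i(\widehat\lambda_i-\lambda_i)^2\le\|E_t\|_F^2$ for the Frobenius branch. This holds for every size of $E_t$ and needs no gap in $C_t$. It also shows where the factor $2$ comes from: one copy of the error from the perturbation, one from the eigenvalue shift.
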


\noindent Inequality~\eqref{eq:dk-sin} is Theorem~2 of \citet{YuWangSamworth2015}, the
population-gap Davis--Kahan variant, specialized to the top-$K$ block ($r=1$, $s=K$).

The right-hand side of \eqref{eq:dk-sin} diverges as $\Delta_{K,t}\to 0$, whereas the
projector distance never exceeds $\sqrt{2K}$; the latter furnishes an unconditional cap.

\begin{lemma}[Maximum projector distance]
\label{lem:trivial-cap}
For rank-$K$ orthogonal projectors $P,\widehat P$,
$\|P-\widehat P\|_F^2=2K-2\langle P,\widehat P\rangle\le 2K$, since
$\langle P,\widehat P\rangle=\mathrm{tr}(P\widehat P)\ge 0$.
Hence $\|P-\widehat P\|_F\le\sqrt{2K}$ always, the largest Frobenius distance two
rank-$K$ projectors can attain, reached exactly when their ranges are orthogonal
($\langle P,\widehat P\rangle=0$). More generally $\|P_a-P_b\|_F\le\sqrt{a+b}$
for ranks $a,b$.
\end{lemma}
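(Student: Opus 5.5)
The plan is to expand the squared Frobenius norm through the trace inner product and then use only two facts about orthogonal projectors: idempotence with symmetry, and positive semidefiniteness. First, for symmetric $P,\widehat P$,
\[
\|P-\widehat P\|_F^2=\mathrm{tr}\bigl((P-\widehat P)^2\bigr)=\mathrm{tr}(P^2)+\mathrm{tr}(\widehat P^2)-2\,\mathrm{tr}(P\widehat P).
\]
Since $P^2=P$ and $\mathrm{tr}(P)=\mathrm{rank}(P)=K$, and likewise for $\widehat P$, this equals $2K-2\langle P,\widehat P\rangle$ with $\langle A,B\rangle=\mathrm{tr}(A^\top B)$. For projectors of ranks $a$ and $b$ the same computation gives $a+b-2\,\mathrm{tr}(P_aP_b)$.

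The key step is nonnegativity of the cross term. Writing $\widehat P=\widehat U\widehat U^\top$,
\[
\mathrm{tr}(P\widehat P)=\mathrm{tr}(\widehat U^\top P\widehat U)=\mathrm{tr}(\widehat U^\top P^\top P\widehat U)=\|P\widehat U\|_F^2\ge 0,
\]
where the middle equality uses $P=P^\top P$. Equivalently, $\mathrm{tr}(AB)\ge0$ for positive semidefinite $A,B$. This yields $\|P-\widehat P\|_F^2\le 2K$, and in general $\|P_a-P_b\|_F^2\le a+b$; taking square roots gives both stated bounds.

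For the equality claim, $\|P-\widehat P\|_F=\sqrt{2K}$ holds exactly when $\|P\widehat U\|_F=0$, that is, $P\widehat U=0$. This means every column of $\widehat U$ lies in $\ker P=\mathrm{range}(P)^\perp$, so the ranges are orthogonal. Conversely, orthogonal ranges give $P\widehat P=0$ and hence a zero inner product.

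To show the bound is actually attained, we need orthogonal rank-$K$ subspaces to exist. This requires $2K\le N$; otherwise the bound is strict. I will note this as the implicit dimension condition behind ``reached exactly when''. There is no real obstacle in the argument; the only point needing care is the attainability caveat, which I will state explicitly rather than prove anything new.
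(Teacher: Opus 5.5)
Your proof is correct and follows the paper's argument: the paper expands $\|P-\widehat P\|_F^2=\mathrm{tr}(P)+\mathrm{tr}(\widehat P)-2\,\mathrm{tr}(P\widehat P)$ and notes the cross term is nonnegative (in its appendix computation for Lemma~\ref{lem:proj-sin} it appears as $\|\widehat U^\top U\|_F^2$, which equals your $\|P\widehat U\|_F^2$), and your equality characterization via $P\widehat U=0$ matches. Your caveat that the value $\sqrt{2K}$ is attained only when $2K\le N$ is a correct refinement that the paper leaves implicit.
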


Combining the angle identity, the Davis--Kahan bound, and the cap yields the
per-date deviation $\|\widehat P_{K,t}-P_{K,t}\|_F$.

\begin{lemma}[Per-date projector deviation]
\label{lem:dk-proj}
Under the assumptions of Lemma~\ref{lem:dk-sin}, define the per-date band term
\begin{equation}
  T_{K,t}
  \;=\;
  \min\!\Bigl(
  \frac{2\sqrt{2}\,\min\bigl(\sqrt{K}\,\eta_t,\|E_t\|_F\bigr)}{\Delta_{K,t}},
  \;\sqrt{2K}
  \Bigr).
  \label{eq:dk-term}
\end{equation}
Then
\[
  \|\widehat P_{K,t}-P_{K,t}\|_F
  \le T_{K,t}
  \le \frac{2\sqrt{2K}\,\eta_t}{\Delta_{K,t}}.
\]
\end{lemma}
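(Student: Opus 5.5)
The plan is to chain the three preceding lemmas, handling the two arguments of the outer minimum in \eqref{eq:dk-term} separately. We work under the hypotheses of Lemma~\ref{lem:dk-sin}: $\Delta_{K,t}>0$ and $\|E_t\|_{\op}\le\eta_t$. Both $\widehat P_{K,t}$ and $P_{K,t}$ are rank-$K$ orthogonal projectors, with $\widehat U_{K,t}$ and $U_{K,t}$ their orthonormal bases.

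\emph{First argument of the minimum.} By Lemma~\ref{lem:proj-sin},
\[
\|\widehat P_{K,t}-P_{K,t}\|_F=\sqrt{2}\,\|\sin\Theta(\widehat U_{K,t},U_{K,t})\|_F .
\]
The $\sin\Theta$ distance is symmetric in its two arguments, since the principal angles do not depend on the order. This matches the orientation of \eqref{eq:dk-sin}. Substituting the first inequality of Lemma~\ref{lem:dk-sin} gives
\[
\|\widehat P_{K,t}-P_{K,t}\|_F\le \frac{2\sqrt2\,\min(\sqrt K\eta_t,\|E_t\|_F)}{\Delta_{K,t}} .
\]

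\emph{Second argument of the minimum.} Lemma~\ref{lem:trivial-cap} gives $\|\widehat P_{K,t}-P_{K,t}\|_F\le\sqrt{2K}$ unconditionally. The deviation is bounded by each argument of the minimum, hence by the minimum itself. This proves $\|\widehat P_{K,t}-P_{K,t}\|_F\le T_{K,t}$.

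\emph{The final inequality.} A minimum is bounded by any of its arguments, so $T_{K,t}$ is at most the first one. Using $\min(\sqrt K\eta_t,\|E_t\|_F)\le\sqrt K\eta_t$, that argument is at most $2\sqrt2\sqrt K\,\eta_t/\Delta_{K,t}=2\sqrt{2K}\,\eta_t/\Delta_{K,t}$.

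Each step is a one-line substitution. The only point needing care is whether Lemma~\ref{lem:dk-sin} applies as stated. The cited population-gap form of \citet{YuWangSamworth2015} requires the top-$K$ block to be separated from the rest by the population gap. With $r=1$, $s=K$ and the convention $\lambda_0=+\infty$, that gap is exactly $\lambda_K(\Sigma_t)-\lambda_{K+1}(\Sigma_t)=\Delta_{K,t}$. The lemma is also invoked with no requirement that $\eta_t$ be small relative to the gap, so the bound holds even when it exceeds the cap; the minimum in $T_{K,t}$ absorbs that case. Everything else follows directly from the lemmas already stated.
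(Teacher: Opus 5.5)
Your proposal is correct and follows the paper's proof essentially line for line. You bound the first argument of the minimum by combining Lemma~\ref{lem:proj-sin} with the first inequality of Lemma~\ref{lem:dk-sin}, take the cap $\sqrt{2K}$ from Lemma~\ref{lem:trivial-cap} for the second argument, and get the final inequality from $\min(\sqrt K\eta_t,\|E_t\|_F)\le\sqrt K\eta_t$ and $\min(x,\sqrt{2K})\le x$. Your added check that the Yu--Wang--Samworth gap with $r=1$, $s=K$ reduces to $\Delta_{K,t}$ is accurate but not needed beyond what the paper already states.
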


\begin{proof}
Lemma~\ref{lem:proj-sin} gives
$\|\widehat P_{K,t}-P_{K,t}\|_F=\sqrt{2}\,\|\sin\Theta(\widehat U_{K,t},U_{K,t})\|_F$,
and Lemma~\ref{lem:dk-sin} bounds
$\|\sin\Theta\|_F\le 2\min(\sqrt{K}\,\eta_t,\|E_t\|_F)/\Delta_{K,t}$;
multiplying the two yields the first argument of the minimum in \eqref{eq:dk-term}.
Lemma~\ref{lem:trivial-cap} supplies the second argument $\sqrt{2K}$, which holds with
no gap assumption at all. The final inequality follows from
$\min(\sqrt K\eta_t,\|E_t\|_F)\le\sqrt K\eta_t$ and $\min(x,\sqrt{2K})\le x$.
\end{proof}

Applying the per-date deviation at $t-1$ and $t$ and combining the two by the
reverse triangle inequality yields the movement bound.

\begin{theorem}[Gap-adjusted monitoring band]
\label{thm:dynamic-projector}
Fix the monitoring dimension $K$, and let $D_{K,t}$ and $\widehat D_{K,t}$ be the
latent and observed projector movements of \eqref{eq:D-def-intro}.
Suppose the eigengaps are positive, $\Delta_{K,t}>0$ and $\Delta_{K,t-1}>0$, and
the estimation errors are controlled, $\|C_s-\Sigma_s\|_{\op}\le\eta_s$ for
$s\in\{t-1,t\}$.
Then, with the per-date term $T_{K,s}$ of \eqref{eq:dk-term},
\begin{equation}
  \bigl|\widehat D_{K,t}-D_{K,t}\bigr|
  \le \tau^*_{K,t}
  :=\min\Bigl(\sqrt{2K},\;T_{K,t}+T_{K,t-1}\Bigr)
  \le 2\sqrt{2K}\Bigl(\frac{\eta_t}{\Delta_{K,t}}+\frac{\eta_{t-1}}{\Delta_{K,t-1}}\Bigr).
  \label{eq:dynamic-D}
\end{equation}
The outer cap $\sqrt{2K}$ requires no gap condition (Lemma~\ref{lem:trivial-cap});
the uncapped right-hand bound uses $c_{K}=2\sqrt{2K}$, i.e.\ $c_1=2\sqrt2$, $c_2=4$.
\end{theorem}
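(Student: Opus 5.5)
The plan is to reduce the movement discrepancy to the two per-date projector deviations and then invoke Lemma~\ref{lem:dk-proj} at each date. The observed and latent movements are Frobenius norms of differences of projectors, $\widehat D_{K,t}=\|\widehat P_{K,t}-\widehat P_{K,t-1}\|_F$ and $D_{K,t}=\|P_{K,t}-P_{K,t-1}\|_F$. By the reverse triangle inequality for the Frobenius norm,
\[
\bigl|\widehat D_{K,t}-D_{K,t}\bigr|
\le \bigl\|(\widehat P_{K,t}-\widehat P_{K,t-1})-(P_{K,t}-P_{K,t-1})\bigr\|_F .
\]
Regrouping the four terms as $(\widehat P_{K,t}-P_{K,t})-(\widehat P_{K,t-1}-P_{K,t-1})$ and applying the ordinary triangle inequality, this is at most $\|\widehat P_{K,t}-P_{K,t}\|_F+\|\widehat P_{K,t-1}-P_{K,t-1}\|_F$.

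Next I apply Lemma~\ref{lem:dk-proj} separately at $s=t$ and $s=t-1$. The hypotheses $\Delta_{K,s}>0$ and $\|C_s-\Sigma_s\|_{\op}\le\eta_s$ are exactly those of Lemma~\ref{lem:dk-sin} at each date. This bounds each deviation by $T_{K,s}$, so the discrepancy is at most $T_{K,t}+T_{K,t-1}$.

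The outer cap $\sqrt{2K}$ needs a separate argument, since $T_{K,t}+T_{K,t-1}$ could be as large as $2\sqrt{2K}$. Both $\widehat D_{K,t}$ and $D_{K,t}$ are Frobenius distances between rank-$K$ projectors, so by Lemma~\ref{lem:trivial-cap} each lies in $[0,\sqrt{2K}]$. Hence their absolute difference is also at most $\sqrt{2K}$, and this uses no gap condition. Taking the minimum of the two valid upper bounds gives $\tau^*_{K,t}$.

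The final uncapped inequality follows from $\min(\sqrt{2K},a)\le a$ together with the last inequality of Lemma~\ref{lem:dk-proj}, $T_{K,s}\le 2\sqrt{2K}\,\eta_s/\Delta_{K,s}$. Summing over $s\in\{t-1,t\}$ gives the stated form with $c_K=2\sqrt{2K}$, whose values are $c_1=2\sqrt2$ and $c_2=4$.

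There is no substantive obstacle here, since every analytic ingredient is already in the earlier lemmas. The only point needing care is the outer cap: it must come from the range $[0,\sqrt{2K}]$ of both movements, not from summing the two per-date caps. I would also note that the rank $K$ is held fixed at both dates, so Lemma~\ref{lem:trivial-cap} applies with equal ranks.
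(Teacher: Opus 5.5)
Your proof is correct and follows the paper's argument essentially step for step. It uses the reverse triangle inequality, regroups into the two per-date deviations bounded by Lemma~\ref{lem:dk-proj}, takes the separate $\sqrt{2K}$ cap from Lemma~\ref{lem:trivial-cap} applied to both movements (not to the per-date terms, as you rightly stress), and obtains the final inequality from $T_{K,s}\le 2\sqrt{2K}\,\eta_s/\Delta_{K,s}$.
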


\noindent The gap-adjusted budget $\tau^*_{K,t}$ is the most that estimation noise of
size $\eta_s$ relative to the eigengap $\Delta_{K,s}$ can shift the observed movement
away from the latent movement.
The bound is worst-case and, capped at $\sqrt{2K}$, certifies only movement too large
to be noise; it is an identification gauge rather than a calibrated test, and is
typically loose in the high-dimensional regimes of \S\ref{sec:theory-prop}, which is
why \S\ref{sec:inference} calibrates it against an explicit sampling model.

\begin{proof}
Write $X=\widehat P_{K,t}-\widehat P_{K,t-1}$ and $Y=P_{K,t}-P_{K,t-1}$.
The reverse triangle inequality gives
$|\widehat D_{K,t}-D_{K,t}|=|\|X\|_F-\|Y\|_F|\le \|X-Y\|_F$, and
\[
  X-Y
  =(\widehat P_{K,t}-P_{K,t})-(\widehat P_{K,t-1}-P_{K,t-1}),
\]
so the triangle inequality yields
\[
  \|X-Y\|_F
  \le \|\widehat P_{K,t}-P_{K,t}\|_F
  +\|\widehat P_{K,t-1}-P_{K,t-1}\|_F.
\]
Applying Lemma~\ref{lem:dk-proj} at $t$ and $t-1$ and adding the two bounds gives
$\|X-Y\|_F\le T_{K,t}+T_{K,t-1}$.
Independently, $D_{K,t},\widehat D_{K,t}\in[0,\sqrt{2K}]$ by
Lemma~\ref{lem:trivial-cap}, so $|\widehat D_{K,t}-D_{K,t}|\le\sqrt{2K}$ with no gap
condition. Taking the minimum of the two bounds produces $\tau^*_{K,t}$ in
\eqref{eq:dynamic-D}; the final inequality follows from
$T_{K,s}\le 2\sqrt{2K}\,\eta_s/\Delta_{K,s}$.
\end{proof}

The monitoring dimension need not be held fixed across dates. If the selected cut
differs ($K_t\ne K_{t-1}$), the same argument applies with each date's own cut,
giving
$|\widehat D_{t}-D_{t}|\le\min\bigl(\sqrt{K_t+K_{t-1}},\,T_{K_t,t}+T_{K_{t-1},t-1}\bigr)$,
where $D_t$ and $\widehat D_t$ now compare a rank-$K_t$ projector with a
rank-$K_{t-1}$ one and Lemma~\ref{lem:trivial-cap} supplies the cap
$\sqrt{K_t+K_{t-1}}$. A rank change is itself a form of movement, since
$\|P_a-P_b\|_F^2\ge|a-b|$ forces a distance of at least $\sqrt{|K_t-K_{t-1}|}$, so
we record rank-change dates separately rather than as rotation of a
fixed subspace.

\begin{corollary}[Diagnostic flag]
\label{cor:alarm}
Under the conditions of Theorem~\ref{thm:dynamic-projector}, if
$\widehat D_{K,t}>\tau^*_{K,t}$ then $D_{K,t}\ne 0$, and the population eigenspace
must have moved for the observed movement to exceed the band.
\end{corollary}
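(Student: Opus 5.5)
The corollary follows from Theorem~\ref{thm:dynamic-projector} by contraposition, so the plan is to assume $D_{K,t}=0$ and show that then $\widehat D_{K,t}\le\tau^*_{K,t}$. The hypotheses of the corollary are exactly those of the theorem: positive eigengaps $\Delta_{K,t},\Delta_{K,t-1}>0$ and the operator-norm controls $\|C_s-\Sigma_s\|_{\op}\le\eta_s$ for $s\in\{t-1,t\}$. So the band inequality \eqref{eq:dynamic-D} is available without further work.

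The steps are as follows. First, invoke \eqref{eq:dynamic-D} to get $|\widehat D_{K,t}-D_{K,t}|\le\tau^*_{K,t}$. Second, suppose for contradiction that $D_{K,t}=0$. The left side then reduces to $|\widehat D_{K,t}|=\widehat D_{K,t}$, since a Frobenius norm is nonnegative. This gives $\widehat D_{K,t}\le\tau^*_{K,t}$, contradicting the assumed exceedance $\widehat D_{K,t}>\tau^*_{K,t}$. Hence $D_{K,t}\ne 0$.

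Third, translate $D_{K,t}\ne0$ into the statement about the population eigenspace. By \eqref{eq:D-def-intro}, $D_{K,t}=\|P_{K,t}-P_{K,t-1}\|_F\ne0$ means $P_{K,t}\ne P_{K,t-1}$. Orthogonal projectors are in one-to-one correspondence with their ranges, so the span of the top-$K$ eigenvectors of $\Sigma_t$ differs from that of $\Sigma_{t-1}$. The positive gaps make these top-$K$ eigenspaces, and hence $P_{K,s}$, well defined, with no ambiguity from ties at the cut.

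There is no genuine obstacle; the only point needing care is the well-definedness just noted. Without $\Delta_{K,s}>0$, $P_{K,s}$ would depend on the choice of eigenbasis, and ``the population eigenspace moved'' would not be meaningful. I would also remark that when $\tau^*_{K,t}=\sqrt{2K}$ (the cap is active), the flag can never fire, because $\widehat D_{K,t}\le\sqrt{2K}$ by Lemma~\ref{lem:trivial-cap}. The corollary then holds vacuously, which matches the paper's description of the band as certifying only movement too large to be noise.
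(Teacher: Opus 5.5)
Your proof is correct and is essentially the paper's argument. Assume $D_{K,t}=0$, read off $\widehat D_{K,t}\le\tau^*_{K,t}$ from Theorem~\ref{thm:dynamic-projector}, and take the contrapositive. Your side remark is also right: when the cap is active ($\tau^*_{K,t}=\sqrt{2K}$), the flag can never fire by Lemma~\ref{lem:trivial-cap}, so the corollary then holds vacuously.
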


\begin{proof}
If $D_{K,t}=0$ then $P_{K,t}=P_{K,t-1}$, and the quantity $X-Y$ in the proof of
Theorem~\ref{thm:dynamic-projector} reduces to
$\widehat P_{K,t}-\widehat P_{K,t-1}$, so the theorem gives
$\widehat D_{K,t}\le\tau^*_{K,t}$. The contrapositive is the claim.
The implication is diagnostic. It identifies movement too
large to be perturbation.
\end{proof}

\noindent The flag inherits its guarantee from the worst-case construction. The band
$\tau^*_{K,t}$ is the population-gap Davis--Kahan bound of \S\ref{sec:theory-monitor}
(Lemma~\ref{lem:dk-sin}) carried across two consecutive dates and capped at the maximum
projector distance $\sqrt{2K}$.

Corollary~\ref{cor:alarm} uses the population error
$\eta_t=\|C_t-\Sigma_t\|_{\op}$, which is available only in simulation, where
$\Sigma_t$ is known by construction. On real data $\Sigma_t$ is unobserved and the
error must instead be measured against the window sample covariance,
$\|C_t-S_t\|_{\op}$, so the conclusion weakens accordingly. An exceedance
$\widehat D_{K,t}>\tau^*_{K,t}$ then shows only that the cleaned and raw
projectors disagree by more than cleaning noise allows, which is a diagnostic
about the estimator rather than proof that the population eigenspace moved.
Establishing $D_{K,t}\ne 0$ in that case requires the external calibration of
\S\ref{sec:inference}, and the equity panel of Appendix~\ref{app:panel} runs the
monitor in exactly this sample-relative mode.

The per-date term $T_{K,s}$ of \eqref{eq:dk-term} still contains the population
gap $\Delta_{K,s}$, which is not observable on real data. By Weyl's inequality
(Lemma~\ref{lem:weyl}) each eigenvalue moves by at most $\eta_s$, so the estimated
gap $\widehat\Delta_{K,s}=\lambda_K(C_s)-\lambda_{K+1}(C_s)$ satisfies
$|\widehat\Delta_{K,s}-\Delta_{K,s}|\le 2\eta_s$, hence
$\Delta_{K,s}\ge\widehat\Delta_{K,s}-2\eta_s$. Substituting this lower bound for
$\Delta_{K,s}$ preserves the guarantee.

\begin{corollary}[Observable band]
\label{cor:empirical-gap}
Define $\widehat T_{K,s}$ by substituting $(\widehat\Delta_{K,s}-2\eta_s)_+$ for
$\Delta_{K,s}$ in \eqref{eq:dk-term}, with $\widehat T_{K,s}=\sqrt{2K}$ when
$\widehat\Delta_{K,s}\le 2\eta_s$, and set
$\widehat\tau^*_{K,t}=\min\bigl(\sqrt{2K},\;\widehat T_{K,t}+\widehat T_{K,t-1}\bigr)$.
Then $\widehat\tau^*_{K,t}\ge\tau^*_{K,t}$, the band $\widehat\tau^*_{K,t}$ is
computable from $(C_s,\eta_s)$ alone, and Theorem~\ref{thm:dynamic-projector} and
Corollary~\ref{cor:alarm} hold with $\widehat\tau^*_{K,t}$ in place of
$\tau^*_{K,t}$.
\end{corollary}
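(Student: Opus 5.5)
The argument is a monotonicity check on the per-date term, followed by a direct appeal to Theorem~\ref{thm:dynamic-projector} and Corollary~\ref{cor:alarm}.

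First I establish $\Delta_{K,s}\ge(\widehat\Delta_{K,s}-2\eta_s)_+$ for $s\in\{t-1,t\}$. Weyl's inequality (Lemma~\ref{lem:weyl}), applied to $C_s=\Sigma_s+E_s$ with $\|E_s\|_{\op}\le\eta_s$, gives $|\lambda_j(C_s)-\lambda_j(\Sigma_s)|\le\eta_s$ for $j=K,K+1$. Hence $\widehat\Delta_{K,s}\le\Delta_{K,s}+2\eta_s$. Since $\Delta_{K,s}>0$ by hypothesis, the positive part is also dominated, because $\Delta_{K,s}\ge 0$.

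Next I compare $\widehat T_{K,s}$ with $T_{K,s}$ by splitting into two cases.

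If $\widehat\Delta_{K,s}\le 2\eta_s$, then $\widehat T_{K,s}=\sqrt{2K}$ by definition. Since $T_{K,s}\le\sqrt{2K}$ by \eqref{eq:dk-term}, we get $\widehat T_{K,s}\ge T_{K,s}$.

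Otherwise $0<(\widehat\Delta_{K,s}-2\eta_s)_+\le\Delta_{K,s}$. The map $g\mapsto\min\bigl(2\sqrt2\min(\sqrt K\eta_s,\|E_s\|_F)/g,\sqrt{2K}\bigr)$ is nonincreasing in $g>0$, so replacing $\Delta_{K,s}$ by the smaller positive value can only increase the term, giving $\widehat T_{K,s}\ge T_{K,s}$.

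One point needs care here. The numerator of \eqref{eq:dk-term} contains $\|E_s\|_F$, which is not observable. For the computability claim I would read $\widehat T_{K,s}$ with $\sqrt K\,\eta_s$ in the numerator, matching Algorithm~\ref{alg:spectral-monitor}. This only enlarges the term, because $\min(\sqrt K\eta_s,\|E_s\|_F)\le\sqrt K\eta_s$, so the inequality $\widehat T_{K,s}\ge T_{K,s}$ survives. I regard this bookkeeping as the only delicate point.

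With both per-date inequalities in hand, $\widehat T_{K,t}+\widehat T_{K,t-1}\ge T_{K,t}+T_{K,t-1}$. Since $x\mapsto\min(\sqrt{2K},x)$ is monotone, it follows that $\widehat\tau^*_{K,t}\ge\tau^*_{K,t}$.

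Computability is then immediate. The quantity $\widehat\Delta_{K,s}$ is a function of the eigenvalues of $C_s$, and the remaining inputs are $\eta_s$, $K$ and constants.

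Finally, the theorem's bound transfers. Theorem~\ref{thm:dynamic-projector} gives $|\widehat D_{K,t}-D_{K,t}|\le\tau^*_{K,t}\le\widehat\tau^*_{K,t}$. For the diagnostic flag, if $\widehat D_{K,t}>\widehat\tau^*_{K,t}$ then $\widehat D_{K,t}>\tau^*_{K,t}$, and Corollary~\ref{cor:alarm} yields $D_{K,t}\ne0$.
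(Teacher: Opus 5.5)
Your proof is correct and follows the paper's argument. Weyl's inequality gives $\Delta_{K,s}\ge(\widehat\Delta_{K,s}-2\eta_s)_+$, and because the per-date term is nonincreasing in the gap, substituting this lower bound enlarges the band, so Theorem~\ref{thm:dynamic-projector} and Corollary~\ref{cor:alarm} transfer to $\widehat\tau^*_{K,t}$. You also note that the $\|E_s\|_F$ entry of \eqref{eq:dk-term} is unobservable and replace it by $\sqrt K\,\eta_s$, as in Algorithm~\ref{alg:spectral-monitor}; this keeps $\widehat\tau^*_{K,t}\ge\tau^*_{K,t}$ and makes the computability claim true, which is a valid repair of a point the paper passes over.
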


Corollary~\ref{cor:empirical-gap} is the form of the band deployed on real data,
where $\Sigma_t$, and with it $\Delta_{K,s}$ and $\tau^*_{K,t}$, is unavailable
(Algorithm~\ref{alg:spectral-monitor}, step~5, and Appendix~\ref{app:panel});
the simulations of \S\ref{sec:mc}, where $\Sigma$ is known, use the population
band directly. Enlarging the band preserves the one-sided guarantee of
Corollary~\ref{cor:alarm} at the price of additional conservatism, the
worst-case cost of not observing the gap.

\begin{remark}[$\alpha$-scaled conservative band]
\label{rem:alpha-scaled-tau}
For $\alpha\in(0,1]$, define the \emph{scaled} diagnostic
$\widehat D_{K,t}>\alpha\,\tau^*_{K,t}$.
The scaled band is a conservative rescaling of the Davis--Kahan \emph{upper} bound in
Corollary~\ref{cor:alarm}, recovered at $\alpha{=}1$.
The simulations vary $\alpha$ to quantify the conservatism of the band
(Table~\ref{tab:mc-alpha}); because the flag rate decreases in $\alpha$,
calibration selects the smallest $\alpha^\star$ on the grid at which the
simulated flag rate under the null stays at or below $5\%$, and the same scale
is used on real data.
At $\alpha{=}1$ the band is the worst-case bound itself; the analytic test of
\S\ref{sec:inference} provides an alternative calibration when the band at
$\alpha{=}1$ never flags.
\end{remark}

\begin{lemma}[Weyl's inequality]
\label{lem:weyl}
Let $C,\Sigma\in\R^{N\times N}$ be symmetric with eigenvalues
$\lambda_1(C)\ge\cdots\ge\lambda_N(C)$ and $\lambda_i(\Sigma)$ ordered likewise,
and let $\eta=\|C-\Sigma\|_{\op}$. Then for every $i\in\{1,\ldots,N\}$,
\[
  |\lambda_i(C)-\lambda_i(\Sigma)|\le \eta
\]
\citep[Cor.~VI.1.6]{StewartSun1990}.
\end{lemma}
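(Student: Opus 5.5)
The plan is to prove Weyl's inequality through the Courant--Fischer min--max characterization of the ordered eigenvalues of a symmetric matrix. For symmetric $A\in\R^{N\times N}$,
\[
  \lambda_i(A)=\max_{\dim V=i}\ \min_{x\in V,\ \|x\|_2=1} x^\top A x ,
\]
with eigenvalues taken in decreasing order as in the statement of Lemma~\ref{lem:weyl}. I will use this as the single input and write $E=C-\Sigma$, so that $\eta=\|E\|_{\op}$.

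The first step is a pointwise bound on the Rayleigh quotient of $E$. Since $E$ is symmetric, $\|E\|_{\op}=\max_{\|x\|_2=1}|x^\top E x|$. Hence every unit vector $x$ satisfies
\[
  x^\top\Sigma x-\eta\le x^\top C x\le x^\top\Sigma x+\eta .
\]

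The second step passes these bounds through the min--max. Fix $i$ and any subspace $V$ of dimension $i$. Taking the minimum over unit $x\in V$ preserves both inequalities, because adding the constant $\pm\eta$ commutes with the minimum. Taking the maximum over all $V$ with $\dim V=i$ then preserves them again. This gives $\lambda_i(\Sigma)-\eta\le\lambda_i(C)\le\lambda_i(\Sigma)+\eta$, which is the claim $|\lambda_i(C)-\lambda_i(\Sigma)|\le\eta$ for every $i\in\{1,\ldots,N\}$.

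There is no substantive obstacle here. The only point needing care is that the monotonicity of the max--min functional under the pointwise order of quadratic forms holds for each $i$ separately. This is immediate because both the inner minimum and the outer maximum are monotone operations. As an alternative, one could cite \citet[Cor.~VI.1.6]{StewartSun1990} directly, as the statement does.
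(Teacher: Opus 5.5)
Your proof is correct and is essentially the paper's argument: both bound the quadratic form pointwise, $x^\top\Sigma x-\eta\le x^\top Cx\le x^\top\Sigma x+\eta$, and push this through the Courant--Fischer characterization. The differences are cosmetic. You use the max--min form over $i$-dimensional subspaces and get both inequalities at once, whereas the paper uses the min--max form over $(N-i+1)$-dimensional subspaces and obtains the reverse inequality by exchanging $C$ and $\Sigma$.
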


\setcounter{proposition}{0}
\begin{proposition}[Stable monitoring dimension]
\label{prop:stable-K}
Fix $K_{\max}\ge 1$. Assume the population argmax $K_t^\star=\arg\max_{1\le k\le K_{\max}}\Delta_{k,t}$ is unique; if not,
fix a deterministic tie-breaking rule for both $K_t^\star$ and $\widehat K_t$.
Let $\widehat K_t=\arg\max_{1\le k\le K_{\max}}\widehat\Delta_{k,t}$.
If $\|C_t-\Sigma_t\|_{\op}\le\varepsilon_t$ and the margin
\begin{equation}
  \Delta_{K_t^\star,t}-\max_{1\le k\le K_{\max},\,k\ne K_t^\star}\Delta_{k,t}>4\varepsilon_t,
  \label{eq:k-margin}
\end{equation}
holds, then $\widehat K_t=K_t^\star$.
We interpret $\widehat K_t$ as the \emph{most stable monitoring dimension}
(dominant eigengap cut) rather than as an estimate of the true factor rank.
\end{proposition}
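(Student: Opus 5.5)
The plan is to reduce the claim to a uniform perturbation statement about every candidate gap, and then compare the winner with each competitor. Weyl's inequality (Lemma~\ref{lem:weyl}) with $\eta=\|C_t-\Sigma_t\|_{\op}\le\varepsilon_t$ gives $|\lambda_i(C_t)-\lambda_i(\Sigma_t)|\le\varepsilon_t$ for every $i$. Writing $\widehat\Delta_{k,t}-\Delta_{k,t}=[\lambda_k(C_t)-\lambda_k(\Sigma_t)]-[\lambda_{k+1}(C_t)-\lambda_{k+1}(\Sigma_t)]$ and applying the triangle inequality, we get
\[
  |\widehat\Delta_{k,t}-\Delta_{k,t}|\le 2\varepsilon_t
  \qquad\text{for all }1\le k\le K_{\max}.
\]
This is the same estimate used before Corollary~\ref{cor:empirical-gap}, now applied simultaneously to all $k\le K_{\max}$. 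It needs $K_{\max}+1\le N$ so that $\lambda_{k+1}$ is defined; we assume this implicitly.

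Next, fix any $k\ne K_t^\star$ with $k\le K_{\max}$. Using the lower bound at $K_t^\star$ and the upper bound at $k$,
\[
  \widehat\Delta_{K_t^\star,t}-\widehat\Delta_{k,t}
  \ge (\Delta_{K_t^\star,t}-2\varepsilon_t)-(\Delta_{k,t}+2\varepsilon_t)
  = \Delta_{K_t^\star,t}-\Delta_{k,t}-4\varepsilon_t .
\]
By the margin condition \eqref{eq:k-margin} the right-hand side is strictly positive, because $\Delta_{k,t}\le\max_{k'\ne K_t^\star}\Delta_{k',t}$. Hence $\widehat\Delta_{K_t^\star,t}>\widehat\Delta_{k,t}$ for every competitor. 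The empirical argmax is therefore unique and equals $K_t^\star$, so $\widehat K_t=K_t^\star$.

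The remaining point is ties, which I expect to be the only step needing care, though it is mild. The margin condition is strict, so it already forces the population argmax to be unique. The strict inequality just obtained then rules out any empirical tie at the maximum. The tie-breaking rule is therefore never invoked under \eqref{eq:k-margin}, and I will say this explicitly so the hypothesis about ties is seen to be harmless. The constant $4$ is exactly twice the per-gap Weyl slack $2\varepsilon_t$, which explains its form in the statement. The interpretive sentence about the ``most stable monitoring dimension'' needs no proof.
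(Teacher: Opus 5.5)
Your proposal is correct and follows the paper's own proof essentially step for step. Both arguments use Weyl's inequality to move each gap by at most $2\varepsilon_t$, then use the strict margin to get $\widehat\Delta_{K_t^\star,t}>\widehat\Delta_{k,t}$ for every competitor $k$. Your added remarks are accurate, minor refinements that the paper leaves implicit: the margin already forces a unique argmax, so the tie-breaking rule is never invoked, and $K_{\max}+1\le N$ is needed for $\lambda_{k+1}$ to be defined.
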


\begin{proof}
Fix $k\ne K_t^\star$. The margin \eqref{eq:k-margin} is strict, so
$\Delta_{K_t^\star,t}>\Delta_{k,t}+4\varepsilon_t$.
Weyl's inequality (Lemma~\ref{lem:weyl}) moves each eigenvalue by at most
$\varepsilon_t$ and so each eigengap by at most $2\varepsilon_t$, giving
$\widehat\Delta_{K_t^\star,t}\ge\Delta_{K_t^\star,t}-2\varepsilon_t
>\Delta_{k,t}+2\varepsilon_t$, while
$\widehat\Delta_{k,t}\le\Delta_{k,t}+2\varepsilon_t$.
Hence $\widehat\Delta_{K_t^\star,t}>\widehat\Delta_{k,t}$ for every $k\ne K_t^\star$,
and $\widehat K_t=K_t^\star$ under the tie-breaking convention.
\end{proof}

\begin{remark}[The monitoring constant and the cap]
\label{rem:cdk}
The constant $c_K=2\sqrt{2K}$ in the uncapped bound of
Theorem~\ref{thm:dynamic-projector} follows from
Lemmas~\ref{lem:proj-sin} and~\ref{lem:dk-sin} ($c_1=2\sqrt2\approx 2.83$,
$c_2=4$); the constant is $K$-dependent and is computed from the selected cut.
The cap matters in practice. The term $2\sqrt{2K}\,\eta_s/\Delta_{K,s}$ exceeds the maximum possible
projector distance $\sqrt{2K}$ as soon as $\eta_s/\Delta_{K,s}>1/2$, so in
weak-identification regimes the uncapped band is vacuous while $\tau^*_{K,t}$ saturates at
its ceiling and carries no information beyond the trivial bound.
\end{remark}

\subsection{Regimes for the operator-norm budget}
\label{sec:theory-prop}

The band $\tau^*_{K,t}$ of Theorem~\ref{thm:dynamic-projector} is informative only
when the budget $\eta_t$ is small relative to the eigengap. Two asymptotic regimes
determine when this holds. When $N$ and $M$ are comparable (the return-panel
regime), the budget stays $O(1)$ and the band only gauges whether the eigenspace is
identified; when $N$ is fixed and $M$ grows the budget shrinks at the root-$M$ rate and
the band tightens into a usable bound.

The first is the proportional regime, in which the dimension $N$ and the window
length $M$ are comparable. Here the budget does not vanish. If $N,M\to\infty$ with
$N/M\to c\in(0,\infty)$, then $\|S_t-\Sigma_t\|_{\op}=O_p(\sqrt{N/M})=O_p(1)$ does
not converge to zero, because the Marchenko--Pastur limit leaves the sample
spectrum persistently distorted
\citep{MarchenkoPastur1967,BaiYin1993,Vershynin2018}; in particular $\sqrt{N/M}$
is not $O_p(M^{-1/2})$ once $c>0$. Return panels live in this regime, with the
cross-section a sizeable fraction of the window length, as in the equity panel
of Appendix~\ref{app:panel} at $N/M\approx0.46$. Operator-norm consistency there
demands extra structure such as sparsity \citep{ElKaroui2008} or a spiked model
\citep{Johnstone2001,FanJiaoYao2018}, and shrinkage or cleaning, while it improves
matrix risk in Frobenius or oracle terms
\citep{LedoitWolf2020,LedoitWolf2022QIS,BunBouchaudPotters2017,DonohoGavish2014},
need not drive $\|C_t-\Sigma_t\|_{\op}$ to zero. In this regime the band serves
only as an identification gauge. The simulation studies accordingly evaluate
functional error against a known $\Sigma$, and on real data $\|C_t-S_t\|_{\op}$ is
a sample-relative diagnostic rather than a statement about population accuracy.

The budget does shrink in the second, fixed-dimension regime. If $N$ is fixed and
$M\to\infty$, then $\|S_t-\Sigma_t\|_{\op}=O_p(M^{-1/2})$, and under Ledoit--Wolf
conditions $\|C_t-\Sigma_t\|_{\op}=O_p(M^{-1/2}+\delta_t)$ with
$\delta_t=O_p(M^{-1})$, so the band tightens at the usual root-$M$ rate. The
calibrated tests of \S\ref{sec:inference} are derived in this regime.

\newpage
\section{Calibrated inference for eigenspace movement}
\label{sec:inference}

The band $\tau^*_{K,t}$ of \S\ref{sec:theory} assumes only a bound on $\eta_t$. In
the proportional regime of \S\ref{sec:theory-prop} its per-date terms saturate the
cap $\sqrt{2K}$, so it acts as an identification gauge rather than a detector
(Remark~\ref{rem:cdk}; Table~\ref{tab:tier-power}). Attaching a significance level
to an observed movement $\widehat D_{K,t}$ requires its distribution under no
population change, and hence a sampling model.

\subsection{Reduction to the sample-covariance null}
\label{sec:rot-equiv}

In principle a null distribution for $\widehat D_{K,t}$ must be derived for each
estimator $\mathcal{A}$ separately, since each reshapes the spectrum differently.
Most do not require this. A rotation-equivariant map
$C_t=U\,\mathrm{diag}(\delta(\lambda))\,U^\top$ acts only on the eigenvalues of
$S_t=U\,\mathrm{diag}(\lambda)\,U^\top$ and leaves the eigenbasis $U$ unchanged.
Provided $\delta$ preserves the order of the top $K$ eigenvalues, the top-$K$
projector of $C_t$ equals that of $S_t$, so $\widehat D_{K,t}$ is identical across
all such estimators, among them linear shrinkage, QIS, and eigenvalue clipping.

The estimators differ in whether they meet this condition. QIS and linear
shrinkage preserve the ordering to machine precision
($\|P_K(C_t)-P_K(S_t)\|_F<10^{-14}$ across simulated windows). Marchenko--Pastur
clipping does not, because averaging the bulk creates eigenvalue ties; the
eigenbasis within a tied block is then arbitrary and the projector can rotate by
$O(1)$. The sample-covariance null therefore applies without change to QIS and
linear-shrinkage monitors, while clipping-based monitors, lacking the identity,
are calibrated by simulation instead.

\subsection{A first-order null for overlapping windows}
\label{sec:tier-b}

Successive evaluation dates overlap, with the windows at $t-1$ and $t$ sharing
$M-s$ of their $M$ observations, where $s$ is the stride between dates ($s=1$ for
daily monitoring). We write $\rho=M_{\mathrm{overlap}}/M=(M-s)/M=1-\phi$ for the
overlap fraction and $\phi=s/M$ for the fresh fraction. The first-order null below is
the high-overlap regime ($\rho\to1$, i.e.\ $\phi\to0$), exact in that limit and
at the disjoint end $\rho=0$, with a documented size drift at intermediate overlap
($\rho\approx0.75$, i.e.\ $s/M\approx0.25$; Figure~\ref{fig:null-vs-chi2}).
Because the shared observations enter both window covariances in the same way,
they cancel in the difference $S_t-S_{t-1}$, which is therefore driven entirely
by the $2s$ fresh observations.

Calibrated inference for $\|\widehat P-P\|_F$ from a \emph{single} sample is
classical \citep{Anderson1963,KoltchinskiiLounici2017}, with bootstrap
confidence sets \citep{NaumovSpokoinyUlyanov2019} and dimension-free extensions
\citep{JirakWahl2023,JirakWahl2024}, while two-sample eigenspace tests assume
\emph{independent} samples \citep{Schott1988}. Neither setting
addresses two windows that share part of their data.
For overlapping windows, the econophysics literature
\citep{AllezBouchaud2012,BunBouchaudPotters2018} uses random-matrix overlap
curves as informal nulls but supplies neither fluctuation theory nor a test.
In the proportional regime, \citet{LinPan2024} characterize the limiting
eigenvector overlaps of Ledoit--Wolf nonlinear shrinkage and quantify its loss
through them, again as deterministic large-$(N,M)$ formulas rather than a
finite-sample test.
What has been missing is a calibrated treatment of sequential projector movement
between overlapping windows, and one that covers shrinkage estimators.

\begin{proposition}[First-order null for overlapping windows]
\label{prop:noise-floor}
Let $r_1,\ldots,r_{M+s}\overset{iid}{\sim}N(0,\Sigma)$ with simple eigenvalues
$\lambda_1>\cdots>\lambda_N$, fixed $N$, and let $\widehat P_{K,t}$, $\widehat
P_{K,t-1}$ be top-$K$ sample-covariance projectors of windows $\{1,\ldots,M\}$ and
$\{s+1,\ldots,s+M\}$.
Then, with $E$ the covariance increment, first-order perturbation
(\citealt{Anderson1963,KoltchinskiiLounici2017}) gives
\begin{equation}
  \E\,\widehat D_{K,t}^2
  =\frac{4s}{M^2}\sum_{i\le K<j}\frac{\lambda_i\lambda_j}{(\lambda_i-\lambda_j)^2}
  \;+\;o(s/M^2),
  \label{eq:noise-floor}
\end{equation}
and the first-order law of $\widehat D_{K,t}$ is that of
\begin{equation}
  \Bigl[\,2\sum_{i\le K<j}\Bigl(\frac{g_{ij}}{\lambda_i-\lambda_j}\Bigr)^2\Bigr]^{1/2},
  \qquad
  g_{ij}=\frac{1}{M}\Bigl(\sum_{k\,\mathrm{new}}y_{ik}y_{jk}-\sum_{k\,\mathrm{old}}y_{ik}y_{jk}\Bigr),
  \label{eq:exact-null}
\end{equation}
with $y_{ik}\sim N(0,\lambda_i)$ independent, a sum of $2s$ Gaussian products per
entry that is simulated exactly in $O(sN)$ operations per draw and needs no
eigendecomposition.
At $s=M$, where the windows are disjoint, \eqref{eq:noise-floor} equals exactly
twice the one-sample mean of \citet{KoltchinskiiLounici2017}.
For $s\to\infty$ the entries Gaussianize and
$(M^2/4s)\widehat D_{K,t}^2\to_d\sum_{i\le K<j}w_{ij}Z_{ij}^2$,
$w_{ij}=\lambda_i\lambda_j/(\lambda_i-\lambda_j)^2$, $Z_{ij}\overset{iid}{\sim}N(0,1)$.
The law is first order in the per-step increment, and the $o(s/M^2)$ remainder is bounded in
Proposition~\ref{prop:second-order}, so it is most accurate in the high-overlap regime
($\rho\to1$, small fresh fraction $\phi=s/M$) and at the disjoint end ($\rho=0$), with an
upward quantile drift at intermediate overlap.
\end{proposition}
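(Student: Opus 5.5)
The plan is to reduce the projector difference to a linear functional of the covariance increment $E=S_t-S_{t-1}$ and then read off its moments. Work in the population eigenbasis, so that $\Sigma=\mathrm{diag}(\lambda_1,\ldots,\lambda_N)$ and the coordinates of each observation are independent $y_{ik}\sim N(0,\lambda_i)$. For the mean-zero model take $S=M^{-1}\sum_k r_kr_k^\top$; centering changes things only at order $O(1/M^2)$ relative to $E$ and is absorbed into the remainder. The shared observations $s+1,\ldots,M$ cancel, leaving
\[
E=\frac1M\Bigl(\sum_{k\,\mathrm{new}}r_kr_k^\top-\sum_{k\,\mathrm{old}}r_kr_k^\top\Bigr),
\]
so $E_{ij}=g_{ij}$ exactly. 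Write $\widehat P_{K,t}=\widehat P_{K,t-1}+L_{t-1}(E)+O(\|E\|^2)$, with $L$ the Fréchet derivative of the spectral projector. The eigenvalues of $S_{t-1}$ are simple with high probability for fixed $N$, so we can expand around $S_{t-1}$. The first-order term is
\[
L(E)=\sum_{i\le K<j}\frac{E_{ij}}{\lambda_i-\lambda_j}\,(u_iu_j^\top+u_ju_i^\top).
\]
We then replace the eigen-data of $S_{t-1}$ by those of $\Sigma$. This costs $O_p(M^{-1/2})$ relative error, and since $\|E\|=O_p(\sqrt{s}/M)$ it contributes $o(\sqrt s/M)$ to $\widehat D_{K,t}$. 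The first-order Frobenius norm is then $\|L(E)\|_F^2=2\sum_{i\le K<j}E_{ij}^2/(\lambda_i-\lambda_j)^2$, which is exactly the law \eqref{eq:exact-null}.

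For the mean \eqref{eq:noise-floor}, compute $\E g_{ij}^2$ for $i\ne j$. Each product $y_{ik}y_{jk}$ has mean zero and variance $\lambda_i\lambda_j$, and the $2s$ products are independent. The signs do not matter, so $\E g_{ij}^2=2s\lambda_i\lambda_j/M^2$. Multiplying by $2$ and summing gives
\[
\E\,\widehat D_{K,t}^2=\frac{4s}{M^2}\sum_{i\le K<j}w_{ij}+o(s/M^2).
\]
The disjoint case $s=M$ gives $4M^{-1}\sum w_{ij}$, which is twice the Koltchinskii--Lounici one-sample value $2M^{-1}\sum w_{ij}$. This matches because two independent projector errors add in quadrature to leading order.

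For $s\to\infty$, the scaled entries $(M/\sqrt{2s})\,g_{ij}/\sqrt{\lambda_i\lambda_j}$ are normalized sums of independent products. Distinct pairs $(i,j)\ne(i',j')$ are uncorrelated, because $\E[y_iy_jy_{i'}y_{j'}]=0$ unless the index pairs coincide. The multivariate CLT then gives jointly independent $N(0,1)$ limits $Z_{ij}$. The continuous mapping theorem yields
\[
(M^2/4s)\,\widehat D_{K,t}^2\to_d\sum_{i\le K<j} w_{ij}Z_{ij}^2.
\]

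The main obstacle is controlling the remainder uniformly enough to justify the $o(s/M^2)$ claim on the second moment, not just in probability. The quadratic term of the perturbation expansion contributes a cross term $2\langle L(E),Q(E)\rangle$. In expectation this is of order $\E\|E\|^3$, plus the product of $\|E\|^2$ with the $O(M^{-1/2})$ eigenbasis error, and both are $o(s/M^2)$ provided $s/M\to0$ or $M\to\infty$. I would handle this in two steps. First, truncate on the event that $\|E\|_{\op}$ and $\|S_{t-1}-\Sigma\|_{\op}$ are below a quarter of the minimal gap, where the Kato series converges geometrically. Second, bound the complement using the cap of Lemma~\ref{lem:trivial-cap} together with Gaussian tail bounds. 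I would defer the explicit second-order constant to Proposition~\ref{prop:second-order}, as the statement indicates.
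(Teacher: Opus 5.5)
Your argument is correct, but it linearizes at a different base point than the paper, and that choice matters.

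\textbf{The paper's route.} It expands each window projector around the population matrix, $\widehat P_{K,s}-P_K=L(E_s)+Q(E_s)$. Linearity of $L$ then yields $L(\Delta E)$ immediately, and the remainder $Q(E_t)-Q(E_{t-1})$ is referred to Proposition~\ref{prop:second-order}.

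\textbf{Your route.} You expand $\widehat P_{K,t}$ around $S_{t-1}$. The remainder then splits into a term quadratic in the increment and the base-point correction $[L_{S_{t-1}}-L_\Sigma](E)$, which is bilinear in $(S_{t-1}-\Sigma,E)$. The two remainders are the same matrix, but your decomposition exhibits its size. It is $O_p(\|E\|^2+M^{-1/2}\|E\|)$, i.e.\ of relative order $M^{-1/2}$ against the $O_p(\sqrt s/M)$ linear term, uniformly in $s\le M$.

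\textbf{What your route buys.} In the paper's form, the same conclusion needs the cancellation inside $Q(E_t)-Q(E_{t-1})$, namely the surviving cross terms $Q_2(E_c,\cdot)$ noted after Proposition~\ref{prop:second-order}. The bound of Proposition~\ref{prop:second-order} itself controls the two $Q$'s separately and is $O_p(1/M)$. That is $o(\sqrt s/M)$ only when $s\to\infty$, so on its own it does not give the $o(s/M^2)$ claim at a fixed stride such as $s=1$.

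Your truncation argument also supplies a step the paper leaves implicit: passing from an in-probability expansion to $\E\widehat D_{K,t}^2$. On the good event, the cross term is controlled by $\E\|E\|^3\lesssim(s/M^2)^{3/2}$ and $\E[\|E\|^2\|S_{t-1}-\Sigma\|]\lesssim(s/M^2)M^{-1/2}$. The complement has probability exponentially small in $M$ at fixed $N$, and there the cap $\widehat D_{K,t}^2\le 2K$ together with Cauchy--Schwarz on $\|L(E)\|_F^2$ suffices.

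\textbf{What the paper's route buys.} It is symmetric: both windows are linearized at the same deterministic point. The per-window condition $\eta_s<\Delta/4$ is exactly what the contour bound and standard concentration use. It also names the shared-block cross terms as the source of the size drift at intermediate overlap.

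\textbf{Two small slips.} Neither affects your conclusion.
\begin{itemize}
\item Demeaning changes the increment by a term of size $\|\bar r_t\bar r_t^\top-\bar r_{t-1}\bar r_{t-1}^\top\|=O_p(\sqrt s\,M^{-3/2})$. That is relative order $M^{-1/2}$, not $O(1/M^2)$, though it is still negligible.
\item The condition needed for the remainder to be $o(s/M^2)$ is simply $M\to\infty$; the alternative ``$s/M\to0$'' is neither needed nor sufficient.
\end{itemize}
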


\begin{remark}[Elliptical extension of the null]
\label{rem:elliptical-null}
Under elliptical sampling with kurtosis parameter $\kappa$,
$\mathrm{Var}(u_i^\top r r^\top u_j)=(1+\kappa)\lambda_i\lambda_j$ for $i\ne j$, so every
pair variance, and with it \eqref{eq:noise-floor}, scales by $(1+\kappa)$; null
draws of $\widehat D_{K,t}$ therefore scale by $\sqrt{1+\kappa}$.
Operationally we use $q^{\mathrm{an,ell}}_{95,t}=\sqrt{1+\widehat\kappa_t}\,q^{\mathrm{an}}_{95,t}$
with $\widehat\kappa_t$ from the radial-MLE estimator of \S\ref{sec:ar-inference}.
Corollary~\ref{cor:dhat-ell} shows this scaling is not an ad-hoc correction but the
projector instance of a general principle. The common-shock fourth-moment term
cannot load on any scale-invariant functional, and the top-$K$ projector is one.
\end{remark}

\noindent\textit{Proof.} Appendix~\ref{app:noise-floor} (perturbation expansion of
the projector difference; shared-window cancellation; variance computation).
In simulation, the mean \eqref{eq:noise-floor} is accurate to within
$5\%$ across $s/M\in\{0.08,0.25,1\}$, and the $95\%$ quantile of \eqref{eq:exact-null}
matches simulation at $s/M\in\{0.08,1\}$ with size $5.5\%$ under plug-in eigenvalues;
at intermediate overlap ($s/M=0.25$) second-order terms shared between windows inflate
the true quantile (empirical size $\approx 10\%$), the case the estimator-aware
bootstrap below is designed to handle.

The first-order law omits the quadratic remainder of the projector expansion.
Under the gap condition $\eta_s<\Delta/4$ the remainder is controlled by the
ratio $\eta_s/\Delta$ alone, and its size bounds both the deviation of
$\widehat D_{K,t}$ from the linear term and the resulting size distortion of
the analytic test.

\begin{proposition}[Second-order remainder bound]
\label{prop:second-order}
Let $\Sigma$ have top-$K$ gap $\Delta=\lambda_K-\lambda_{K+1}>0$, set
$r=(\lambda_1-\lambda_K+\Delta)/2$, and write
$L(\Delta E)$ for the linear term of Proposition~\ref{prop:noise-floor}.
If $\eta_s=\|E_s\|_{\op}<\Delta/4$ for $s\in\{t-1,t\}$, then
\begin{equation}
  \Bigl|\widehat D_{K,t}-\|L(\Delta E)\|_F\Bigr|
  \;\le\;
  \sum_{s\in\{t-1,t\}} 2\sqrt{K}\,\frac{2r}{\Delta}
  \Bigl(\frac{2\eta_s}{\Delta}\Bigr)^{2}\Bigl(1-\frac{2\eta_s}{\Delta}\Bigr)^{-1}
  \;\le\;
  \frac{64\sqrt{K}\,r\,(\eta_t\vee\eta_{t-1})^2}{\Delta^3}.
  \label{eq:second-order}
\end{equation}
Consequently, for any $h<\Delta/4$, the analytic test with the inflated threshold
$q^{\mathrm{an}}_{95}+64\sqrt{K}rh^2/\Delta^3$ has size at most
$0.05+\mathbb{P}(\eta_t\vee\eta_{t-1}\ge h)$ under the Gaussian null, where the
tail probability admits standard operator-norm concentration bounds in terms of the
effective rank \citep[][Bernoulli version]{KoltchinskiiLounici2017}.
\end{proposition}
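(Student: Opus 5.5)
The plan is to expand each estimated projector around the common population projector $P_K$ with the holomorphic functional calculus (Riesz projector) and split off the linear term. Under the null both windows share $\Sigma$, so $P_{K,t}=P_{K,t-1}=P_K$. For $s\in\{t-1,t\}$ write $\widehat P_{K,s}=P_K+L(E_s)+R_s$, where $L$ is the linear (Fréchet derivative) map of Proposition~\ref{prop:noise-floor} and $R_s$ collects the terms of order $\ge2$. Since $L$ is linear, $\widehat P_{K,t}-\widehat P_{K,t-1}=L(\Delta E)+R_t-R_{t-1}$ with $\Delta E=E_t-E_{t-1}$. The reverse triangle inequality, exactly as in the proof of Theorem~\ref{thm:dynamic-projector}, then gives
\[
  \bigl|\widehat D_{K,t}-\|L(\Delta E)\|_F\bigr|\le\|R_t\|_F+\|R_{t-1}\|_F,
\]
so everything reduces to a per-date bound on $\|R_s\|_F$.

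For the per-date bound I take the contour $\gamma$ to be the circle centred at $(\lambda_1+\lambda_K+\Delta/2-\Delta/2)$, namely at the midpoint of $[\lambda_K-\Delta/2,\lambda_1+\Delta/2]$. It encloses $\lambda_1,\dots,\lambda_K$ and stays at distance at least $\Delta/2$ from every eigenvalue of $\Sigma$. I will check that its radius is $r=(\lambda_1-\lambda_K+\Delta)/2$, as in the statement, so its length is $2\pi r$. On $\gamma$ the resolvent satisfies $\|(z-\Sigma)^{-1}\|_{\op}\le 2/\Delta$. Because $\eta_s<\Delta/4$, Weyl (Lemma~\ref{lem:weyl}) keeps the spectrum of $C_s$ off $\gamma$ and puts exactly $K$ eigenvalues inside, so $\widehat P_{K,s}=\frac{1}{2\pi i}\oint_\gamma (z-C_s)^{-1}dz$. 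The Neumann series then gives
\[
  R_s=\frac{1}{2\pi i}\oint_\gamma\sum_{m\ge2}(z-\Sigma)^{-1}\bigl[E_s(z-\Sigma)^{-1}\bigr]^m dz.
\]
Bounding the integrand in operator norm by $(2/\Delta)(2\eta_s/\Delta)^2(1-2\eta_s/\Delta)^{-1}$ and the contour length by $2\pi r$ yields $\|R_s\|_{\op}\le \frac{2r}{\Delta}(2\eta_s/\Delta)^2(1-2\eta_s/\Delta)^{-1}$.

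The main obstacle is converting this to Frobenius norm with the factor $2\sqrt K$ rather than $\sqrt N$. My first attempt will use the structure of $R_s$. It equals $\widehat P_{K,s}-P_K-L(E_s)$, and since $\widehat P_{K,s}-P_K$ and $L(E_s)$ each have rank at most $2K$, $R_s$ has rank at most $4K$. That would only give $2\sqrt K$ for the operator-to-Frobenius factor, which is what the statement claims. A cleaner fallback is to note that $\|\widehat P-P\|_F$ and $\|L\|_F$ are each controlled on a $2K$-dimensional space, and to split $R_s=(\widehat P-P)-L$ accordingly, applying $\|X\|_F\le\sqrt{\mathrm{rank}\,X}\,\|X\|_{\op}$. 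Summing over $s$ gives the first inequality in \eqref{eq:second-order}. The second inequality follows by replacing each $\eta_s$ with $\eta_t\vee\eta_{t-1}$ and using $(1-2\eta/\Delta)^{-1}\le2$ for $\eta<\Delta/4$, which gives $2\cdot2\sqrt K\cdot2r\cdot4\eta^2\cdot2/\Delta^3=64\sqrt K r\eta^2/\Delta^3$.

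For the size statement, on the event $\{\eta_t\vee\eta_{t-1}<h\}$ the bound makes $\widehat D_{K,t}\le\|L(\Delta E)\|_F+64\sqrt K rh^2/\Delta^3$. Hence
\[
  \mathbb P\bigl(\widehat D_{K,t}>q^{\mathrm{an}}_{95}+64\sqrt K rh^2/\Delta^3\bigr)\le\mathbb P\bigl(\|L(\Delta E)\|_F>q^{\mathrm{an}}_{95}\bigr)+\mathbb P(\eta_t\vee\eta_{t-1}\ge h).
\]
Under the Gaussian null the first term is $0.05$, because $q^{\mathrm{an}}_{95}$ is by construction the quantile of the law \eqref{eq:exact-null} of $\|L(\Delta E)\|_F$ from Proposition~\ref{prop:noise-floor}. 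The tail term is left to the cited concentration bounds.
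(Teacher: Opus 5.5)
Your proposal is correct and follows the paper's proof essentially step for step. It uses the same circle of radius $r$ centred at $(\lambda_1+\lambda_K)/2$ (your displayed centre is missing the division by $2$, though your ``midpoint'' description is right), the same Neumann-series operator bound, the rank-$4K$ conversion giving $2\sqrt K$, the reverse triangle inequality, and the union bound for the size claim. Drop the ``cleaner fallback'', though: bounding $\widehat P-P$ and $L(E_s)$ separately gives only a first-order bound in $\eta_s$, whereas your first argument, applying the rank-$4K$ bound to $R_s$ as a whole, already gives exactly what is needed.
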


\noindent\textit{Proof.} Appendix~\ref{app:second-order} (contour-integral Neumann
expansion; the remainder has rank at most $4K$, converting operator to Frobenius norm
at cost $2\sqrt K$).
The bound rests on a standard picture. The top-$K$ projector is a contour integral
of the resolvent $(zI-\Sigma)^{-1}$ around the top-$K$ eigenvalues, and perturbing
$\Sigma$ expands that resolvent in a Neumann (geometric) series in the error $E$; the
first term reproduces the linear law $L(\Delta E)$, and the remainder is the tail of the
series, hence quadratic in $\eta/\Delta$.
Across $1{,}200$ randomized trials spanning $\eta/\Delta\in(0.02,0.245)$ the
inequality was never violated, with median slack below $0.02$.

The remainder difference $Q(E_t)-Q(E_{t-1})$ does not inherit the shared-block
cancellation that the linear terms enjoy. Writing $E_s=E_c+E_s'$ (common block plus
$s$-specific block), the quadratic cross terms $Q_2(E_c,E_t'-E_{t-1}')$ survive.
Proposition~\ref{prop:second-order} is a worst-case envelope for this effect; at the
panel's $N/M$ the condition $\eta<\Delta/4$ typically fails and the bound is
uninformative, which is precisely when the estimator-aware bootstrap is the
recommended default.
The empirical size profile of the uncorrected analytic test ($5.5\%$ at $s/M=0.08$,
$10\%$ at $s/M=0.25$, $5.2\%$ at $s/M=1$) is consistent with this mechanism. The
shared-block cross term vanishes at $s=M$ and is masked by the heavy-tailed exact
increments at very small $s$.

\subsection{An estimator-aware parametric bootstrap}
\label{sec:tier-c}

The analytic null is exact only at the ends of the overlap range and assumes
rotation-equivariant cleaning. Two cases escape it. Its size drifts upward at
intermediate overlap, and for clipping-based estimators the projector identity of
\S\ref{sec:rot-equiv} fails. Both are handled by calibrating the threshold by
simulation.
At date $t$, simulate $B$ Gaussian paths of length $M+s$ from $N(0,C_t)$, apply the
\emph{same} estimator $\mathcal{A}$ to both windows of each path, and use the empirical
$95\%$ quantile $q^{\mathrm{bt}}_{95,t}$ of the simulated $\widehat D_{K,t}$ as the
threshold (Algorithm~\ref{alg:bootstrap}).
This captures finite-$s$ tails, second-order overlap terms, and any estimator
non-equivariance, at the cost of $2B$ estimator fits per evaluation date.
It adapts the parametric-bootstrap idea of \citet{NaumovSpokoinyUlyanov2019} from
one-sample confidence sets for $P$ to the two-window movement statistic with a
\emph{cleaned} plug-in $C_t$.

\begin{algorithm}[!ht]
\caption{Estimator-aware parametric bootstrap threshold at date $t$.}
\label{alg:bootstrap}
\hrule\smallskip
\noindent\textbf{Input:} return windows $W_{t-1},W_t$ (overlap $M-s$, stride $s$);
covariance estimator $\mathcal{A}$; monitoring dimension $K$; bootstrap draws $B$;
level $\alpha$ (default $0.05$).
\smallskip\hrule\smallskip
\begin{enumerate}\setlength{\itemsep}{2pt}
  \item $C_t\gets\mathcal{A}(S_t)$ from the current window sample covariance.
  \item \textbf{for} $b=1,\dots,B$:
  draw $r^{(b)}_1,\dots,r^{(b)}_{M+s}\overset{iid}{\sim}N(0,C_t)$; form the two
  windows $\{1,\dots,M\}$ and $\{s{+}1,\dots,s{+}M\}$; apply $\mathcal{A}$ to each
  window sample covariance and take top-$K$ projectors
  $\widehat P^{(b)}_{K,t-1},\widehat P^{(b)}_{K,t}$; set
  $\widehat D^{(b)}_{K,t}\gets\|\widehat P^{(b)}_{K,t}-\widehat P^{(b)}_{K,t-1}\|_F$.
  \item $q^{\mathrm{bt}}_{1-\alpha,t}\gets$ empirical $(1-\alpha)$ quantile of
  $\{\widehat D^{(b)}_{K,t}\}_{b=1}^{B}$.
\end{enumerate}
\smallskip\hrule\smallskip
\noindent\textbf{Output:} threshold $q^{\mathrm{bt}}_{1-\alpha,t}$; flag
$\mathbb 1\{\widehat D_{K,t}>q^{\mathrm{bt}}_{1-\alpha,t}\}$ for the observed movement
$\widehat D_{K,t}$.
\smallskip\hrule
\end{algorithm}

At fixed $N$ this bootstrap is consistent.
If the plug-in is consistent, $C_M\to_p\Sigma$ (the sample covariance, or QIS/LW
as $M\to\infty$), and the null cdf of $\widehat D_{K,t}$ is continuous and
strictly increasing at its $95\%$ quantile, then the simulated quantile
converges, $q_{95}(C_M)\to_p q_{95}(\Sigma)$, and the test
$\mathbb 1\{\widehat D_{K,t}>q_{95}(C_M)\}$ has asymptotic size $0.05$; the
argument is an almost-sure coupling with quantile convergence
(Appendix~\ref{app:bootstrap}), and the high-dimensional analogue is left
open.
Size and power of the band, the analytic test, and the bootstrap are compared in
the controlled two-window experiment of \S\ref{sec:tier-power}.

\subsection{Local power and the detectability frontier}
\label{sec:tier-b-power-theory}

The experiment of \S\ref{sec:tier-power} shows low power at onset for every
procedure. This reflects two-window monitoring itself rather than the particular
tests, and the following proposition makes the limitation precise, working from
the null simulator of Proposition~\ref{prop:noise-floor} alone.

\begin{proposition}[Local power under rotation alternatives]
\label{prop:tier-b-power}
Let windows A and B share $M-s$ observations, the $s$ new observations of
window B drawn i.i.d.\ from $\Sigma_2=R_\theta\Sigma_1R_\theta^\top$ with
$R_\theta$ a rotation of the plane $(u_1,u_b)$, $b>K$; let $\phi=s/M$ (post
design, disjoint windows, $s=M$, $\phi=1$) and
$D_{\mathrm{true}}=\|P_K(\Sigma_2)-P_K(\Sigma_1)\|_F=\sqrt2\,\sin\theta$.
\begin{enumerate}[(i)]
  \item \textbf{(Exact attenuation.)} $\E S_B=\bar\Sigma=(1-\phi)\Sigma_1+\phi\Sigma_2$
  exactly, so the estimable signal is
  $D_{\mathrm{mix}}(\phi,\theta)=\|P_K(\bar\Sigma)-P_K(\Sigma_1)\|_F$, with
  \[
    D_{\mathrm{mix}}(\phi,\theta)=\phi\cos\theta\,D_{\mathrm{true}}+O(\theta^2).
  \]
  At $\theta=\pi/2$ the mixture is diagonal in the population eigenbasis, so
  $D_{\mathrm{mix}}(\phi,\pi/2)=0$ whenever the mixed diagonal preserves the
  top-$K$ ordering; since the rotation alters only the diagonal entries $1$
  and $b$, the conditions $\phi<\tfrac12$,
  $\phi(\lambda_1-\lambda_b)<\lambda_K-\lambda_b$, and
  $\phi(\lambda_1-\lambda_b)<\lambda_1-\lambda_{K+1}$ are sufficient.
  Under these conditions, which hold at the experiment's onset fraction
  $\phi\approx0.25$, onset attenuation is nonlinear in $\theta$ and no
  two-window statistic can detect a quarter-turn rotation.
  Past the ordering threshold the picture reverses; in the post design
  ($\phi=1$) the mixture equals $\Sigma_2$, whose top-$K$ set at
  $\theta=\pi/2$ contains $u_b$ in place of $u_1$, and
  $D_{\mathrm{mix}}(1,\theta)=D_{\mathrm{true}}=\sqrt2\,\sin\theta$ at
  every angle.
  \item \textbf{(First-order law.)} If the returns are Gaussian and
  $\eta_s\vee\phi\|\Sigma_2-\Sigma_1\|_{\op}<\Delta_K/4$ for both windows, then
  to first order $\widehat D_{K,t}^2$ is distributed as the \emph{noncentral}
  version of Proposition~\ref{prop:noise-floor},
  $\sum_{i\le K<j}\tilde w_{ij}(Z_{ij}+\delta_{ij})^2$, where $\tilde w_{ij}$ and
  $Z_{ij}\sim N(0,1)$ are the null weights and standard variates implied by
  \eqref{eq:exact-null} and $\delta_{ij}=\E g_{ij}/\mathrm{sd}(g_{ij})$ is the
  standardized mean of the $(i,j)$ increment coordinate (nonzero only at $(1,b)$
  under the alternative), with total noncentrality
  \begin{equation}
    \textstyle\sum_{i\le K<j}\tilde w_{ij}\delta_{ij}^2
    =\phi^2\,\|L(\Sigma_2-\Sigma_1)\|_F^2
    =\tfrac{1}{2}\phi^2\sin^2(2\theta)
    \label{eq:ncp}
  \end{equation}
  which is \emph{spectrum-free}; the spectrum-dependent variance distortions of the
  $\Sigma_2$-blocks enter $\E\widehat D^2$ only at $O(\theta^2/M)$ for fixed
  $\phi$ (in simulation, at $M{=}1008$ the noncentrality matches \eqref{eq:ncp}
  within $15\%$ across spectra with opposite gap profiles).
  \item \textbf{(Frontier.)} Under the median-shift approximation
  $\mathrm{med}(\widehat D^2_\theta)\approx q_{50}^2+\mathrm{ncp}$, the
  50\%-power angle solves
  \begin{equation}
    \sin(2\theta_{50})
    =\frac{\sqrt{2\,(q_{1-\alpha}^2-q_{50}^2)}}{\phi},
    \label{eq:theta50}
  \end{equation}
  with both quantiles from the null simulator of
  Proposition~\ref{prop:noise-floor}.
  Because null quantiles of $\widehat D^2$ scale as $s/M^2$ under the
  weighted-$\chi^2$ approximation, for small angles
  $\theta_{50}\asymp C_\alpha(\lambda,K)/\sqrt{n_{\mathrm{post}}}$ where
  $n_{\mathrm{post}}$ is the number of post-break observations in window B
  ($s$ at onset, $M$ post). The frontier is governed by fresh observations
  alone, $\theta^{\mathrm{onset}}_{50}/\theta^{\mathrm{post}}_{50}=\sqrt{M/s}$,
  and the detection delay for a rotation of size $\theta$ scales as
  $1/\theta^2$.
  \item \textbf{(Onset power ceiling.)} If the right side of \eqref{eq:theta50}
  exceeds $1$, \emph{no} rotation angle attains $50\%$ power to first order, since the noncentrality \eqref{eq:ncp} is maximized at $\theta=\pi/4$ at the value
  $\phi^2/2$.
  Beyond first order the ceiling is governed by
  $\max_\theta D^2_{\mathrm{mix}}(\phi,\theta)$ via (i), attained past
  $\pi/4$; if $\max_\theta D^2_{\mathrm{mix}}<q^2_{1-\alpha}-q^2_{50}$ the
  sub-$50\%$ ceiling persists non-perturbatively.
\end{enumerate}
\end{proposition}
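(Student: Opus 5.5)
The plan is to treat all four parts as consequences of one linear-perturbation calculation in the population eigenbasis $\{u_i\}$ of $\Sigma_1$. The first-order projector map for a symmetric increment $H$ is
\[
L(H)=\sum_{i\le K<j}\frac{H_{ij}}{\lambda_i-\lambda_j}\,(u_iu_j^\top+u_ju_i^\top),
\qquad
\|L(H)\|_F^2=2\sum_{i\le K<j}\frac{H_{ij}^2}{(\lambda_i-\lambda_j)^2}.
\]
This is the same operator that underlies Proposition~\ref{prop:noise-floor}, and I take its remainder control from Proposition~\ref{prop:second-order}. Restricted to the plane $(u_1,u_b)$, the rotated covariance $R_\theta\Sigma_1R_\theta^\top$ has diagonal entries $\lambda_1\cos^2\theta+\lambda_b\sin^2\theta$ and $\lambda_1\sin^2\theta+\lambda_b\cos^2\theta$, and off-diagonal entry $(\lambda_1-\lambda_b)\sin\theta\cos\theta$. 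Everything below is bookkeeping with these three numbers.

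\textbf{Part (i).} Write $S_B=M^{-1}\bigl(\sum_{\mathrm{old}}+\sum_{\mathrm{new}}\bigr)r_kr_k^\top$, working with the known-mean form as in \eqref{eq:exact-null}. Linearity of expectation then gives $\E S_B=(1-\phi)\Sigma_1+\phi\Sigma_2$ exactly. The difference $\bar\Sigma-\Sigma_1$ has only one coupling entry across the cut, $H_{1b}=\phi(\lambda_1-\lambda_b)\sin\theta\cos\theta$. Hence
\[
\|L(\bar\Sigma-\Sigma_1)\|_F=\sqrt2\,\phi\sin\theta\cos\theta=\phi\cos\theta\,D_{\mathrm{true}}.
\]
The diagonal shifts are $O(\phi\theta^2)$ and, together with the quadratic Neumann remainder of Proposition~\ref{prop:second-order}, contribute $O(\theta^2)$.

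At $\theta=\pi/2$ the off-diagonal entry vanishes, so $\bar\Sigma$ is diagonal in $\{u_i\}$ with entries $(1-\phi)\lambda_1+\phi\lambda_b$ at position $1$ and $(1-\phi)\lambda_b+\phi\lambda_1$ at position $b$. The three stated inequalities are exactly the conditions that these two perturbed values stay on the correct sides of $\lambda_K$ and $\lambda_{K+1}$. The first of the two equals $\lambda_1-\phi(\lambda_1-\lambda_b)$, and $\phi<\tfrac12$ keeps it above the second. Under these conditions the top-$K$ index set is unchanged and $D_{\mathrm{mix}}=0$. At $\phi=1$ we have $\bar\Sigma=\Sigma_2$, and $D_{\mathrm{mix}}=D_{\mathrm{true}}$ by definition.

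\textbf{Part (ii).} Rerun the linearization of Proposition~\ref{prop:noise-floor}, but now with fresh observations whose covariance is $\Sigma_2$. The increment coordinate $g_{ij}$ acquires mean $\E g_{ij}=\phi(\Sigma_2-\Sigma_1)_{ij}$, which is nonzero only at $(1,b)$. Gaussianity of the first-order coordinates, which is exact as a sum of products as in \eqref{eq:exact-null} and asymptotic as $s\to\infty$, gives the noncentral weighted form. The total noncentrality is
\[
\sum\tilde w_{ij}\delta_{ij}^2=\|L(\E\,\Delta S)\|_F^2=\phi^2\|L(\Sigma_2-\Sigma_1)\|_F^2=2\phi^2\sin^2\theta\cos^2\theta=\tfrac12\phi^2\sin^2 2\theta.
\]
The eigenvalues cancel between $H_{1b}$ and the gap $\lambda_1-\lambda_b$, which is why this is spectrum-free. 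To bound the variance distortion, note that the fresh block's pair variances change by $O(\theta)$ relative to the null, and the $\Sigma_2$-specific part is odd in $\theta$. The net effect on $\E\widehat D^2$ is therefore $O(\theta^2)\cdot O(s/M^2)$, which is $O(\theta^2/M)$ for fixed $\phi$. The gap condition $\eta_s\vee\phi\|\Sigma_2-\Sigma_1\|_{\op}<\Delta_K/4$ lets Proposition~\ref{prop:second-order} absorb the remainders.

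\textbf{Parts (iii) and (iv).} Under the median-shift approximation, $50\%$ power means $q_{50}^2+\tfrac12\phi^2\sin^2 2\theta=q_{1-\alpha}^2$, which rearranges to \eqref{eq:theta50}. By Proposition~\ref{prop:noise-floor}, $q^2\asymp s/M^2$. Linearizing $\sin2\theta\approx2\theta$ gives $\theta_{50}\asymp(\sqrt s/M)/\phi=1/\sqrt s$ at onset and $1/\sqrt M$ in the post design. This yields the ratio $\sqrt{M/s}$ and the $1/\theta^2$ delay, since one needs $n_{\mathrm{post}}\gtrsim\theta^{-2}$. For (iv), $\sin^2 2\theta\le1$ with equality at $\pi/4$, so if the right side of \eqref{eq:theta50} exceeds $1$ no angle reaches $50\%$ power to first order. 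Beyond first order, the estimable signal is $D_{\mathrm{mix}}$ from (i), and the same median-shift comparison with $D^2_{\mathrm{mix}}$ in place of the noncentrality gives the non-perturbative ceiling.

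I expect the main obstacle to be (ii), specifically making rigorous that the spectrum-dependent variance changes of the fresh $\Sigma_2$ block enter only at $O(\theta^2/M)$. This needs an explicit fourth-moment computation of $\mathrm{Var}(u_i^\top rr^\top u_j)$ under $\Sigma_2$. I would do it in the $(u_1,u_b)$ plane first and check that the first-order-in-$\theta$ terms cancel by symmetry.
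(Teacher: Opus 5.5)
Your proposal is correct and follows the paper's proof essentially step for step: the exact mixture mean, the single cross-cut entry $(1,b)$ whose gap $\lambda_1-\lambda_b$ cancels to give the spectrum-free noncentrality $\tfrac12\phi^2\sin^2 2\theta$, the median-shift solve with $s/M^2$ quantile scaling, and the $\sin^2 2\theta\le 1$ ceiling with $D_{\mathrm{mix}}$ governing the ceiling beyond first order. The step you flag as the main obstacle is easier than you expect, and it needs no odd-in-$\theta$ cancellation argument. Under Gaussianity, $\mathrm{Var}_2(y_iy_j)=(\Sigma_2)_{ii}(\Sigma_2)_{jj}+(\Sigma_2)_{ij}^2$ in the $\Sigma_1$ eigenbasis, the diagonal entries of $\Sigma_2$ move only at $O(\theta^2)$, and the lone cross-cut off-diagonal entry enters squared, so the variance distortion is $O(\theta^2)$ directly and there are no first-order terms to cancel; this is exactly how the paper handles it.
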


\noindent\textit{Proof.} Appendix~\ref{app:tier-b-power} (means and variances
of the exact increment under the alternative; the rank-one signal lies in the
$(1,b)$ pair coordinate; mixture eigenstructure for (i) and (iv)).

\noindent The smallest rotation a two-window monitor can detect shrinks only
as fast as fresh data accumulate, $\theta_{50}\asymp C/\sqrt{n_{\mathrm{post}}}$, so
halving the detectable angle requires four times as many post-break observations; and
at onset, where only a fraction $\phi$ of window~B is new, a quarter-turn rotation can
be undetectable at every angle.

\begin{corollary}[Closed-form detectability frontier]
\label{cor:theta-min}
Let $n_{\mathrm{post}}$ be the number of post-break observations in window B
($s$ at onset, $M$ post) and let $\tilde q_p$ denote the $p$-quantile of the
normalized null $Q=(M^2/4n_{\mathrm{post}})\,\widehat D^2$ (the fresh-data
count and the noise scale coincide in both designs).
The frontier \eqref{eq:theta50} unifies across designs as
\begin{equation}
  \sin^2(2\theta_{\min})
  =\frac{8\,(\tilde q_{1-\alpha}-\tilde q_{50})}{n_{\mathrm{post}}}.
  \label{eq:theta-min-unified}
\end{equation}
Under the weighted-$\chi^2$ null $Q=\sum_{i\le K<j}w_{ij}Z_{ij}^2$,
$w_{ij}=\lambda_i\lambda_j/(\lambda_i-\lambda_j)^2$, the two-moment normal
approximation $\tilde q_p\approx\sum w_{ij}+z_p(2\sum w_{ij}^2)^{1/2}$ has
vanishing median term ($z_{50}=0$), so the quantile gap collapses to
$z_{1-\alpha}(2\sum w_{ij}^2)^{1/2}$ and the frontier becomes fully explicit
in $(\lambda,K,n_{\mathrm{post}},\alpha)$,
\begin{equation}
  \sin(2\theta_{\min})
  =\bigl(8\sqrt2\,z_{1-\alpha}\bigr)^{1/2}
  \frac{\bigl(\sum_{i\le K<j}w_{ij}^2\bigr)^{1/4}}{\sqrt{n_{\mathrm{post}}}},
  \qquad
  n^{*}=8\sqrt2\,z_{1-\alpha}\Bigl(\textstyle\sum_{i\le K<j}
  w_{ij}^2\Bigr)^{1/2},
  \label{eq:theta-min-explicit}
\end{equation}
where $n^{*}$ is the minimum post-break sample at which \emph{any} rotation
angle is $50\%$-detectable (the ceiling condition of part~(iv) is
$n_{\mathrm{post}}<n^{*}$).
The eigengap enters through the weights, since every cross-cut pair has
$\lambda_i-\lambda_j\ge\Delta_K$, so $w_{ij}\le\lambda_i\lambda_j/\Delta_K^2$,
and shrinking gaps inflate $\theta_{\min}$ and $n^{*}$ at rate $\Delta_K^{-1}$
when the cut-adjacent pair dominates.
These expressions rest on the median-shift approximation of
Proposition~\ref{prop:tier-b-power}(iii) and on the two-moment normal
approximation of the null quantiles; we use them as a design rule rather than as
an exact power guarantee, and \S\ref{sec:tier-power} reports the gap between the
rule and the simulated power.
\end{corollary}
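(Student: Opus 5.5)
The corollary is an algebraic rewriting of the frontier \eqref{eq:theta50} of Proposition~\ref{prop:tier-b-power}(iii), so the plan is to substitute the normalized null and simplify. I work in each design separately and check that both give the same expression. At onset, window B has $n_{\mathrm{post}}=s$ fresh observations and $\phi=s/M$. In the post design, $n_{\mathrm{post}}=M$ and $\phi=1$, but the null of Proposition~\ref{prop:noise-floor} is evaluated at $s=M$. In both cases the relevant noise scale is $4n_{\mathrm{post}}/M^2$, so $\widehat D^2 = (4n_{\mathrm{post}}/M^2)\,Q$. Hence $q_p^2=(4n_{\mathrm{post}}/M^2)\tilde q_p$, since squaring is monotone on $[0,\infty)$ and quantiles commute with it.

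Next I square \eqref{eq:theta50}:
\[
\sin^2(2\theta_{50})=\frac{2(q_{1-\alpha}^2-q_{50}^2)}{\phi^2}.
\]
At onset I substitute $\phi^2=n_{\mathrm{post}}^2/M^2$. This gives $2\cdot(4n_{\mathrm{post}}/M^2)(\tilde q_{1-\alpha}-\tilde q_{50})\cdot M^2/n_{\mathrm{post}}^2 = 8(\tilde q_{1-\alpha}-\tilde q_{50})/n_{\mathrm{post}}$. In the post design, $\phi=1$ and $n_{\mathrm{post}}=M$, so $2\cdot(4/M)(\tilde q_{1-\alpha}-\tilde q_{50})$ gives the same expression. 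This establishes \eqref{eq:theta-min-unified}, with $\theta_{\min}:=\theta_{50}$.

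For the explicit form, I take the weighted-$\chi^2$ limit of Proposition~\ref{prop:noise-floor}, $Q=\sum w_{ij}Z_{ij}^2$. It has mean $\sum w_{ij}$ and variance $2\sum w_{ij}^2$, because the $Z_{ij}$ are independent and $\mathrm{Var}(Z^2)=2$. The two-moment normal approximation then gives $\tilde q_p$, and since $z_{50}=0$ the mean cancels in the difference. Substituting $\tilde q_{1-\alpha}-\tilde q_{50}=z_{1-\alpha}\sqrt2(\sum w_{ij}^2)^{1/2}$ into \eqref{eq:theta-min-unified} gives $\sin^2(2\theta_{\min})=8\sqrt2\,z_{1-\alpha}(\sum w^2)^{1/2}/n_{\mathrm{post}}$. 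Taking square roots yields the first display of \eqref{eq:theta-min-explicit}.

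The threshold $n^*$ comes from the ceiling in part (iv). Since $\sin(2\theta)\le1$, an angle with 50\% power exists to first order iff the right side is at most $1$, i.e.\ iff $n_{\mathrm{post}}\ge n^*$. The gap statement follows from Weyl ordering: for $i\le K<j$ we have $\lambda_i-\lambda_j\ge\lambda_K-\lambda_{K+1}=\Delta_K$, hence the bound on $w_{ij}$. When the cut-adjacent pair dominates, $(\sum w^2)^{1/4}\approx w_{K,K+1}^{1/2}\propto\Delta_K^{-1}$, so $\sin(2\theta_{\min})$ scales as $\Delta_K^{-1}$. For small angles $\theta_{\min}$ inherits this rate, and $n^*\propto(\sum w^2)^{1/2}$ scales the same way.

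The main obstacle is the first step: justifying that the normalized null uses the same $n_{\mathrm{post}}$ in both designs. This requires the disjoint-window case of Proposition~\ref{prop:noise-floor} at $s=M$ together with the statement that noise quantiles scale as $s/M^2$. Everything downstream inherits the median-shift and normal approximations and is stated as approximate, consistent with the corollary's caveat.
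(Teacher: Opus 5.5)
Your derivation follows the paper's own proof essentially line for line. You substitute $q_p^2=(4n_{\mathrm{post}}/M^2)\tilde q_p$ and $\phi=n_{\mathrm{post}}/M$ into \eqref{eq:theta50}, apply the two-moment normal approximation to $\sum w_{ij}Z_{ij}^2$ so that the mean cancels in the quantile gap, and read off $n^*$ from $\sin(2\theta_{\min})\le 1$. Your case split is unnecessary but harmless, since $n_{\mathrm{post}}=s$ in both designs (the post design is the disjoint case $s=M$). The step you flag as the main obstacle is not one. Because $\tilde q_p$ is defined as a quantile of the normalized statistic, $q_p^2=(4n_{\mathrm{post}}/M^2)\tilde q_p$ is an identity, and \eqref{eq:theta-min-unified} is exact algebra from \eqref{eq:theta50}. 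The $s/M^2$ scaling only matters when you identify $\tilde q_p$ with weighted-$\chi^2$ quantiles for the explicit form.

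One step is wrong: $n^*$ does not scale ``the same way'' as $\theta_{\min}$. When the cut-adjacent pair dominates, $(\sum_{i\le K<j}w_{ij}^2)^{1/2}\approx w_{K,K+1}=\lambda_K\lambda_{K+1}/\Delta_K^2$, so $n^*\propto\Delta_K^{-2}$. Only $\sin(2\theta_{\min})$, hence $\theta_{\min}$ for small angles, and $\sqrt{n^*}$ grow at rate $\Delta_K^{-1}$. Your own intermediate line $(\sum w^2)^{1/4}\approx w_{K,K+1}^{1/2}\propto\Delta_K^{-1}$ already implies this after squaring. It is also forced by $\theta_{50}\asymp C/\sqrt{n_{\mathrm{post}}}$ with $C\propto\Delta_K^{-1}$, since $n^*\asymp C^2$. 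The corollary's phrase ``inflate $\theta_{\min}$ and $n^*$ at rate $\Delta_K^{-1}$'' contains the same slip, so you should correct the $n^*$ rate to $\Delta_K^{-2}$ rather than confirm the statement as written. As a minor point, the bound $\lambda_i-\lambda_j\ge\Delta_K$ for $i\le K<j$ follows from the eigenvalue ordering alone, not from Weyl's inequality.
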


The three procedures suit different conditions. The Davis--Kahan band of
\S\ref{sec:theory} assumes no sampling model and holds for every estimator and
every distribution, but delivers only a gauge of whether the eigenspace is
identified, with no significance level. The analytic null of \S\ref{sec:tier-b} is
the cheapest calibrated test and is accurate at the ends of the overlap range,
near-disjoint or near-identical windows, under light tails, though its size drifts
at intermediate overlap and it assumes rotation-equivariant cleaning. The
estimator-aware bootstrap of \S\ref{sec:tier-c} covers the remaining cases,
intermediate overlap, clipping-based estimators, and heavy tails, at the cost of
repeated estimator fits per date.

\section{Calibrated inference for the scalar functionals}
\label{sec:functional-inference}

Besides the projector movement, the monitor reports two scalar summaries of the
spectrum, the absorption ratio $\AR_K$ and the leading-eigenvalue share $f_1$.
Both are scale-invariant ratio functionals of the eigenvalues. Shrinkage and
cleaning are designed to reduce matrix risk, which places no bound on the error in
such functionals, so their accuracy requires a separate analysis.

\subsection{Deterministic error bounds for the scalar functionals}
\label{sec:theory-functional}

Matrix loss $\|C-\Sigma\|_F$ ranks estimators by absolute eigenvalue error, whereas
the absorption ratio $\AR_K$ depends only on the normalized eigenvalues
$p_i=\lambda_i/\sum_j\lambda_j$ and is scale-invariant. The two orderings can
therefore disagree. An exact expansion of the ratio error in the eigenvalue
perturbations quantifies the discrepancy.

Let $e_i=\lambda_i(C)-\lambda_i(\Sigma)$,
$T=\sum_{j=1}^N\lambda_j(\Sigma)$,
$S_K=\sum_{i=1}^K\lambda_i(\Sigma)$,
$u_K=\sum_{i=1}^K e_i$, $v=\sum_{j=1}^N e_j$.

\begin{proposition}[Absorption-ratio expansion]
\label{prop:ar-expansion}
With $\AR_K(C)=S_K(C)/T(C)$ and $\AR_K(\Sigma)=S_K/T$ (both totals positive,
$T,\,T+v>0$),
\begin{equation}
  \AR_K(C)-\AR_K(\Sigma)
  =\frac{u_K}{T}-\frac{S_K v}{T^2}+R_K,
  \label{eq:ar-expand}
\end{equation}
where the remainder is exact,
\begin{equation}
  R_K=-\frac{(u_K T-S_K v)\,v}{T^2(T+v)}.
  \label{eq:ar-remainder}
\end{equation}
\end{proposition}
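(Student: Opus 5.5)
This is a purely algebraic identity, and the plan is to verify it by exact manipulation of the ratio with no approximation. By definition, $S_K(C)=S_K+u_K$ and $T(C)=T+v$, because eigenvalues of $C$ are ordered and $e_i$ is defined index by index. So $\AR_K(C)=(S_K+u_K)/(T+v)$, and the assumption $T,\,T+v>0$ makes both ratios well defined. First I would compute the exact difference over a common denominator:
\[
  \frac{S_K+u_K}{T+v}-\frac{S_K}{T}
  =\frac{T(S_K+u_K)-S_K(T+v)}{T(T+v)}
  =\frac{u_KT-S_Kv}{T(T+v)}.
\]

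Next I would split off the linear part. The first-order terms in \eqref{eq:ar-expand} combine as
\[
  \frac{u_K}{T}-\frac{S_Kv}{T^2}=\frac{u_KT-S_Kv}{T^2}.
\]
Subtracting this from the exact difference gives
\[
  (u_KT-S_Kv)\Bigl(\frac{1}{T(T+v)}-\frac{1}{T^2}\Bigr)
  =(u_KT-S_Kv)\,\frac{T-(T+v)}{T^2(T+v)}
  =-\frac{(u_KT-S_Kv)\,v}{T^2(T+v)},
\]
which is exactly $R_K$ in \eqref{eq:ar-remainder}. This establishes the identity. It also shows the remainder is second order, since it is a product of the first-order numerator $u_KT-S_Kv$ and $v$.

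There is no real obstacle here. The only point needing care is that the $e_i$ are differences of \emph{ordered} eigenvalues, so the top-$K$ sum of $C$ is $S_K+u_K$ even if the eigenvectors swap. This holds by definition and needs no spectral-gap condition or appeal to Lemma~\ref{lem:weyl}. Weyl's inequality would be used only afterwards, to bound $|u_K|\le K\eta$ and $|v|\le N\eta$ if the expansion is turned into an error bound.
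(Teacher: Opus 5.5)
Your proposal is correct and follows the paper's proof: you put the exact difference over the common denominator $T(T+v)$, note that the linear part has the same numerator $u_KT-S_Kv$ over $T^2$, and subtract to obtain $R_K$. Your observation that $S_K(C)=S_K+u_K$ holds by the index-wise definition of the $e_i$ on ordered eigenvalues, so that no gap condition is needed, is also correct.
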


\begin{proof}
Subtracting the two ratios over the common denominator $T(T+v)$,
$\AR_K(C)-\AR_K(\Sigma)=(S_K+u_K)/(T+v)-S_K/T=(u_K T-S_K v)/[T(T+v)]$; the
first-order part $L=u_K/T-S_K v/T^2=(u_K T-S_K v)/T^2$ shares this numerator, so
\[
  R_K=\AR_K(C)-\AR_K(\Sigma)-L
  =(u_K T-S_K v)\Bigl(\frac{1}{T(T+v)}-\frac{1}{T^2}\Bigr)
  =-\frac{(u_K T-S_K v)\,v}{T^2(T+v)},
\]
which is \eqref{eq:ar-remainder}.
\end{proof}

In the first-order part of \eqref{eq:ar-expand}, $u_K/T$ is the error
accumulated inside the top block and $-S_Kv/T^2$ is the total error acting
through the normalization with leverage $S_K/T$. The two terms cancel exactly
for proportional errors $e_i=c\lambda_i$, for which $u_KT=S_Kv$, so a large but
proportional error leaves the ratio untouched while a small error split
unevenly across the cut does not.

\begin{proposition}[Matrix risk versus ratio error]
\label{prop:matrix-vs-ar}
If $\frac{1}{K}\sum_{i\le K}\lambda_i(\Sigma)\ne\frac{1}{N}\sum_j\lambda_j(\Sigma)$,
then for small $\varepsilon>0$, $C_A=\Sigma+\varepsilon I$ and $C_B=(1+b)\Sigma$ with
$b>\varepsilon\sqrt{N}/\|\Sigma\|_F$ satisfy
$\|C_A-\Sigma\|_F<\|C_B-\Sigma\|_F$ but
$|\AR_K(C_B)-\AR_K(\Sigma)|<|\AR_K(C_A)-\AR_K(\Sigma)|$.
\end{proposition}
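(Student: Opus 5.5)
The proof is a direct computation on the two explicit perturbations, one Frobenius and one for the absorption ratio. First the Frobenius side. Since $C_A-\Sigma=\varepsilon I$, we have $\|C_A-\Sigma\|_F=\varepsilon\sqrt{N}$. Since $C_B-\Sigma=b\Sigma$, we have $\|C_B-\Sigma\|_F=b\|\Sigma\|_F$. The hypothesis $b>\varepsilon\sqrt N/\|\Sigma\|_F$ is exactly $\|C_A-\Sigma\|_F<\|C_B-\Sigma\|_F$, so this half is immediate and requires $\Sigma\neq0$, which is implicit.

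For the ratio side we use scale invariance and Proposition~\ref{prop:ar-expansion}. Because $\AR_K(bC)=\AR_K(C)$ for $b>0$, we get $\AR_K(C_B)=\AR_K((1+b)\Sigma)=\AR_K(\Sigma)$. This is also the proportional-error case $e_i=b\lambda_i$, where $u_KT=S_Kv$ and both the first-order part and $R_K$ vanish. Hence $|\AR_K(C_B)-\AR_K(\Sigma)|=0$, and it remains to show $\AR_K(C_A)\ne\AR_K(\Sigma)$ for small $\varepsilon>0$.

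For $C_A$ the eigenvalues shift uniformly, $e_i=\varepsilon$, with the same ordering. So $u_K=K\varepsilon$, $v=N\varepsilon$, and the exact formula from the proof of Proposition~\ref{prop:ar-expansion} gives
\[
\AR_K(C_A)-\AR_K(\Sigma)=\frac{u_KT-S_Kv}{T(T+v)}=\frac{\varepsilon\,(KT-NS_K)}{T(T+N\varepsilon)}.
\]
Here $KT-NS_K=KN\bigl(\tfrac1N\sum_j\lambda_j-\tfrac1K\sum_{i\le K}\lambda_i\bigr)\ne0$ by hypothesis. With $T>0$ and $T+N\varepsilon>0$ for $\varepsilon>0$, the difference is nonzero and in fact strictly positive in absolute value for every $\varepsilon>0$, not just small ones. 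This gives the strict inequality $0<|\AR_K(C_A)-\AR_K(\Sigma)|$.

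The only delicate point is positivity of the totals $T$ and $T+v$, which holds when $\Sigma$ is positive semidefinite and nonzero, as for a covariance. The role of "small $\varepsilon$" is only to keep the example in the perturbative regime. I expect no real obstacle beyond checking that the uniform shift preserves the eigenvalue ordering, so that $\lambda_i(C_A)=\lambda_i(\Sigma)+\varepsilon$ index by index; this is trivial because $\varepsilon I$ commutes with $\Sigma$.
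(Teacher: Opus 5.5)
Your proposal is correct and follows the paper's proof: the same Frobenius computation, scale invariance to kill the $C_B$ error, and the uniform eigenvalue shift for $C_A$ with the non-sphericity condition $KT-NS_K\ne 0$. The only difference is that you write out the exact difference $\varepsilon(KT-NS_K)/[T(T+N\varepsilon)]$, whereas the paper differentiates $\phi(\varepsilon)=(S_K+K\varepsilon)/(T+N\varepsilon)$ at $0$; your version shows the strict inequality for every $\varepsilon>0$, not only for small ones.
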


\begin{proof}
The Frobenius distances are
$\|C_A-\Sigma\|_F=\|\varepsilon I\|_F=\varepsilon\sqrt{N}$ and
$\|C_B-\Sigma\|_F=\|b\Sigma\|_F=b\|\Sigma\|_F$, so the choice
$b>\varepsilon\sqrt{N}/\|\Sigma\|_F$ ensures $\|C_A-\Sigma\|_F<\|C_B-\Sigma\|_F$.
For $C_B$, scale invariance gives $\AR_K(C_B)=\AR_K(\Sigma)$ (numerator and
denominator of the ratio scale by $1+b$ and cancel), so
$|\AR_K(C_B)-\AR_K(\Sigma)|=0$.
For $C_A$, the eigenvalues are $\lambda_i(\Sigma)+\varepsilon$, hence
\[
  \AR_K(C_A)=\frac{S_K+K\varepsilon}{T+N\varepsilon},
\]
and we set $\phi(\varepsilon)=(S_K+K\varepsilon)/(T+N\varepsilon)$.
Differentiating at $\varepsilon=0$,
\[
  \phi'(\varepsilon)
  =\frac{K(T+N\varepsilon)-(S_K+K\varepsilon)N}{(T+N\varepsilon)^2}
  \;\Rightarrow\;
  \phi'(0)=\frac{KT-NS_K}{T^2}.
\]
The non-spherical condition $\frac{S_K}{K}\ne\frac{T}{N}$ is equivalent to
$KT-NS_K\ne 0$, so $\phi'(0)\ne 0$ and $\phi(\varepsilon)\ne S_K/T$ for small
$\varepsilon>0$. Thus $|\AR_K(C_A)-\AR_K(\Sigma)|>0$ while the $C_B$ error vanishes,
which completes the comparison.
\end{proof}

\begin{figure}[!ht]
  \centering
  \includegraphics[width=0.72\linewidth]{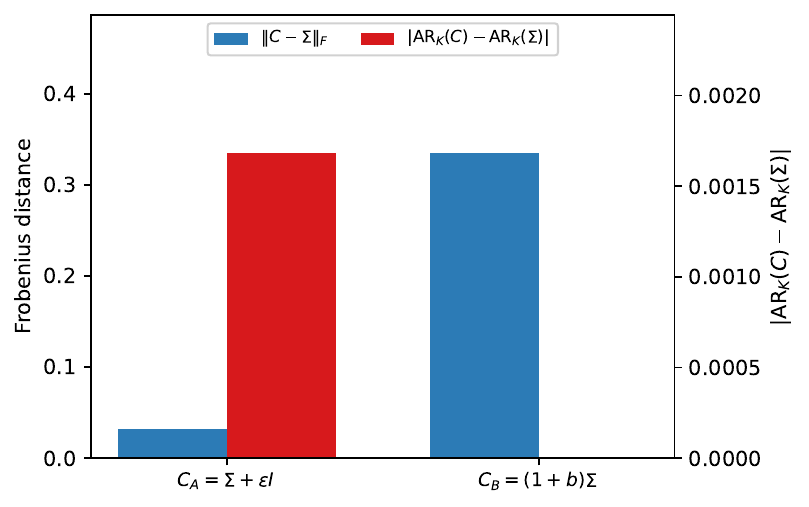}

  \caption{Diagonal example ($N{=}10$, $K{=}3$), where $C_A=\Sigma+\varepsilon I$ has
  smaller Frobenius distance than scaled $C_B=(1+b)\Sigma$ but larger $|\mathrm{AR}_K|$ error
  (Proposition~\ref{prop:matrix-vs-ar}).}
  \label{fig:matrix-vs-ar}
\end{figure}

The same separation holds for the other functionals. The proof of
Proposition~\ref{prop:matrix-vs-ar} uses only scale invariance and
$\phi'(0)\ne 0$, so it applies verbatim to $f_1=\lambda_1/T$ and to the spectral
entropy of $p_i=\lambda_i/\sum_j\lambda_j$.

\begin{proposition}[Functional error propagation]
\label{prop:propagate}
Fix positive constants $T_{\min},T_{\max},p_{\min}$ and a neighborhood $\mathcal U$
of the target spectrum on which $T_{\min}\le\sum_{j}\lambda_j\le T_{\max}$ and
$p_i=\lambda_i/\sum_j\lambda_j\ge p_{\min}$ for all $i$. For fixed $N$ and
$g\in\{f_1,\AR_K,H\}$ there is $L_{g,N}<\infty$ such that, whenever the spectra of
$C_t$ and $\Sigma_t$ lie in $\mathcal U$ with $\eta_t=\|C_t-\Sigma_t\|_{\op}$,
\[
  |g(C_t)-g(\Sigma_t)|\le L_{g,N}\,\eta_t.
\]
\end{proposition}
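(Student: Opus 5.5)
The plan is to reduce the matrix perturbation to an eigenvalue perturbation and then bound the functional by a Lipschitz estimate in eigenvalue space. By Weyl's inequality (Lemma~\ref{lem:weyl}), the ordered eigenvalues satisfy $|e_i|=|\lambda_i(C_t)-\lambda_i(\Sigma_t)|\le\eta_t$ for every $i$. Each $g\in\{f_1,\AR_K,H\}$ is a function $\tilde g(\lambda_1,\dots,\lambda_N)$ of the ordered spectrum alone. It therefore suffices to show that $\tilde g$ is Lipschitz with respect to the $\ell_\infty$ norm on the region $\mathcal U$, with a constant depending only on $N$, $T_{\min}$ and $p_{\min}$.

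For the two ratio functionals I will use Proposition~\ref{prop:ar-expansion} directly. It holds exactly, so no convexity of $\mathcal U$ is needed. The exact difference is $(u_KT-S_Kv)/[T(T+v)]$, with $|u_K|\le K\eta_t$, $|v|\le N\eta_t$, $S_K\le T$, and both totals $T$ and $T+v=\sum_j\lambda_j(C_t)$ at least $T_{\min}$. This gives
\[
|\AR_K(C_t)-\AR_K(\Sigma_t)|\le \frac{(K+N)\eta_t}{T_{\min}}\le\frac{2N\eta_t}{T_{\min}}.
\]
Taking $K=1$ gives the same bound for $f_1$, so $L_{\AR_K,N}=L_{f_1,N}=2N/T_{\min}$ works.

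For the entropy, I first pass to the normalized vector $p=\lambda/\sum_j\lambda_j$. The same common-denominator computation, applied coordinatewise with $K$ replaced by a single index, gives $|p_i(C_t)-p_i(\Sigma_t)|\le 2N\eta_t/T_{\min}$. The map $p\mapsto-\sum p_i\log p_i$ has partial derivatives $-(1+\log p_i)$. On $\{p_i\ge p_{\min}\}$ these are bounded by $1+|\log p_{\min}|$, since $p_i\le1$.

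To apply the mean value theorem I need the segment between the two normalized spectra to stay in the region where the derivative bound holds. The constraint $\{p:\ p_i\ge p_{\min}\ \forall i,\ \sum_i p_i=1\}$ is convex in $p$. Both endpoints lie in it by the assumption that both spectra are in $\mathcal U$, so the whole segment does. This yields
\[
|H(C_t)-H(\Sigma_t)|\le N(1+|\log p_{\min}|)\,\frac{2N\eta_t}{T_{\min}}.
\]
The main (mild) obstacle is exactly this point: making sure the Lipschitz argument for $H$ does not require $\mathcal U$ itself to be convex in $\lambda$. Working in $p$-coordinates, where the constraint set is convex, sidesteps that. The constant $T_{\max}$ is not needed for these bounds. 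It is harmless and can be kept only for uniformity of the statement.
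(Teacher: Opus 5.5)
Your proof is correct, but it takes a different route from the paper's.

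The paper uses Weyl to get $\|\lambda(C_t)-\lambda(\Sigma_t)\|_2\le\sqrt N\,\eta_t$. It then applies one template to all three functionals: each is $C^1$ in eigenvalue coordinates with gradient bounded on $\mathcal U$ (Appendix~\ref{app:propagate}). The mean-value theorem then gives the bound with an implicit constant $\sqrt N\sup\|\nabla g\|_2$.

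For $\AR_K$ and $f_1$ you instead use the exact identity of Proposition~\ref{prop:ar-expansion}. This needs no differentiation or segment argument and gives the explicit constant $(K+N)/T_{\min}$. Only for $H$ do you need a mean-value step, and you take it in $p$-coordinates on the convex set $\{p_i\ge p_{\min},\ \sum_i p_i=1\}$.

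Your version buys three things: explicit constants, the observation that $T_{\max}$ is superfluous, and an explicit check that the segment stays where the derivative bound holds. The paper passes over that last point silently.

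The convexity worry is milder than you suggest, though. The region $\{\lambda:\ T_{\min}\le\sum_j\lambda_j\le T_{\max},\ \lambda_i\ge p_{\min}\sum_j\lambda_j\}$ is cut out by linear inequalities, so it is convex in $\lambda$ as well. Ordered vectors also form a convex cone. The paper's mean-value argument in $\lambda$-coordinates is therefore legitimate once it is applied on this region rather than on $\mathcal U$ itself.

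The paper's route is shorter and extends verbatim to any $C^1$ spectral functional. Its entropy gradient satisfies $|\partial H/\partial\lambda_i|=|\log p_i+H|/T\le|\log p_{\min}|/T_{\min}$, using $p_{\min}\le 1/N$. This gives a constant of order $N|\log p_{\min}|/T_{\min}$.

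Your step $\|p-p'\|_1\le N\max_i|p_i-p_i'|$ loses a factor of $N$. Summing the common-denominator bounds over $i$ directly gives $\|p-p'\|_1\le 2N\eta_t/T_{\min}$. At fixed $N$ this is immaterial.
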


\begin{proof}
By Lemma~\ref{lem:weyl}, $|\lambda_i(C_t)-\lambda_i(\Sigma_t)|\le\eta_t$ for all $i$,
so $\|\lambda(C_t)-\lambda(\Sigma_t)\|_2\le\sqrt N\,\eta_t$. Each $g\in\{f_1,\AR_K,H\}$
is $C^1$ on $\mathcal U$ with bounded gradient (Appendix~\ref{app:propagate}), so the
mean-value theorem gives $|g(C_t)-g(\Sigma_t)|\le L_{g,N}\,\eta_t$ after absorbing
$\sqrt N$ into the constant.
\end{proof}

As in \S\ref{sec:theory}, all estimation error enters through the operator-norm
budget $\eta_t$. The constant $L_{g,N}$ is finite only on spectra bounded away
from degeneracy, and at $N/M$ of order one the plug-in bias of
\S\ref{sec:ar-hd-caution}, not the Lipschitz term, dominates the error.

\subsection{Scale invariance and elliptical calibration}
\label{sec:ar-inference}

Two facts established so far have a common cause. The elliptical null of the
analytic test scales by exactly $\sqrt{1+\kappa}$
(Remark~\ref{rem:elliptical-null}), and the common-kurtosis cross term cancels
exactly in the absorption-ratio expansion of \S\ref{sec:theory-functional}. Both
follow from first-order kurtosis immunity holding exactly for scale-invariant
functionals of the covariance, and only for them.
Throughout this subsection we assume the returns are i.i.d.\ \emph{elliptical} with
covariance $\Sigma$, finite fourth moments, and kurtosis parameter $\kappa$ (Gaussian
$\kappa=0$; marginal excess kurtosis $3\kappa$). Ellipticity fixes the law up to the
covariance, so the fourth-moment tensor depends on the family only through the scalar
$\kappa$ \citep{Tyler1981,Browne1984,Muirhead1982},
\begin{equation}
  M\,\mathrm{Cov}(S_{ij},S_{kl})\;\to\;
  (1+\kappa)\bigl(\Sigma_{ik}\Sigma_{jl}+\Sigma_{il}\Sigma_{jk}\bigr)
  +\kappa\,\Sigma_{ij}\Sigma_{kl}.
  \label{eq:ell-fourth-moment}
\end{equation}

The elliptical assumption does real work here. When it fails, for example under
skewed, non-elliptical returns, the $(1+\kappa)$ correction no longer calibrates the
intervals (\S\ref{sec:mc-si-coverage}, Table~\ref{tab:non-elliptical}), and the
estimator-aware bootstrap or a distribution-free fallback is required.

\begin{proposition}[Kurtosis immunity and scale invariance]
\label{prop:scale-immunity}
Fix $N$ and let $M\to\infty$ for i.i.d.\ elliptical returns obeying
\eqref{eq:ell-fourth-moment}, and let
$W:\mathcal{S}\to\R$ be defined and continuously differentiable on an open subset
$\mathcal{S}$ of the positive-definite cone, with symmetric gradient
$G=\nabla W(\Sigma)$.
\begin{enumerate}[(i)]
  \item \textbf{(Variance decomposition.)}
  $\sqrt{M}\bigl(W(S_M)-W(\Sigma)\bigr)\to_d N\bigl(0,\sigma^2_W(\Sigma,\kappa)\bigr)$ with
  \begin{equation}
    \sigma^2_W(\Sigma,\kappa)
    =2(1+\kappa)\,\mathrm{tr}\bigl[(G\Sigma)^2\bigr]
    +\kappa\,\langle G,\Sigma\rangle^2,
    \qquad \langle G,\Sigma\rangle=\mathrm{tr}(G\Sigma).
    \label{eq:si-variance}
  \end{equation}
  \item \textbf{(Immunity.)} If $W$ is scale-invariant near $\Sigma$
  ($W(b\Sigma)=W(\Sigma)$ for $b$ near $1$), then $\langle G,\Sigma\rangle=0$
  (Euler's relation for degree-zero homogeneity) and
  \[
    \sigma^2_W(\Sigma,\kappa)=(1+\kappa)\,\sigma^2_W(\Sigma,0),
  \]
  so ellipticity enters only through the scalar factor $(1+\kappa)$ and the Gaussian
  Wald interval rescaled by $\sqrt{1+\widehat\kappa}$ is asymptotically valid across
  the entire elliptical class.
  \item \textbf{(Converse.)} Suppose
  $\sigma^2_W(\Sigma,\kappa_0)=(1+\kappa_0)\,\sigma^2_W(\Sigma,0)$ for a single
  $\kappa_0>0$ at every $\Sigma$ in a ray-connected open set
  $\mathcal{C}\subseteq\mathcal{S}$ (i.e.\ $\{b>0:b\Sigma\in\mathcal{C}\}$ is an
  interval for each $\Sigma\in\mathcal{C}$).
  Then $\langle\nabla W(\Sigma),\Sigma\rangle=0$ on $\mathcal{C}$ and $W$ is constant
  along every ray segment in $\mathcal{C}$. First-order kurtosis immunity holds
  \emph{only} for scale-invariant functionals.
\end{enumerate}
\end{proposition}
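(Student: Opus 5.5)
The plan has three steps, matching the three parts: a delta-method computation, an Euler-relation argument, and an ODE along rays.

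\emph{Part (i).} Fix $N$. Under i.i.d.\ elliptical sampling with finite fourth moments, the multivariate CLT gives $\sqrt M\,\mathrm{vec}(S_M-\Sigma)\to_d N(0,\Gamma)$, with covariance $\Gamma$ given by \eqref{eq:ell-fourth-moment}. The mean-centering in $S_t$ contributes only $O_p(M^{-1})$ and does not affect the limit. Since $W$ is $C^1$ near $\Sigma$, the delta method yields
\[
\sqrt M\bigl(W(S_M)-W(\Sigma)\bigr)\to_d N\Bigl(0,\ \textstyle\sum_{ijkl}G_{ij}G_{kl}\Gamma_{ij,kl}\Bigr).
\]
Contract \eqref{eq:ell-fourth-moment} against $G\otimes G$, using the symmetry of $G$ and $\Sigma$. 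The two terms $\Sigma_{ik}\Sigma_{jl}$ and $\Sigma_{il}\Sigma_{jk}$ each give $\mathrm{tr}(G\Sigma G\Sigma)=\mathrm{tr}[(G\Sigma)^2]$. The term $\Sigma_{ij}\Sigma_{kl}$ gives $\langle G,\Sigma\rangle^2$. Together these produce \eqref{eq:si-variance}.

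\emph{Part (ii).} Scale invariance gives $W(b\Sigma)=W(\Sigma)$ for $b$ near $1$. Differentiating in $b$ at $b=1$ gives $\frac{d}{db}W(b\Sigma)|_{b=1}=\langle\nabla W(\Sigma),\Sigma\rangle=0$. This is Euler's relation for degree-zero homogeneity, and it requires only differentiability along the ray. Substituting into \eqref{eq:si-variance} removes the $\kappa$ cross term and leaves $2(1+\kappa)\mathrm{tr}[(G\Sigma)^2]=(1+\kappa)\sigma^2_W(\Sigma,0)$.

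For the interval claim, I take any consistent estimator $\widehat\kappa\to_p\kappa$, such as the radial-MLE of the text, together with the continuous plug-in $\sigma^2_W(S_M,0)\to_p\sigma^2_W(\Sigma,0)$. Slutsky's lemma then makes the rescaled Wald statistic asymptotically $N(0,1)$.

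\emph{Part (iii).} Suppose $\sigma^2_W(\Sigma,\kappa_0)=(1+\kappa_0)\sigma^2_W(\Sigma,0)$. By \eqref{eq:si-variance} this reads
\[
2(1+\kappa_0)\mathrm{tr}[(G\Sigma)^2]+\kappa_0\langle G,\Sigma\rangle^2=2(1+\kappa_0)\mathrm{tr}[(G\Sigma)^2].
\]
Hence $\kappa_0\langle G,\Sigma\rangle^2=0$, and since $\kappa_0>0$ we get $\langle\nabla W(\Sigma),\Sigma\rangle=0$ at every $\Sigma\in\mathcal C$.

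Fix $\Sigma\in\mathcal C$ and let $I=\{b>0:b\Sigma\in\mathcal C\}$, which is an open interval by ray-connectedness and openness. Set $h(b)=W(b\Sigma)$ on $I$. Then $h$ is $C^1$ with
\[
h'(b)=\langle\nabla W(b\Sigma),\Sigma\rangle=b^{-1}\langle\nabla W(b\Sigma),b\Sigma\rangle=0.
\]
So $h$ is constant on the interval $I$, which means $W$ is constant along every ray segment in $\mathcal C$, i.e.\ $W$ is scale-invariant there.

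\emph{Main obstacle.} The delicate point is the bookkeeping in part (i). One must make sure that the symmetric-gradient convention (off-diagonal entries of $S$ are not free coordinates) contracts correctly with \eqref{eq:ell-fourth-moment}. I would handle this by writing the first-order term as $\langle G,S_M-\Sigma\rangle$ and computing $\mathrm{Var}\langle G,rr^\top\rangle$ directly for a single elliptical draw. For a spherical representation $r=\Sigma^{1/2}z$, this variance equals $2(1+\kappa)\mathrm{tr}[(G\Sigma)^2]+\kappa\,\mathrm{tr}(G\Sigma)^2$. Computing it this way avoids the coordinate ambiguity altogether.
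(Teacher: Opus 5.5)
Your proposal is correct and follows the paper's proof essentially step for step. Part (i) is the delta method with limiting variance $\mathrm{Var}(r^\top G r)$, contracted against the elliptical fourth-moment tensor. Part (ii) is Euler's relation, obtained by differentiating $b\mapsto W(b\Sigma)$ at $b=1$. Part (iii) uses the identity $\sigma^2_W(\Sigma,\kappa_0)-(1+\kappa_0)\sigma^2_W(\Sigma,0)=\kappa_0\langle G,\Sigma\rangle^2$ and then $h'(b)=b^{-1}\langle\nabla W(b\Sigma),b\Sigma\rangle=0$ on the ray interval.

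Your Slutsky step for the rescaled Wald interval goes beyond what the paper writes out. It needs $\widehat\kappa$ to be genuinely consistent and $\sigma^2_W(\Sigma,0)>0$. The paper's radial-MLE is a fitted-$t$ estimator that the paper itself reports as conservatively biased under heavy tails, so it is not a safe example of a consistent $\widehat\kappa$ across the whole elliptical class.
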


\noindent\textit{Proof.} Appendix~\ref{app:scale-immunity} (delta-method variance as
the variance of a scalar quadratic form via \eqref{eq:ell-fourth-moment}; Euler's
relation in both directions).
The common-shock term $\kappa\langle G,\Sigma\rangle^2$ is the price of measuring
\emph{level}. For $W=\mathrm{tr}$ it equals $\kappa T^2$ and dominates the variance
under heavy tails, whereas every ratio functional sheds it identically.

\begin{lemma}[Euler identities for the monitored functionals]
\label{lem:ar-orth}
The functionals $\AR_K$, $f_1$, and $H$ are scale-invariant with matrix gradients
(at $\Sigma$ with $\lambda_K>\lambda_{K+1}$ for $\AR_K$, $\lambda_1$ simple for
$f_1$, $\lambda_i>0$ for $H$; $P_K$ the top-$K$ projector, $p_i=\lambda_i/T$)
\[
  \nabla\AR_K=\frac{P_K}{T}-\frac{S_K}{T^2}I,\qquad
  \nabla f_1=\frac{u_1u_1^\top}{T}-\frac{\lambda_1}{T^2}I,\qquad
  \nabla H=-\frac{1}{T}\sum_i\bigl(\log p_i+H\bigr)u_iu_i^\top,
\]
each satisfying $\langle\nabla W,\Sigma\rangle=0$ exactly.
In eigenvalue coordinates, $g_i=\partial\AR_K/\partial\lambda_i
=(\mathbf 1_{\{i\le K\}}T-S_K)/T^2$ and $\sum_i g_i\lambda_i=0$, as used in
Proposition~\ref{prop:ar-clt}.
Note $\AR_K$ requires only the gap condition $\lambda_K>\lambda_{K+1}$, not simple
eigenvalues, since ties inside the top block leave $\mathrm{tr}(P_K\Sigma)$ smooth.
\end{lemma}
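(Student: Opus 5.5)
The plan is to treat the three functionals one at a time but by a single mechanism. First obtain the gradient from first-order eigenvalue perturbation theory. Then read off scale invariance directly and confirm the Euler identity by computing the trace pairing. The matrix differential is the standard one: for a simple eigenvalue, $d\lambda_i=u_i^\top(dC)u_i$, so $\nabla\lambda_i=u_iu_i^\top$. For a block separated by a gap, the sum $S_K(C)=\mathrm{tr}(P_K(C)C)$ has differential $\mathrm{tr}(P_K\,dC)$, because the first-order variation of $P_K$ is off-diagonal between $\mathrm{range}(P_K)$ and its complement. That variation therefore contributes $\mathrm{tr}(dP_K\,\Sigma)=0$. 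This gives $\nabla S_K=P_K$ under only $\lambda_K>\lambda_{K+1}$, which is the point of the final remark: smoothness comes from Kato's projector (or the contour integral already used for Proposition~\ref{prop:second-order}), not from simplicity of individual eigenvalues. We also have $\nabla T=I$ since $T=\mathrm{tr}\,C$.

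Next apply the quotient rule. For $\AR_K=S_K/T$ this gives $\nabla\AR_K=P_K/T-S_KI/T^2$, and with $\nabla\lambda_1=u_1u_1^\top$ it gives $\nabla f_1=u_1u_1^\top/T-\lambda_1I/T^2$. For $H=-\sum_ip_i\log p_i$, use $\partial H/\partial p_i=-(\log p_i+1)$ and $\nabla p_i=u_iu_i^\top/T-\lambda_iI/T^2$. Summing, and using $\sum_i\lambda_i/T=1$ together with $\sum_ip_i\log p_i=-H$, we get
\[
-\frac1T\sum_i(\log p_i+1)u_iu_i^\top+\frac{1}{T}\Bigl(\sum_ip_i\log p_i+1\Bigr)I .
\]
Writing $I=\sum_iu_iu_i^\top$, the constant $1$ cancels and we arrive at $-\frac1T\sum_i(\log p_i+H)u_iu_i^\top$, which is the stated form.

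The Euler identities then follow by direct evaluation, using $\langle P_K,\Sigma\rangle=S_K$, $\langle I,\Sigma\rangle=T$ and $\langle u_iu_i^\top,\Sigma\rangle=\lambda_i$. For $\AR_K$ we get $S_K/T-S_KT/T^2=0$, and $f_1$ works the same way. For $H$ we get $-\sum_ip_i(\log p_i+H)=H-H=0$. As a cross-check, scale invariance is immediate because each functional depends only on the ratios $p_i$, which are unchanged under $C\mapsto bC$. Differentiating $W(b\Sigma)$ at $b=1$ then yields the same identity, which is the Euler relation invoked in Proposition~\ref{prop:scale-immunity}(ii).

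The eigenvalue-coordinate statement is the same computation restricted to the diagonal. It gives $g_i=(\mathbf 1_{\{i\le K\}}T-S_K)/T^2$, and $\sum_ig_i\lambda_i=(S_KT-S_KT)/T^2=0$.

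The only delicate step is justifying differentiability of $S_K$ without simple eigenvalues. I would handle it by citing the resolvent representation $P_K(C)=\frac{1}{2\pi i}\oint(zI-C)^{-1}dz$, which is analytic in $C$ on a neighborhood where the gap persists, combined with the off-diagonal structure of $dP_K$. The entropy requires $\lambda_i>0$ so that $\log p_i$ is smooth. If eigenvalues inside the spectrum coincide, $H$ is a symmetric function of the eigenvalues and hence still smooth as a spectral function, but I would simply restate the gradient on the simple-spectrum set and extend it by continuity.
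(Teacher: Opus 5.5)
Your proposal is correct and follows essentially the same route as the paper: write $\AR_K=\mathrm{tr}(P_K\Sigma)/\mathrm{tr}\,\Sigma$, use the off-diagonal block structure of $dP_K$ to kill $\mathrm{tr}(dP_K\,\Sigma)$ under the gap condition alone, treat $f_1$ as the $K=1$ case, and read off the Euler identities by pairing each gradient with $\Sigma$. The only difference is cosmetic: you get $\nabla H$ by the matrix chain rule through $\nabla p_i$ and then absorb the multiple of $I$ via $I=\sum_i u_iu_i^\top$, whereas the paper first computes $\partial H/\partial\lambda_i=-(\log p_i+H)/T$ in eigenvalue coordinates and then lifts it. Your remark on repeated eigenvalues for $H$ goes beyond the paper, which does not address it, and is settled more cleanly than by your continuity extension through the standard fact that a symmetric differentiable function of the eigenvalues defines a differentiable spectral function with gradient $U\,\mathrm{diag}(\nabla f(\lambda))\,U^\top$.
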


\begin{corollary}[Elliptical projector null]
\label{cor:dhat-ell}
Under i.i.d.\ elliptical sampling, the top-$K$ projector is scale-invariant, and in the
population eigenbasis the common-shock term of \eqref{eq:ell-fourth-moment} is
$\kappa\lambda_i\lambda_j\delta_{ij}\delta_{kl}$, which vanishes on the off-diagonal
entries $E_{ij}$, $i\le K<j$, which are the only entries appearing in the linear term
$L(\Delta E)$ of Proposition~\ref{prop:noise-floor}.
Hence under elliptical sampling every pair variance in the first-order null scales
by exactly $(1+\kappa)$ and the null quantiles of $\widehat D_{K,t}$ scale by
$\sqrt{1+\kappa}$, so the elliptical correction of Remark~\ref{rem:elliptical-null} is the
projector instance of Proposition~\ref{prop:scale-immunity}(ii), not a separate
approximation.
\end{corollary}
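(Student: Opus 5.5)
The plan is to work in the population eigenbasis $\Sigma=U\Lambda U^\top$, in which the problem becomes a statement about the entries $E_{ij}=u_i^\top(S-\Sigma)u_j$. Rotating the data by $U^\top$ turns the elliptical vector $r$ into $y=U^\top r$. This vector is still elliptical, has covariance $\Lambda$ and the same $\kappa$, and so \eqref{eq:ell-fourth-moment} applies to it with $\Sigma$ replaced by $\Lambda$. Because $\Lambda_{ik}=\lambda_i\delta_{ik}$, the limit covariance of the rotated entries reads
\[
  M\,\mathrm{Cov}(E_{ij},E_{kl})\to(1+\kappa)\lambda_i\lambda_j(\delta_{ik}\delta_{jl}+\delta_{il}\delta_{jk})+\kappa\lambda_i\lambda_j\delta_{ij}\delta_{kl}.
\]
The common-shock part is therefore supported on diagonal pairs $(i,i),(k,k)$ only. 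This gives the displayed form claimed in the statement.

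Second, I will recall the structure of the linear term. From the proof of Proposition~\ref{prop:noise-floor} (Appendix~\ref{app:noise-floor}), $L(\Delta E)$ is a first-order Sylvester-type expansion of the projector. In the eigenbasis its only nonzero entries are $\pm(\Delta E)_{ij}/(\lambda_i-\lambda_j)$ with $i\le K<j$, and the index sets $\{1,\dots,K\}$ and $\{K+1,\dots,N\}$ are disjoint. So every coordinate entering $L$ has $i\ne j$, and the term $\kappa\lambda_i\lambda_j\delta_{ij}\delta_{kl}$ vanishes on all covariances among these coordinates.

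The increment $\Delta E$ is built from the fresh observations, as in \eqref{eq:exact-null}. Each $g_{ij}$ is a centered sum over the $2s$ fresh points of $y_{ik}y_{jk}$, with opposite signs for the new and old blocks. Its variance is therefore $2s/M^2$ times the single-observation quantity $\mathrm{Var}(y_iy_j)=(1+\kappa)\lambda_i\lambda_j$, which is the identity stated in Remark~\ref{rem:elliptical-null}. It follows from the fourth-moment structure of elliptical laws: $\E y_i^2y_j^2=(1+\kappa)\lambda_i\lambda_j$ for $i\ne j$. The cross-covariances between distinct off-diagonal pairs $(i,j)\ne(k,l)$ are zero under both the Gaussian law and the elliptical law, because the mixed fourth moments with an unpaired index vanish by sign symmetry of elliptical vectors.

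Third, I will conclude the scaling. The first-order vector $(g_{ij})_{i\le K<j}$ has covariance equal to $(1+\kappa)$ times its Gaussian covariance. Under the same Gaussianizing limit used in Proposition~\ref{prop:noise-floor}, the elliptical $\widehat D_{K,t}^2$ equals in law $(1+\kappa)$ times the Gaussian weighted-$\chi^2$. Its quantiles therefore scale by $\sqrt{1+\kappa}$, and the mean formula \eqref{eq:noise-floor} scales by $(1+\kappa)$.

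Finally, I will tie this to Proposition~\ref{prop:scale-immunity}(ii). The projector satisfies $P_K(b\Sigma)=P_K(\Sigma)$. So for each linear functional $W_A(\Sigma)=\langle A,P_K(\Sigma)\rangle$, the gradient $G_A$ satisfies $\langle G_A,\Sigma\rangle=0$ by Euler's relation. Proposition~\ref{prop:scale-immunity}(ii) then gives variance scaling by $(1+\kappa)$ for every such functional. Taking all directions $A$ (Cramér--Wold) yields the matrix statement, consistent with the direct computation above.

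The main obstacle is that the projector null is not a single $\sqrt M$-CLT but involves overlapping windows and finite $s$. The cleanest route is to avoid asymptotics and use only exact second moments of the fresh-observation sums, as above. Any claim about the full law, rather than the second moments, then has to be made within the same first-order/Gaussianized approximation that Proposition~\ref{prop:noise-floor} already adopts.
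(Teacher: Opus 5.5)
Your proposal is correct and follows the paper's argument. You pass to the population eigenbasis and observe that the common-shock term is supported on diagonal index pairs, while $L(\Delta E)$ involves only the cross-cut entries $i\le K<j$. Hence each $\mathrm{Var}(g_{ij})$ is $(1+\kappa)$ times its Gaussian value, and the Euler relation for the scale-invariant projector ties this to Proposition~\ref{prop:scale-immunity}(ii).

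Your two extra checks are sound refinements the paper leaves implicit. First, cross-covariances between distinct pairs vanish by sign symmetry. Second, the quantile scaling is exact only within the Gaussianized weighted-$\chi^2$ law, since a finite sum of $2s$ elliptical products is not a rescaled Gaussian-product sum. Your one slip is inherited from the statement itself: the common-shock entry should read $\kappa\lambda_i\lambda_k\delta_{ij}\delta_{kl}$. This is harmless here, because only off-diagonal entries enter the argument.
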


\begin{proposition}[Exact window-scale pivotality]
\label{prop:exact-pivot}
Let $W$ be scale-invariant and let the window returns satisfy $r_\tau=c\,z_\tau$
for $\tau=1,\ldots,M$, where $c>0$ is a single random volatility scale common to
the whole window and $z_\tau\in\R^N$ are the underlying unit-scale returns
(arbitrary dependence between $c$ and $\{z_\tau\}$ is allowed).
Then $S_r=c^2S_z$ and $W(S_r)=W(S_z)$ \emph{for every realization}, so the law of
$W(S)$ does not depend on the distribution of $c$ at all, and the statement is
exact in finite samples, not asymptotic.
Moreover, if $\mathcal{A}$ is scale-equivariant ($\mathcal{A}(c^2S)=c^2\mathcal{A}(S)$,
true of QIS and linear shrinkage), then $\widehat D_{K,t}$, $\widehat\Delta_{K,t}$,
$\eta_t$, and hence the entire capped band $\tau^*_{K,t}$ and its flags are
invariant under a common scaling of both windows, so the monitor is
volatility-level-blind by construction and reacts only to shape.
\end{proposition}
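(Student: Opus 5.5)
The plan is to prove the pathwise identity $S_r=c^2S_z$ first and then push it through every downstream object of the monitor. No distributional argument is needed: everything holds realization by realization, and the claim about laws follows trivially.

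\textbf{Step 1 (sample covariance).} With $r_\tau=c\,z_\tau$ and a single $c$ for the whole window, the window mean satisfies $\bar r=c\,\bar z$. The deviations therefore satisfy $\tilde r_\tau=c\,\tilde z_\tau$, and
\[
S_r=(M-1)^{-1}\sum_\tau \tilde r_\tau\tilde r_\tau^\top=c^2S_z .
\]
Demeaning does not break this, because $c$ is constant across $\tau$. No property of the joint law of $(c,\{z_\tau\})$ is used, so arbitrary dependence is allowed.

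\textbf{Step 2 (the functional).} Scale invariance with $b=c^2$ gives $W(S_r)=W(S_z)$ for every realization. Hence $W(S_r)$ is literally the same random variable as $W(S_z)$, and its law cannot depend on the distribution of $c$. Here I will use scale invariance for all $b>0$, as holds for $f_1,\AR_K,H$ by the remark in \S\ref{sec:setting}, rather than only for $b$ near $1$. If only the local version were assumed, I would extend it along the ray by connectedness, as in Proposition~\ref{prop:scale-immunity}(iii).

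\textbf{Step 3 (the monitor).} Assume $\mathcal A(c^2S)=c^2\mathcal A(S)$. Then $C_r=c^2C_z$ has the same eigenvectors as $C_z$ and eigenvalues multiplied by $c^2>0$. The ordering is therefore preserved, and the top-$K$ projector is unchanged; this is the same argument as in \S\ref{sec:rot-equiv}. Since both windows share the common scale, $\widehat P_{K,t-1}$ and $\widehat P_{K,t}$ are both unchanged, and so is $\widehat D_{K,t}$.

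The gaps $\widehat\Delta_{k,t}$ all scale by $c^2$. The argmax $\widehat K_t$ (with the same tie-breaking) is therefore unchanged, and so are the rank-change events.

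The error budget in the real-data form, $\|C_t-S_t\|_{\op}$, scales by $c^2$. In simulation the relevant population covariance is $c^2\Sigma_z$, so $\eta_t=\|C_t-c^2\Sigma_z\|_{\op}$ and $\|E_t\|_F$ also scale by $c^2$. Every term $T_{K,s}$ in \eqref{eq:dk-term}, and its observable version with $(\widehat\Delta_{K,s}-2\eta_s)_+$, depends on these quantities only through ratios that are homogeneous of degree zero, together with the constant cap $\sqrt{2K}$. Hence $\tau^*_{K,t}$ and the flag $\mathbb 1\{\widehat D_{K,t}>\alpha\tau^*_{K,t}\}$ are invariant.

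\textbf{Main obstacle.} The only delicate point is interpretive. Taken literally, $\widehat\Delta_{K,t}$ and $\eta_t$ are scale-equivariant (they scale by $c^2$), not invariant. I will therefore state precisely that they scale jointly by $c^2$, so that their ratios, and with them the band and the flags, are invariant. I will also make explicit that the same $c$ must apply to both windows: this holds automatically for the shared $M-s$ observations, and the statement must assume it for the fresh ones.
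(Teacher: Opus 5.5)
Your proposal is correct and follows essentially the same route as the paper: the common scale survives demeaning, so $S_r=c^2S_z$ pathwise, and scale invariance gives $W(S_r)=W(S_z)$ without using the joint law of $(c,\{z_\tau\})$; for the monitor, $\mathcal{A}(c^2S)=c^2\mathcal{A}(S)$ leaves the eigenvectors (hence $\widehat D_{K,t}$) unchanged while $\eta_s$ and $\widehat\Delta_{K,s}$ scale jointly by $c^2$, so the degree-zero band terms and the flags are invariant. Your extras (invariance of $\widehat K_t$, the simulation-mode $\eta_t$, and noting that $\widehat\Delta_{K,t},\eta_t$ are equivariant rather than literally invariant) make explicit what the paper's proof states more loosely, and the one thing you take as an assumption rather than argue is the paper's brief justification that linear shrinkage and QIS are in fact scale-equivariant.
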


\noindent\textit{Proof.} Appendix~\ref{app:exact-pivot}.
This exactness explains two empirical regularities, the procedures' silence under the
pure scale-change design of \S\ref{sec:tier-power} (finding~v) and the robustness
of absorption-ratio readings to volatility \emph{regimes} as opposed to correlation
changes.
Day-varying scales ($c_\tau$ changing within the window, i.e.\ volatility
clustering) break the exact invariance; that residual is what the moment-based
$\widehat\kappa$ absorbs on panels (\S\ref{sec:panel-inference}).

The immunity direction (ii) has classical roots.
\citet{Tyler1983} showed that
scale-invariant functions of scatter estimators have asymptotic distributions
depending on the elliptical family only through a single scalar, and
\citet{ShapiroBrowne1987} proved the corresponding robustness of normal-theory
methods for scale-invariant covariance-structure models (the kurtosis-corrected
sphericity test of \citealt{Muirhead1982} is an early instance); in the
shape-matrix literature, \citet{Paindaveine2008} showed that the
determinant-normalized shape component is the canonical one for which shape
and scale inference decouple asymptotically in elliptical families.
For the weakly identified case, \citet{PaindaveinePeralvoVerdebout2026} show
that multivariate rank tests for a hypothesized leading eigenvector of an
elliptical scatter matrix retain size and efficiency as the two leading
eigenvalues coalesce, the limit in which the band of \S\ref{sec:theory-prop}
saturates; their setting is a fixed direction tested from a single sample,
whereas the calibration here concerns movement of an estimated subspace
between overlapping windows.
We add the converse characterization (iii), by which immunity holds \emph{only}
on the scale-invariant class, the explicit instantiation for the spectral functionals
used in risk monitoring, the transfer to sequential projector movement
(Corollary~\ref{cor:dhat-ell}), and the exact finite-sample pivotality of
Proposition~\ref{prop:exact-pivot}.

\subsection{Confidence intervals for the absorption ratio and leading share}
\label{sec:ar-ci}

The immunity of Proposition~\ref{prop:scale-immunity} yields a confidence interval
for the absorption ratio, and for the leading share, whose only elliptical input
is the scalar $\kappa$.
Applying Proposition~\ref{prop:scale-immunity}(ii) with the gradients of
Lemma~\ref{lem:ar-orth} gives the limiting variances.

\begin{proposition}[Elliptical CLT for scale-invariant functionals]
\label{prop:ar-clt}
Fix $N$ and let $M\to\infty$ for i.i.d.\ elliptical returns obeying
\eqref{eq:ell-fourth-moment}.
Then by Proposition~\ref{prop:scale-immunity}(ii) and Lemma~\ref{lem:ar-orth},
\[
  \sqrt{M}\bigl(\widehat{\AR}_K-\AR_K\bigr)\;\to_d\;
  N\Bigl(0,\;\frac{2(1+\kappa)}{T^4}
  \bigl[(T-S_K)^2\textstyle\sum_{i\le K}\lambda_i^2+S_K^2\sum_{i>K}\lambda_i^2\bigr]\Bigr),
\]
and likewise
\[
  \sigma^2_{f_1}=\frac{2(1+\kappa)}{T^2}
  \Bigl[(1-f_1)^2\lambda_1^2+f_1^2\textstyle\sum_{i>1}\lambda_i^2\Bigr],
  \qquad
  \sigma^2_{H}=\frac{2(1+\kappa)}{T^2}
  \textstyle\sum_i\lambda_i^2\bigl(\log p_i+H\bigr)^2.
\]
The $\kappa\lambda_i\lambda_j$ cross-term vanishes identically for all three, so
scale-invariant functionals are first-order immune to common volatility
shocks, which is the mechanism that makes Wald intervals usable on heavy-tailed
returns.
\end{proposition}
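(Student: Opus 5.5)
The plan is to instantiate Proposition~\ref{prop:scale-immunity}(ii) with the gradients of Lemma~\ref{lem:ar-orth}. Since $\langle G,\Sigma\rangle=0$ for each of the three functionals, the limiting variance reduces to $\sigma^2_W=2(1+\kappa)\,\mathrm{tr}[(G\Sigma)^2]$, and the $\kappa\lambda_i\lambda_j$ cross-term drops out. What remains is to evaluate $\mathrm{tr}[(G\Sigma)^2]$ in the population eigenbasis. Each gradient in Lemma~\ref{lem:ar-orth} is a spectral function of $\Sigma$, that is, $G=\sum_i g_iu_iu_i^\top$ with the same eigenvectors as $\Sigma$. So $G\Sigma=\sum_i g_i\lambda_iu_iu_i^\top$ and $\mathrm{tr}[(G\Sigma)^2]=\sum_i g_i^2\lambda_i^2$.

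The order of steps is as follows.

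\textbf{Step one.} Check that the delta method applies. Proposition~\ref{prop:scale-immunity} requires $W$ to be $C^1$ near $\Sigma$. For $\AR_K$ this follows from the gap $\lambda_K>\lambda_{K+1}$, via smoothness of $\mathrm{tr}(P_K C)$, as noted in Lemma~\ref{lem:ar-orth}. For $f_1$ it follows from simplicity of $\lambda_1$, and for $H$ from positivity of the eigenvalues. Under these conditions the spectral functions are differentiable with the stated gradients. I take this as given by the lemma.

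\textbf{Step two.} Compute the eigenvalue coefficients and sum.

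For $\AR_K$, the coefficients are $g_i=(T-S_K)/T^2$ for $i\le K$ and $g_i=-S_K/T^2$ for $i>K$. Therefore
\[
\sum_i g_i^2\lambda_i^2=T^{-4}\Bigl[(T-S_K)^2\sum_{i\le K}\lambda_i^2+S_K^2\sum_{i>K}\lambda_i^2\Bigr],
\]
which is the stated variance.

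For $f_1$, the coefficients are $g_1=(T-\lambda_1)/T^2=(1-f_1)/T$ and $g_i=-\lambda_1/T^2=-f_1/T$ for $i>1$. This gives $T^{-2}\bigl[(1-f_1)^2\lambda_1^2+f_1^2\sum_{i>1}\lambda_i^2\bigr]$.

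For $H$, the coefficients are $g_i=-(\log p_i+H)/T$, which gives $T^{-2}\sum_i\lambda_i^2(\log p_i+H)^2$.

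Multiplying each sum by $2(1+\kappa)$ yields the three displayed variances.

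\textbf{Step three.} Confirm the vanishing of the cross-term directly as a sanity check, namely $\sum_i g_i\lambda_i=0$ in each case.
\begin{itemize}
\item For $\AR_K$: $(T-S_K)S_K-S_K(T-S_K)=0$.
\item For $f_1$: $(1-f_1)\lambda_1-f_1(T-\lambda_1)=\lambda_1-f_1T=0$.
\item For $H$: $-\sum_i p_i(\log p_i+H)=H-H=0$.
\end{itemize}

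The main obstacle is not computational. It lies in the hypotheses of Proposition~\ref{prop:scale-immunity}(i): one needs the fourth-moment limit \eqref{eq:ell-fourth-moment} together with the sample-covariance CLT, and $W$ must be differentiable at $\Sigma$ even when there are ties. For $\AR_K$ with ties inside the top block, the eigenvalue-coordinate gradient is not defined individually. I would handle this by working with the matrix gradient $P_K/T-(S_K/T^2)I$, whose trace computation is unaffected by ties within the block. For $H$, I expect a symmetric-function argument to handle multiplicities in the same way.
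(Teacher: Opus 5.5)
Your proposal is correct and follows the paper's own argument: Appendix~\ref{app:ar-orth} likewise applies Proposition~\ref{prop:scale-immunity}(ii) with the gradients of Lemma~\ref{lem:ar-orth}, uses that $G\Sigma$ is diagonal in the population eigenbasis, and evaluates $\mathrm{tr}[(G\Sigma)^2]=\sum_i g_i^2\lambda_i^2$. Your eigenvalue coefficients for $\AR_K$, $f_1$ and $H$, and your checks that $\sum_i g_i\lambda_i=0$ in each case, are all correct.
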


In the $\AR_K$ variance the top-block eigenvalues enter with weight $(T-S_K)^2$
and the bulk eigenvalues with weight $S_K^2$, the squared leverages of the two
error terms in the expansion of \S\ref{sec:theory-functional}, and the variance
vanishes as $\AR_K$ approaches $0$ or $1$. The interval uses the plug-in
spectrum and the single estimated scalar $\widehat\kappa$; the behavior of that
estimate under heavy tails is the practical constraint.

\begin{remark}[Estimating $\kappa$ by radial MLE]
\label{rem:kappa-mle}
Moment-based kurtosis estimates are biased \emph{downward} for heavy tails
(anti-conservative).
We instead fit a multivariate-$t$ tail index $\nu$ by maximum likelihood on the
cross-sectional radius $\|r_t\|^2$, with a Satterthwaite effective-dimension correction
$d_{\mathrm{eff}}=(\mathrm{tr}\,S)^2/\mathrm{tr}(S^2)$ for spiked spectra, and set
$\widehat\kappa=2/(\widehat\nu-4)$ (capped; $0$ for $\widehat\nu>100$).
In MC at $N{=}115$, $M{=}252$ this is nearly unbiased at $t_{12}$ ($0.24$ vs $0.25$),
mildly conservative at $t_8$ ($0.64$ vs $0.50$), and \emph{conservative} for
extreme tails ($t_5$), the safe direction; with it, $t_5$ coverage of the fixed-$N$
interval reaches $90\%$ at $N{=}30$ and $88\%$ at $N{=}115$
(Table~\ref{tab:ar-coverage}).
\end{remark}

\begin{proof}[Proof of Proposition~\ref{prop:ar-clt}]
Immediate from
Proposition~\ref{prop:scale-immunity}(ii) with the gradients of
Lemma~\ref{lem:ar-orth}; the variance algebra is in
Appendix~\ref{app:ar-orth}.
\end{proof}

In practice the interval for a scale-invariant functional
$g\in\{\AR_K,f_1\}$ is $g(C)\pm z_{1-\alpha/2}\,\widehat\sigma_g/\sqrt{M-1}$, where
$z_{1-\alpha/2}$ is the standard-normal quantile and $\widehat\sigma_g^2$ is the
plug-in of the asymptotic variance stated in Proposition~\ref{prop:ar-clt}; its only
elliptical input is the single estimated scalar $\widehat\kappa$.

\subsection{The absorption ratio in high dimensions}
\label{sec:ar-hd-caution}

In high dimensions the absorption-ratio interval needs more than the elliptical
correction, and cleaning the covariance does not supply it.
At $N/M\approx 0.5$ the plug-in $\widehat{\AR}_K$ from sample eigenvalues is biased
upward by Marchenko--Pastur spreading, and plugging in QIS-cleaned eigenvalues
makes coverage \emph{worse}, $26\%$ in the Gaussian simulation, because
Frobenius-optimal shrinkage intentionally over-shrinks spikes relative to
population values, which is optimal for matrix loss but biased for spectral
functionals.
The appropriate correction inverts the spiked-model bias map
$\widehat\lambda=\lambda\bigl(1+c\,\sigma^2/(\lambda-\sigma^2)\bigr)$
\citep{Paul2007,BaikSilverstein2006} for eigenvalues above the bulk edge,
returns the spike inflation to the bulk so that the debiased spectrum is
trace-preserving (Proposition~\ref{prop:wedge}(iv)), and applies a chain-rule
adjustment to the delta-method variance.
Table~\ref{tab:ar-coverage} shows coverage at $N{=}115$, $M{=}252$ improving from $90\%$
(raw) to $96\%$ (debiased) for Gaussian data, with the mean bias falling from
$+0.0076$ to $+0.0003$.
Under elliptical heavy tails in high dimensions the MP edge itself is distorted; with
the radial-MLE kurtosis estimate the debiased intervals recover $91$--$92\%$ coverage
at $t_5$ (from $46\%$ with the Gaussian formula and $61\%$ with moment-based
$\widehat\kappa$ at $N{=}115$), still short of nominal; fully calibrated AR
inference for elliptical high-dimensional data remains an open problem.
Our population target also differs from the sample-principal-component estimand
of \citet{PerryPanigrahiBienWitten2025}, whose inference conditions on the
observed singular vectors.

The sample (Marchenko--Pastur) and cleaning biases of $\widehat{\AR}_K$ admit a
sharp asymptotic description in Proposition~\ref{prop:wedge}. In the spiked model
the plug-in absorption ratio from the sample covariance and
from \emph{any} Frobenius-optimal rotation-equivariant estimator both converge
to wrong values, with explicit and oppositely signed wedges of the same leading
order, while only spike debiasing is consistent.

\begin{proposition}[Cleaning--debiasing wedge]
\label{prop:wedge}
Let $\Sigma_N=\sigma^2I_N+\sum_{i\le K'}(\lambda_i-\sigma^2)v_iv_i^\top$ with
fixed distinct spikes
$\lambda_1>\cdots>\lambda_{K'}>\sigma^2(1+\sqrt c)$, $K\le K'$ fixed, and let
$N/M\to c\in(0,1)$ with Gaussian observations.
Write $S_K=\sum_{i\le K}\lambda_i$,
$\psi(\lambda)=\lambda\bigl(1+c\sigma^2/(\lambda-\sigma^2)\bigr)$, and let
$\alpha_i^2=\bigl(1-c/\rho_i^2\bigr)/\bigl(1+c/\rho_i\bigr)$ with
$\rho_i=(\lambda_i-\sigma^2)/\sigma^2$ denote the limiting sample--population
eigenvector overlap \citep{Paul2007,BenaychGeorgesNadakuditi2011}.
\begin{enumerate}[(i)]
  \item \textbf{(Sample, upward wedge.)}
  $\widehat{\AR}_K(S)/\AR_K(\Sigma_N)\to_{a.s.}1+\Delta_S$ with
  \[
    \Delta_S=\frac{c\,\sigma^2}{S_K}\sum_{i\le K}\frac{\lambda_i}{\lambda_i-\sigma^2}>0 .
  \]
  \item \textbf{(Frobenius oracle, downward wedge.)} The finite-sample
  Frobenius-optimal rotation-equivariant estimator
  $D^*=\sum_i(u_i^\top\Sigma u_i)\,u_iu_i^\top$ \citep{LedoitPeche2011} is
  exactly trace-preserving, its spike values converge to
  $\lambda_i\alpha_i^2+\sigma^2(1-\alpha_i^2)$, and
  \[
    \widehat{\AR}_K(D^*)/\AR_K(\Sigma_N)\to_{a.s.}1-\Delta_O,
    \qquad
    \Delta_O=\frac{c\,\sigma^2}{S_K}\sum_{i\le K}
    \frac{\lambda_i}{\lambda_i-\sigma^2+c\,\sigma^2}>0,
  \]
  using the identity
  $(\lambda-\sigma^2)(1-\alpha^2)=c\sigma^2\lambda/(\lambda-\sigma^2+c\sigma^2)$;
  hence $0<\Delta_O<\Delta_S$ term by term.
  \item \textbf{(Debiasing works.)} The trace-preserving spike-debiased plug-in
  \[
    \widehat{\AR}{}^{\mathrm{deb}}_K
    =\frac{\sum_{i\le K}\psi^{-1}(\widehat\lambda_i)}{\mathrm{tr}\,S}
  \]
  ($\psi^{-1}$ applied above the edge, with the spike inflation returned to the
  bulk so that the debiased spectrum sums exactly to $\mathrm{tr}\,S$) is
  strongly consistent for $\AR_K(\Sigma_N)$.
  Because $\AR_K(\Sigma_N)=S_K/\mathrm{tr}\,\Sigma_N=O(1/N)$ in this regime, the
  central limit theorem is stated for the \emph{relative} error. The numerator
  $\sum_{i\le K}\psi^{-1}(\widehat\lambda_i)$ is $\sqrt M$-asymptotically normal
  around $S_K$, the denominator $\mathrm{tr}\,S/\mathrm{tr}\,\Sigma_N\to_{a.s.}1$
  with fluctuation negligible at the $\sqrt M$ scale, and hence
  \[
    \sqrt M\Bigl(\widehat{\AR}{}^{\mathrm{deb}}_K/\AR_K(\Sigma_N)-1\Bigr)
    \;\to_d\;N\bigl(0,\,V_{\mathrm{rel}}\bigr),
  \]
  with $V_{\mathrm{rel}}$ of order one (the wedges $\Delta_S,\Delta_O$ above are
  the corresponding deterministic relative biases).
\end{enumerate}
\end{proposition}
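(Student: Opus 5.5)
The plan is to reduce all three parts to the classical almost-sure spiked-model limits and then do deterministic algebra on the ratio. I will use three facts for fixed spikes above the BBP threshold $\sigma^2(1+\sqrt c)$ with $N/M\to c\in(0,1)$ and Gaussian data \citep{BaikSilverstein2006,Paul2007,BenaychGeorgesNadakuditi2011}. First, $\widehat\lambda_i\to_{a.s.}\psi(\lambda_i)$ for $i\le K'$. Second, $|u_i^\top v_i|^2\to_{a.s.}\alpha_i^2$, with $u_i^\top v_j\to0$ for $j\ne i$. Third, the trace satisfies $\mathrm{tr}\,S/\mathrm{tr}\,\Sigma_N\to_{a.s.}1$; this holds by the strong law, since $\mathrm{tr}\,S$ is a normalized chi-square-type sum with $\mathrm{Var}(\mathrm{tr}\,S)=O(N/M)$ relative to $(\mathrm{tr}\,\Sigma_N)^2=O(N^2)$.

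Since $\AR_K(\Sigma_N)=S_K/\mathrm{tr}\,\Sigma_N$ and every estimator considered has trace asymptotically equal to $\mathrm{tr}\,\Sigma_N$, each ratio $\widehat{\AR}_K/\AR_K(\Sigma_N)$ reduces to (top-$K$ eigenvalue sum of the estimator)$/S_K$, times a factor tending to $1$.

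For part (i), the numerator tends to $\sum_{i\le K}\psi(\lambda_i)$. The identity $\psi(\lambda)-\lambda=c\sigma^2\lambda/(\lambda-\sigma^2)$ then gives $\Delta_S$ directly, and positivity is immediate.

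For part (ii), I first check exact trace preservation: $\mathrm{tr}\,D^*=\sum_i u_i^\top\Sigma u_i=\mathrm{tr}\,\Sigma$ because $\{u_i\}$ is an orthonormal basis. For the spike values, I expand $u_i^\top\Sigma u_i=\sigma^2+\sum_{j\le K'}(\lambda_j-\sigma^2)(u_i^\top v_j)^2$ and pass to the limit $\sigma^2+(\lambda_i-\sigma^2)\alpha_i^2=\lambda_i\alpha_i^2+\sigma^2(1-\alpha_i^2)$. The deficit from $\lambda_i$ is therefore $(\lambda_i-\sigma^2)(1-\alpha_i^2)$.

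The stated identity is an algebraic check. Substituting $\rho=(\lambda-\sigma^2)/\sigma^2$, we get $1-\alpha^2=(c/\rho+c/\rho^2)/(1+c/\rho)=c(\rho+1)/(\rho(\rho+c))$. Multiplying by $\sigma^2\rho$ gives $c\sigma^2(\rho+1)/(\rho+c)=c\sigma^2\lambda/(\lambda-\sigma^2+c\sigma^2)$. This yields $\Delta_O$, and the comparison $\Delta_O<\Delta_S$ termwise follows from $c\sigma^2>0$ in the denominator. One point must be justified: eigenvalue ordering of $D^*$ may differ from that of $S$. Because the limiting spike values are distinct and exceed $\sigma^2$ while the bulk values of $D^*$ concentrate near $\sigma^2$, the top-$K$ of $D^*$ eventually coincide with $u_1,\dots,u_K$ almost surely.

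For part (iii), consistency follows from continuity of $\psi^{-1}$ on $(\sigma^2(1+\sqrt c)^2,\infty)$, the image of the spikes under $\psi$ whose lower end is the bulk edge. This gives $\psi^{-1}(\widehat\lambda_i)\to\lambda_i$. Returning the inflation to the bulk only alters the bulk, not the top-$K$ numerator, and the denominator is $\mathrm{tr}\,S$ with ratio to $\mathrm{tr}\,\Sigma_N$ tending to $1$. The $\sigma^2$ and $c$ entering $\psi^{-1}$ are either known or replaced by consistent estimates; I will note that the bulk-median estimator of $\sigma^2$ is consistent.

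For the CLT, I invoke Paul's (2007) joint Gaussian fluctuation $\sqrt M(\widehat\lambda_i-\psi(\lambda_i))\Rightarrow N$ for distinct spikes and apply the delta method through $\psi^{-1}$, which has nonzero derivative above the edge. The denominator fluctuation is $\mathrm{tr}\,S/\mathrm{tr}\,\Sigma_N-1=O_p(1/\sqrt{NM})$, which is $o_p(M^{-1/2})$, so it is negligible, and Slutsky finishes the argument.

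The main obstacle I expect is the $\sqrt M$ bias in part (iii). Replacing $c$ by $N/M$ and any plug-in estimate of $\sigma^2$ must not introduce $O(M^{-1/2})$ bias in $\psi^{-1}(\widehat\lambda_i)$. Known results give $N/M-c$ rates only if one uses $c_N=N/M$ itself, so I will define $\psi$ with $c_N$ and rely on Paul's centering at $\psi_{c_N}(\lambda_i)$. I will then require a $\sqrt M$-consistent $\widehat\sigma^2$, which is available since the bulk average has $O_p(1/\sqrt{NM})$ error.
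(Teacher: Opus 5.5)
Your proposal is correct and follows essentially the same route as the paper's proof. Both arguments use the almost-sure spiked-model limits for the spike eigenvalues and overlaps, concentration of $\mathrm{tr}\,S/\mathrm{tr}\,\Sigma_N$ via a variance bound plus Borel--Cantelli, exact trace preservation of $D^*$ with the same $\rho$-algebra for $\Delta_O$, and, for (iii), continuous mapping plus a spiked-eigenvalue CLT (the paper cites Bai--Yao rather than Paul), the delta method through $\psi^{-1}$, and an $O_p(N^{-1/2})$ denominator fluctuation at the $\sqrt M$ scale.

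You also make a point the paper leaves implicit. With the limit $c$ in $\psi$, the CLT centering picks up a non-negligible $\sqrt M(N/M-c)$ bias unless $N/M-c=o(M^{-1/2})$, so building $\psi$ with $c_N=N/M$ is the right fix. One gap in your $D^*$ ordering step when $K<K'$: it needs the limiting oracle spike map $\sigma^2+\sigma^2(\rho^2-c)/(\rho+c)$ to be increasing in $\rho$, which it is, not merely that its values are distinct.
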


\noindent\textit{Proof.} Appendix~\ref{app:wedge}, together with the
strong-spike limit of the two wedges, the finite-$N$ cost of dropping trace
preservation, and the underlying zero-wedge characterization. Within
trace-preserving rotation-equivariant estimators with a deterministic spike
map, the wedge vanishes only when the map is the identity above the bulk edge,
so Frobenius-optimal cleaning is never functional debiasing.
The practical implication for the monitor is direct.
The absorption ratio must be computed from spike-debiased eigenvalues;
computing it from the cleaned matrix that drives the projector replaces one
known bias by another of the opposite sign.

\citet{DonohoGavishJohnstone2018} established that the optimal eigenvalue
shrinker depends on the choice of matrix loss; Proposition~\ref{prop:wedge} is
a scalar-functional counterpart for the ratio statistic that risk monitors
report.
The wedge formulas are quantitatively sharp at the scale of our designs and are
verified across concentrations in \S\ref{sec:mc-si-coverage}; estimation of
$\sigma^2$ and $c$ for the debiasing map follows \citet{PassemierLiYao2017}.

\section{Simulation studies}
\label{sec:mc}

The simulations use factor designs with a known population covariance $\Sigma$, so
every estimate is compared against the truth rather than a proxy.
Returns are generated from factor models at $N{=}115$ unless stated otherwise,
with baseline replication counts of $10^3$ at $M{=}252$ and $500$ replications
in robustness cells, under fixed seeds.

\subsection{Coverage of the perturbation band}
\label{sec:mc-coverage}

The Davis--Kahan bound depends on raw gaps $\Delta_{K,t}$; we contrast weak-gap
and strong-gap factor designs (loadings scaled toward $\lambda_1\approx\lambda_2$
or separated eigenvalues).
With $\eta_t=\|C_t-\Sigma_t\|_{\op}$ and population $\Delta_{K,t}$ in the capped band
$\tau^*_{K,t}$, Table~\ref{tab:mc-monitor} reports that
$|\widehat D_{K,t}-D_{K,t}|\le\tau^*_{K,t}$ on 100\% of simulated dates, together with
\emph{tightness ratios} $|\widehat D_{K,t}-D_{K,t}|/\tau^*_{K,t}$ (median and 95th
percentile) so that coverage is interpretable.
The uncapped band has a mean exceeding the maximum possible movement $\sqrt{2K}$ by
one to two orders of magnitude in these designs; the capped band is on the scale of
the monitored statistic.
At this $N/M$ the per-date Davis--Kahan terms saturate their caps on most dates
(weak identification), which is precisely the regime message of
Remark~\ref{rem:cdk}.

\begin{table}[!ht]
  \centering
  \caption{Simulated projector monitoring with the capped band $\tau^*_{K,t}$
  (population $\Delta_{K,t}$; ``Mean uncapped'' is the legacy
  $2\sqrt{2K}(\eta_t/\Delta_{K,t}+\eta_{t-1}/\Delta_{K,t-1})$, vacuous relative to the
  maximum movement $\sqrt{2K}{=}2$; tightness $=|\widehat D-D|/\tau^*$).}
  \label{tab:mc-monitor}
  \footnotesize
  \resizebox{\textwidth}{!}{\begin{tabular}{lrrrrrrr}
\toprule
Gap & Phase & Mean $\widehat D_{K,t}$ & Mean $\tau^*_{K,t}$ & Mean uncapped &
Med.\ / p95 tightness & Share bound holds & Share $\widehat D_{K,t}>\tau^*_{K,t}$ \\
\midrule
Strong & Calm & 0.64 & 2.00 & 92.7 & 0.27 / 0.64 & 100.0\% & 0.0\% \\
Strong & Break & 0.64 & 2.00 & 60.5 & 0.28 / 0.61 & 100.0\% & 0.0\% \\
Weak & Calm & 0.73 & 2.00 & 52.1 & 0.33 / 0.66 & 100.0\% & 0.0\% \\
Weak & Break & 0.70 & 2.00 & 125.1 & 0.31 / 0.64 & 100.0\% & 0.0\% \\
\bottomrule
\end{tabular}
}
\end{table}

\subsection{Calibration of $\alpha\tau^*_{K,t}$}
\label{sec:mc-alpha}

Table~\ref{tab:mc-alpha} and Figure~\ref{fig:mc-monitor-alpha} report
$\mathbb{P}(\widehat D_{K,t}>\alpha\tau^*_{K,t})$ by gap regime.
The flag rate is decreasing in $\alpha$; the operational scale $\alpha^\star$ is the
smallest grid value whose simulated calm-period rate stays $\le 5\%$
(Remark~\ref{rem:alpha-scaled-tau}), here $\alpha^\star{=}0.75$.
These rates document conservativeness of the scaled band, not power to detect latent
population movement.

\begin{table}[!ht]
  \centering
  \caption{Simulated flag rates $\mathbb{P}(\widehat D_{K,t}>\alpha\tau^*_{K,t})$ by gap regime.}
  \label{tab:mc-alpha}
  \footnotesize
  \begin{tabular}{lrrrrrr}
\toprule
Gap & Phase & $\alpha{=}0.25$ & $0.50$ & $0.75$ & $0.90$ & $1.00$ \\
\midrule
Strong & Calm & 62.5\% & 11.4\% & 0.0\% & 0.0\% & 0.0\% \\
Strong & Break & 65.4\% & 11.4\% & 0.0\% & 0.0\% & 0.0\% \\
Weak & Calm & 83.4\% & 15.9\% & 0.0\% & 0.0\% & 0.0\% \\
Weak & Break & 82.4\% & 13.1\% & 0.0\% & 0.0\% & 0.0\% \\
\bottomrule
\end{tabular}

\end{table}

\begin{figure}[!ht]
  \centering
  \includegraphics[width=0.72\linewidth]{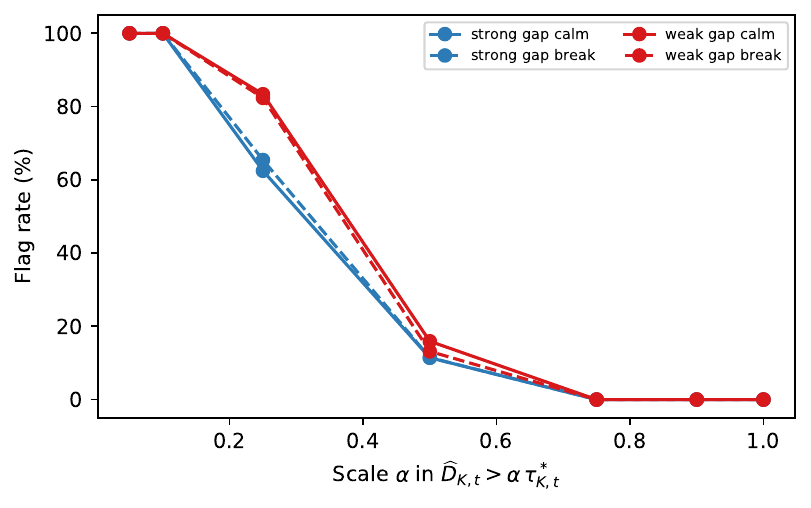}
  \caption{Simulated flag rate vs.\ $\alpha$ (known $\Sigma_t$, population $\tau^*_{K,t}$).}
  \label{fig:mc-monitor-alpha}
\end{figure}

\subsection{Functional-risk separation}
\label{sec:mc-functional}

We simulate $K_f=5$ factors, $N=115$, window $M{=}252$, and
$C_t\in\{S_t,\mathrm{QIS},\mathrm{LW},\text{MP-clip}\}$ (sample $S_t$ as baseline).
Table~\ref{tab:mc} reports mean population functional errors ($10^3$ replications);
Figure~\ref{fig:mc-functional} plots the same run.

\begin{table}[!ht]
  \centering
  \caption{Simulated mean error of the spectral functionals ($\Sigma$ known,
  $N{=}115$, $M{=}252$, $10^3$ reps; seed 42). $C=\mathcal{A}(S)$ is the cleaned
  (shrinkage) estimate; $\|\cdot\|_{\op}$ is the operator-norm matrix error;
  $\AR_K$ is the top-$K$ absorption ratio (variance share of the top $K$ eigenvalues)
  and $f_1$ the leading-eigenvalue share.}
  \label{tab:mc}
  \footnotesize
  \begin{tabular}{lrrr}
\toprule
Estimator & $\|C-\Sigma\|_{\op}$ & $|\AR_K$ err$|$ & $|f_1$ err$|$ \\
\midrule
Sample $S$ & $1.27e-04$ & 0.0877 & 0.0078 \\
QIS & $8.82e-05$ & 0.0155 & 0.0048 \\
LW & $1.03e-04$ & 0.0326 & 0.0115 \\
MP-clip & $9.85e-05$ & 0.0080 & 0.0009 \\
\bottomrule
\end{tabular}

  \par\vspace{0.5em}
  \parbox{\linewidth}{\footnotesize Sample $S_t$ has largest $\|C-\Sigma\|_{\op}$ and $|\AR_K|$ error.
  95\% MC CIs ($\pm 1.96\times$SE, $10^3$ reps) are QIS
  $\|C-\Sigma\|_{\op}$ $[8.76,8.88]\times 10^{-5}$;
  $|\AR_K|$ $[0.0153,0.0158]$;
  $|f_1|$ $[0.0047,0.0049]$.
  QIS beats LW on $\|\cdot\|_{\op}$ and $\AR_K$ in 100\% of $10^3$ replications.
  MP-clip attains the lowest mean $\AR_K$ and $f_1$ errors in this run, while QIS dominates LW
  on operator norm and $\AR_K$, illustrating Propositions~\ref{prop:ar-expansion}
  and~\ref{prop:matrix-vs-ar}, by which matrix-risk and ratio-functional rankings need not agree.}%
\end{table}

Figure~\ref{fig:mc-functional} visualizes the same means, with QIS improving on LW for
operator norm and $|\AR_K|$ error, while MP-clip attains the lowest $|\AR_K|$ and $|f_1|$
means in this DGP, a concrete instance of functional-risk separation.
\begin{figure}[!ht]
  \centering
  \includegraphics[width=\linewidth]{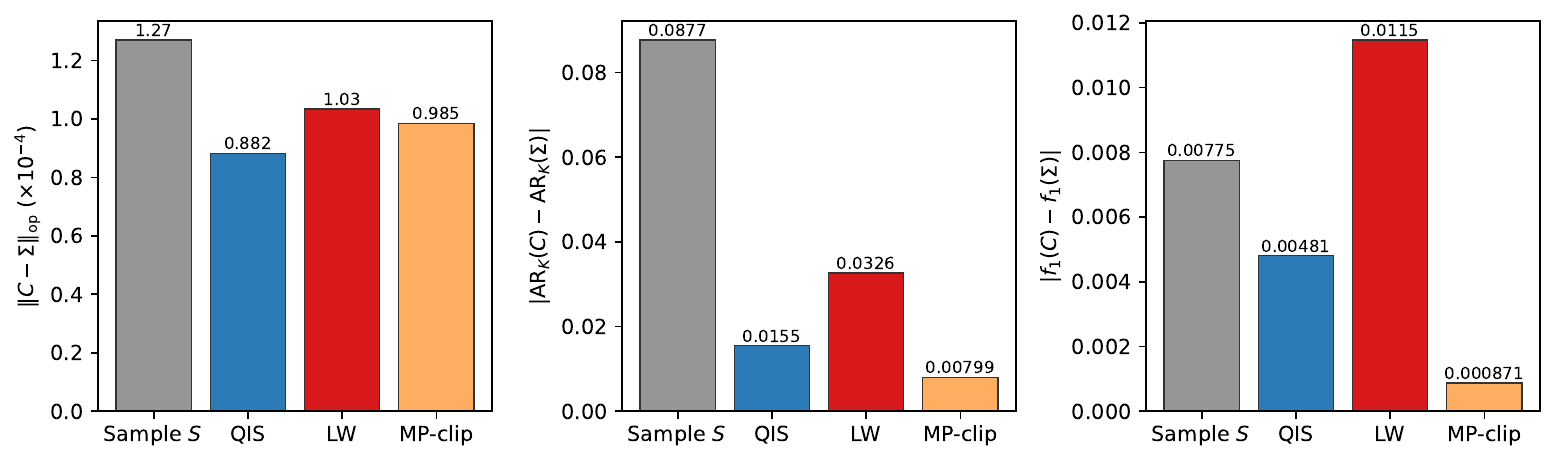}
  \caption{Simulated mean functional error by estimator (known $\Sigma$, $N{=}115$,
  $M{=}252$, $10^3$ replications; same run as Table~\ref{tab:mc}). Operator norm panel scaled
  by $10^4$ for readability.}
  \label{fig:mc-functional}
\end{figure}

\paragraph{Sensitivity to the shrinkage estimator.}
Table~\ref{tab:shrinkage-sensitivity} repeats the functional-error comparison across
shrinkage \emph{intensities}, namely sample covariance, linear shrinkage (LW),
oracle-approximating shrinkage (OAS, \citealp{ChenWieselHeroEldar2010}), nonlinear
QIS, and MP-clip. The two linear estimators are essentially indistinguishable (LW and
OAS agree to within Monte Carlo error on every metric), and the qualitative conclusion
is stable across intensity, with nonlinear QIS minimizing operator-norm error while MP-clip
minimizes the ratio-functional errors, so the matrix-risk and functional-risk rankings
separate regardless of the shrinkage map
(Propositions~\ref{prop:ar-expansion} and~\ref{prop:matrix-vs-ar}).

\begin{table}[!ht]
  \centering
  \caption{Sensitivity to the shrinkage estimator, simulated mean functional error
  ($\Sigma$ known, $N{=}115$, $M{=}252$, $5$ factors, $K{=}5$, $10^3$ reps; seed 42).
  $C=\mathcal{A}(S)$ is the cleaned estimate, $\|\cdot\|_{\op}$ the operator norm,
  $\AR_K$ the top-$K$ absorption ratio, and $f_1$ the leading-eigenvalue share.
  LW and OAS coincide to MC error; the operator-norm vs.\ ratio-functional ranking
  separation holds for every estimator.}
  \label{tab:shrinkage-sensitivity}
  \footnotesize
  \begin{tabular}{lrrr}
\toprule
Estimator & $\|C-\Sigma\|_{\op}$ & $|\AR_K(C)-\AR_K(\Sigma)|$ & $|f_1(C)-f_1(\Sigma)|$ \\
\midrule
Sample $S$ & 1.270e-04 & 0.0302 & 0.0078 \\
LW (linear) & 1.034e-04 & 0.0450 & 0.0115 \\
OAS & 1.038e-04 & 0.0453 & 0.0116 \\
QIS (nonlinear) & 8.818e-05 & 0.0259 & 0.0048 \\
MP-clip & 9.845e-05 & 0.0084 & 0.0009 \\
\bottomrule
\end{tabular}

\end{table}

\paragraph{Sensitivity to $N/M$.}
Varying the window so that $N/M\in\{0.25,0.5,1.0,1.5\}$ at $N{=}115$, QIS wins on
$\|\cdot\|_{\op}$ and $|\AR_K|$ error in every replication at all $M$ tested,
across both weak- and strong-gap designs.
On simulated paths stratified by $\gaptwelve$, Figure~\ref{fig:mc-gap} shows the
standard deviation of the leading-vector angle increment falling from $0.092$
(low-gap tertile) to $0.003$ (high-gap), while that of the subspace-distance
increment stays near $0.32$--$0.36$, so the subspace distance tracks the
$K$-dimensional state when the leading vector is poorly identified.

\begin{figure}[!ht]
  \centering
  \includegraphics[width=0.78\linewidth]{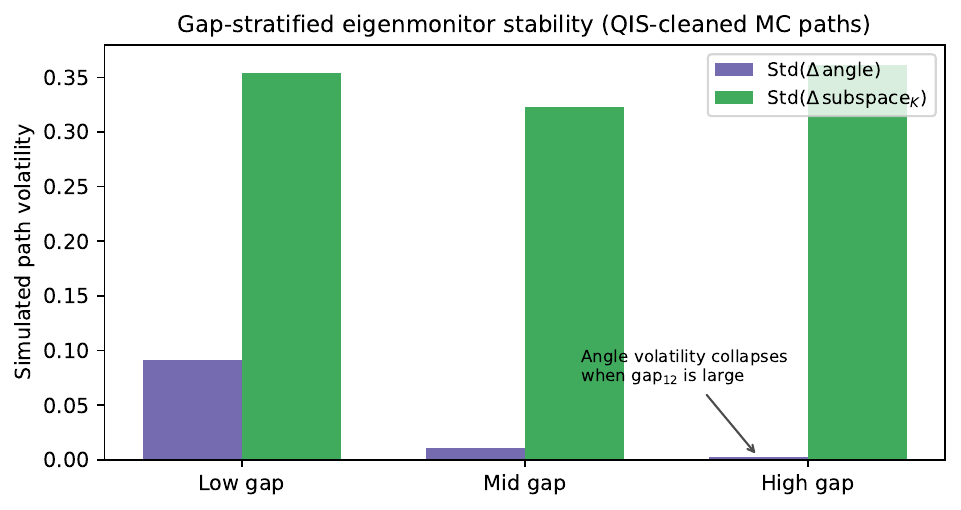}
  \caption{Simulated eigenmonitor volatility by $\gaptwelve$ tertile (known $\Sigma$ paths).}
  \label{fig:mc-gap}
\end{figure}

\subsection{Size and power of the calibrated tests}
\label{sec:tier-power}

Table~\ref{tab:tier-power} reports the controlled two-window experiment with
$N{=}50$, $M{=}126$, $K{=}2$, and $s{=}31$.
Window A is drawn from $\Sigma_1$, and window B either straddles the break
(onset) or lies entirely past it (post),
with $\Sigma_2$ a rotation of the top eigenvector by $\theta$; a scale-change
design ($\Sigma_2=1.5\Sigma_1$, no subspace movement) and elliptical-$t_5$
innovations complete the grid.
Under Gaussian innovations every procedure controls size.
The capped worst-case band has essentially no power, as expected of a bound on
the worst case, while the calibrated tests reach $93$--$95\%$ power at
$\theta{=}1$ in the post design.
Onset power is low for every method because the new regime occupies only $s$ of
the $M$ observations; this is the intrinsic detection delay of rolling-window
monitoring at this $N/M$, the detectability frontier that
Proposition~\ref{prop:tier-b-power} derives in closed form together with the
onset power ceiling that no rotation angle can exceed at this overlap.
Under $t_5$ innovations the uncorrected Gaussian nulls over-reject
($35$--$40\%$).
The $\sqrt{1+\widehat\kappa}$ scaling of Remark~\ref{rem:elliptical-null}
restores size ($6$--$10\%$) at a substantial power cost, whereas the
tail-adaptive bootstrap, which resamples from an elliptical $t$ with fitted
index $\widehat\nu=4+2/\widehat\kappa$, keeps size at $8\%$ while retaining
$50\%$ power at $\theta{=}1$; we recommend this configuration for heavy-tailed
data.
The covariance-level baseline $\|S_t-S_{t-1}\|_F$, calibrated by the same
parametric bootstrap (likelihood-ratio covariance-equality tests are infeasible
here, with $N(N{+}1)/2=1{,}275$ parameters against $2M=252$ observations), is
more powerful under Gaussian rotations because it reacts to any covariance
change, but its calibration fails badly under $t_5$ ($73\%$ size) while the
subspace tests remain usable.
In the scale-change design, finally, the subspace tests are correctly silent
($2$--$3\%$), answering whether the dominant eigenspace rotated rather than
whether the covariance changed.

\begin{table}[!ht]
  \centering
  \caption{Size and power at nominal $5\%$, flag rates by procedure
  ($150$--$200$ reps per cell; $\theta{=}0$ rows are size; ``Ell.''\ =
  $\sqrt{1+\widehat\kappa}$-scaled thresholds; ``Boot.-$t$'' = tail-adaptive
  bootstrap; $\|\Delta S\|_F$ = covariance-level baseline).}
  \label{tab:tier-power}
  \footnotesize
  \resizebox{\textwidth}{!}{\begin{tabular}{llllrrrrrrr}
\toprule
Gap & Dist. & Design & $\theta$ & $D^{\mathrm{true}}$ & Band $\alpha{=}0.75$ & Analytic & Bootstrap & Ell.\ (anl./bt.) & Boot.-$t$ & $\|\Delta S\|_F$ \\
\midrule
Strong & gauss & onset & 0 & 0.00 & 0.0\% & 0.5\% & 1.0\% & -- & -- & -- \\
Strong & gauss & onset & 0.05 & 0.07 & 0.0\% & 1.0\% & 1.0\% & -- & -- & -- \\
Strong & gauss & onset & 0.1 & 0.14 & 0.0\% & 3.0\% & 0.5\% & -- & -- & -- \\
Strong & gauss & onset & 0.2 & 0.28 & 0.0\% & 3.5\% & 0.5\% & -- & -- & -- \\
Strong & gauss & onset & 0.4 & 0.55 & 0.0\% & 11.0\% & 3.0\% & -- & -- & -- \\
Strong & gauss & post & 0 & 0.00 & 0.0\% & 2.0\% & 1.0\% & 2/1\% & -- & 6.0\% \\
Strong & gauss & post & 0.2 & 0.28 & 0.0\% & 1.3\% & 5.3\% & -- & -- & -- \\
Strong & gauss & post & 0.6 & 0.80 & 0.0\% & 56.7\% & 66.0\% & 57/66\% & -- & 84.0\% \\
Strong & gauss & post & 1 & 1.19 & 0.0\% & 93.3\% & 94.7\% & 93/95\% & -- & 99.3\% \\
Strong & gauss & post & 0 (scale 1.5) & 0.00 & 0.0\% & 2.7\% & 2.0\% & 3/2\% & 2.0\% & 6.7\% \\
Strong & t5 & onset & 0 & 0.00 & 0.0\% & 40.0\% & 34.7\% & 10/6\% & 8.0\% & 73.3\% \\
Strong & t5 & post & 1 & 1.19 & 62.7\% & 78.0\% & 96.7\% & 15/19\% & 50.0\% & 100.0\% \\
Weak & gauss & onset & 0 & 0.00 & 0.0\% & 1.0\% & 0.0\% & -- & -- & -- \\
Weak & gauss & onset & 0.4 & 0.55 & 0.0\% & 5.0\% & 1.0\% & -- & -- & -- \\
Weak & gauss & post & 1 & 1.19 & 7.3\% & 71.3\% & 80.0\% & -- & -- & -- \\
\bottomrule
\end{tabular}
}
\end{table}

\paragraph{Verification of the power law and frontier.}
Figure~\ref{fig:power-frontier} overlays the predicted power curves (linear and
mixture-exact signals, population spectrum) on the empirical rejection rates of
the analytic test in the two-window experiment.
For the experiment's configuration ($M{=}126$, $s{=}31$, $\phi\approx0.246$,
strong gap), the post-design frontier \eqref{eq:theta50} gives
$\theta_{50}=0.41$, bracketed by the empirical transition from $3\%$ rejection
at $\theta{=}0.2$ to $90\%$ at $\theta{=}0.6$ under oracle calibration.
The onset design violates the frontier condition (right side $1.59>1$), so the
first-order ceiling applies. The predicted maximum of about $21\%$ at
$\theta=\pi/4$ compares with $25\%$ observed, and the mixture-exact law tracks
the empirical rise beyond $\pi/4$ (ceiling $\approx35\%$ at the peak
$\theta\approx0.96$, against $40\%$ observed at $\theta{=}1$), where the
linearized signal $\phi\cos\theta\,D_{\mathrm{true}}$ degrades by design.
Two deviations are documented. The plug-in null is conservative (size
$0.5$--$2\%$ at nominal $5\%$, from Marchenko--Pastur spreading of window-A
eigenvalues), and at $s/M{=}0.25$ the oracle-calibrated size inflation of about
$11\%$ from second-order shared-block terms
shifts the onset column upward.
Corollary~\ref{cor:theta-min} reduces the frontier to two numbers.
The explicit formula \eqref{eq:theta-min-explicit} gives $\theta_{\min}=0.36$
and $n^{*}\approx55$ for this spectrum, against $\theta_{\min}=0.41$ and
$n^{*}\approx68$--$77$ from the exact-increment quantiles
\eqref{eq:theta-min-unified}; the closed-form constant is roughly $10\%$
optimistic because the exact-increment null is heavier-tailed than the weighted
$\chi^2$, and both versions classify the designs correctly, with
$s{=}31<n^{*}$ (onset ceiling) and $M{=}126>n^{*}$ (detectable).
These formulas make the detection delay visible in Table~\ref{tab:tier-power}
explicit. After a rotation of size $\theta$, detection at $50\%$ power requires
$n_{\mathrm{post}}\ge 8(\tilde q_{1-\alpha}-\tilde q_{50})/\sin^2(2\theta)$
post-break observations in the window, and fewer than $n^{*}$ post-break
observations rule out $50\%$ power at any angle.

\begin{figure}[!ht]
  \centering
  \includegraphics[width=\linewidth]{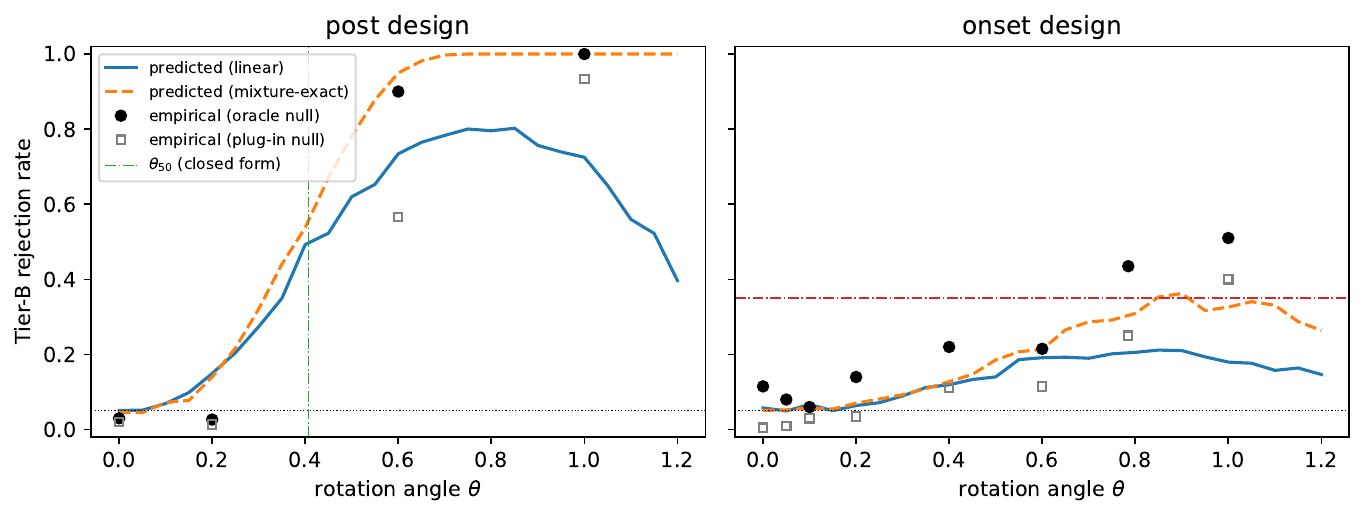}
  \caption{Power of the analytic test, predicted (linear and mixture-exact
  first-order laws, population spectrum) vs.\ empirical rejection rates
  (two-window experiment;
  oracle = population-eigenvalue null quantile, plug-in = window-A sample
  eigenvalues). The left panel shows the post design with the closed-form $\theta_{50}$ of
  \eqref{eq:theta50}, the right the onset design ($\phi\approx0.25$), where the
  frontier condition fails and the power ceiling applies.}
  \label{fig:power-frontier}
\end{figure}

\paragraph{The first-order null vs.\ the weighted-$\chi^2$ law.}
Figure~\ref{fig:null-vs-chi2} compares the simulated exact first-order null of
$\widehat D_{K,t}$ (Proposition~\ref{prop:noise-floor}) against the weighted-$\chi^2$
limit at a realistic overlap $s/M=0.25$ ($N{=}50$, $K{=}2$, $M{=}126$). The two agree
in the body, but the exact null is slightly heavier-tailed, with a $95\%$ quantile of
$0.431$ against the weighted-$\chi^2$ value $0.415$, a $4\%$ gap that, compounded by
the second-order shared-block terms of Proposition~\ref{prop:second-order}, produces
the $\approx10\%$ analytic-test size at this overlap (Table~\ref{tab:tier-power}) and
motivates the estimator-aware bootstrap. At the ends of the overlap range
($s/M\to0$ or $s/M=1$) the two laws coincide.

\begin{figure}[!ht]
  \centering
  \includegraphics[width=\linewidth]{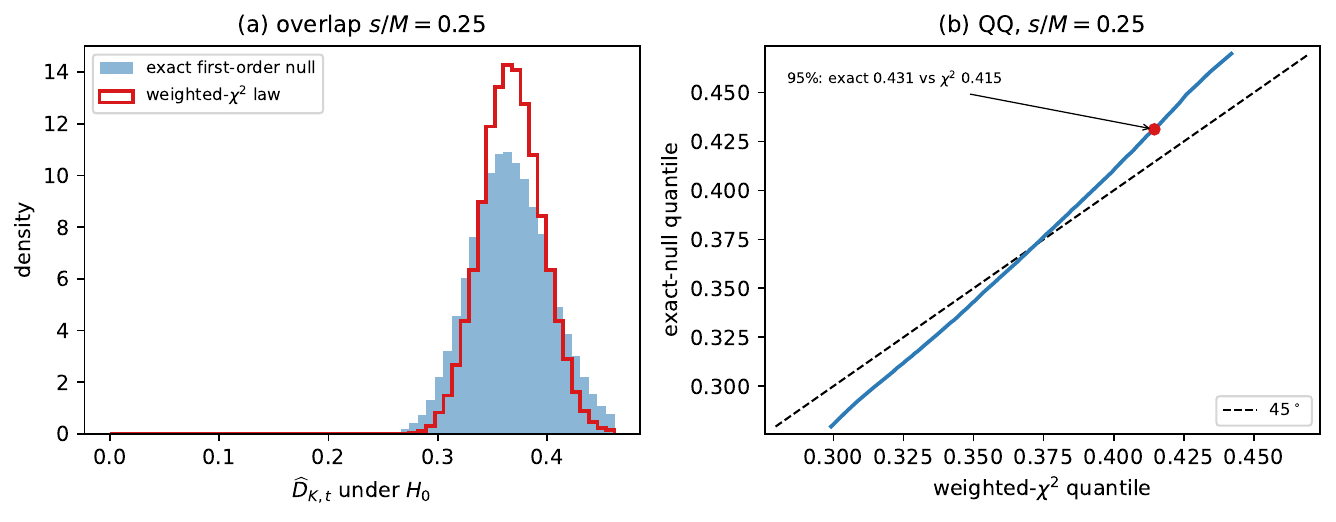}
  \caption{First-order null of the projector movement $\widehat D_{K,t}$ at overlap
  $s/M=0.25$ ($N{=}50$, $K{=}2$, $M{=}126$, $4\times10^4$ draws). (a) density of the
  exact first-order null (Proposition~\ref{prop:noise-floor}) vs.\ the weighted-$\chi^2$
  law $\sum_{i\le K<j} w_{ij}Z_{ij}^2$; (b) QQ plot of the two laws with the $95\%$
  quantiles marked. The exact null is mildly heavier-tailed at this intermediate
  overlap, the regime where the analytic test's size drifts upward.}
  \label{fig:null-vs-chi2}
\end{figure}

\paragraph{Multiplicity across dates.}
Per-date flags along a monitored path are serially dependent, because consecutive
windows share $M-s$ observations.
A dedicated experiment with persistent-regime paths ($40$ paths, plug-in
analytic-test $p$-values) finds that per-date size
nevertheless holds on pure-null dates ($2.6\%$ at nominal $5\%$), and that
Benjamini--Hochberg at $q{=}0.10$ across dates controls the realized false-discovery
proportion ($2.5\%$, conservative; Benjamini--Yekutieli $2.1\%$) at the cost of roughly
two-thirds of the naive procedure's true-discovery rate.
Formal FDR theory under this dependence remains open; the experiment indicates the
practical risk is conservatism, not anti-conservatism.

\subsection{Interval coverage for spectral-functional inference}
\label{sec:mc-si-coverage}

Table~\ref{tab:si-coverage} verifies Proposition~\ref{prop:scale-immunity} at
$M{=}252$ (500 reps; the spiked DGP of Table~\ref{tab:ar-coverage}).
The predicted inflation factor is confirmed. At $N{=}30$ under elliptical $t_8$
the empirical variances of $\widehat{\AR}_2$, $\widehat f_1$, and $\widehat H$
are $1.55$--$1.61$ times their Gaussian counterparts, against the theoretical
$1+\kappa=1.5$.
The contrast between level and ratio functionals is large.
The Gaussian-formula interval for total variance undercovers severely under
heavy tails ($49\%$ at $t_5$, $N{=}30$; $27\%$ at $N{=}115$) because it omits
the common-shock term $\kappa\langle G,\Sigma\rangle^2$, and the corrected
interval ($\widehat\kappa$-MLE with the full decomposition
\eqref{eq:si-variance}) restores $90$--$98\%$; for the scale-invariant
functionals the entire correction is the scalar factor $(1+\widehat\kappa)$,
and $\AR_2$ and $f_1$ reach $87$--$96\%$ under $t_5$ even at $N{=}115$.
Spectral entropy is the negative case, and we report it as such. With a flat bulk the gradient
$\nabla H\propto(\log p_i+H)$ degenerates exactly where Marchenko--Pastur
spreading biases the plug-in most, so first-order intervals undercover ($77\%$
Gaussian at $N{=}30$, $0\%$ at $N{=}115$).
The failure is the entropy analogue of the cleaning--debiasing wedge of
\S\ref{sec:ar-hd-caution}, and second-order or debiased entropy inference is left
open.

\begin{table}[!ht]
  \centering
  \caption{Coverage of nominal $95\%$ intervals for spectral functionals
  ($M{=}252$, 500 reps; spiked DGP of Table~\ref{tab:ar-coverage}; $t_8$ has
  $\kappa{=}0.5$ and $t_5$ has $\kappa{=}2$). ``Gaussian'' uses $\kappa{=}0$;
  ``$\widehat\kappa$-MLE'' the radial estimator of Remark~\ref{rem:kappa-mle}.
  The level functional $\mathrm{tr}$ additionally requires the common-shock term
  of \eqref{eq:si-variance}, which the scale-invariant rows shed by
  Proposition~\ref{prop:scale-immunity}.}
  \label{tab:si-coverage}
  \footnotesize
  \begin{tabular}{llrrrrrr}
\toprule
 & & \multicolumn{3}{c}{$N{=}30$} & \multicolumn{3}{c}{$N{=}115$} \\
\cmidrule(lr){3-5}\cmidrule(lr){6-8}
Functional & CI & gauss & $t_8$ & $t_5$ & gauss & $t_8$ & $t_5$ \\
\midrule
$\AR_2$ & Gaussian & 93\% & 87\% & 79\% & 88\% & 74\% & 45\% \\
 & $\widehat\kappa$-MLE & 93\% & 91\% & 93\% & 88\% & 86\% & 87\% \\
$f_1$ & Gaussian & 95\% & 90\% & 84\% & 94\% & 86\% & 67\% \\
 & $\widehat\kappa$-MLE & 95\% & 93\% & 95\% & 94\% & 95\% & 96\% \\
$H$ & Gaussian & 77\% & 57\% & 35\% & 0\% & 0\% & 0\% \\
 & $\widehat\kappa$-MLE & 77\% & 64\% & 66\% & 0\% & 0\% & 0\% \\
\addlinespace
$\mathrm{tr}$ (level) & Gaussian & 94\% & 67\% & 49\% & 96\% & 42\% & 27\% \\
 & $\widehat\kappa$-MLE & 94\% & 91\% & 90\% & 96\% & 97\% & 98\% \\
\bottomrule
\end{tabular}

\end{table}

\begin{table}[!ht]
  \centering
  \caption{Coverage of nominal $95\%$ AR$_2$ intervals ($M{=}252$, 500 reps; t$_5$ is
  elliptical with $\kappa{=}2$).}
  \label{tab:ar-coverage}
  \footnotesize
  \begin{tabular}{llrrrr}
\toprule
$N$ & Dist. & Method & Coverage & Mean width & Mean bias \\
\midrule
30 & gauss & Delta (Gaussian) & 93.6\% & 0.064 & +0.0061 \\
30 & gauss & Delta (moment $\hat\kappa$) & 93.8\% & 0.064 & +0.0061 \\
30 & gauss & Delta (radial-MLE $\hat\kappa$) & 93.6\% & 0.064 & +0.0061 \\
30 & gauss & Delta ($\kappa$ true) & 93.6\% & 0.064 & +0.0061 \\
30 & gauss & HD debiased (MLE $\hat\kappa$) & 94.8\% & 0.064 & +0.0007 \\
30 & t5 & Delta (Gaussian) & 76.2\% & 0.064 & +0.0130 \\
30 & t5 & Delta (moment $\hat\kappa$) & 89.6\% & 0.092 & +0.0130 \\
30 & t5 & Delta (radial-MLE $\hat\kappa$) & 90.2\% & 0.100 & +0.0130 \\
30 & t5 & Delta ($\kappa$ true) & 94.4\% & 0.111 & +0.0130 \\
30 & t5 & HD debiased (MLE $\hat\kappa$) & 91.8\% & 0.101 & +0.0086 \\
115 & gauss & Delta (Gaussian) & 90.4\% & 0.033 & +0.0076 \\
115 & gauss & Delta (moment $\hat\kappa$) & 90.4\% & 0.033 & +0.0076 \\
115 & gauss & Delta (radial-MLE $\hat\kappa$) & 90.4\% & 0.033 & +0.0076 \\
115 & gauss & Delta ($\kappa$ true) & 90.4\% & 0.033 & +0.0076 \\
115 & gauss & HD debiased (MLE $\hat\kappa$) & 96.2\% & 0.033 & +0.0003 \\
115 & t5 & Delta (Gaussian) & 46.4\% & 0.036 & +0.0248 \\
115 & t5 & Delta (moment $\hat\kappa$) & 61.4\% & 0.052 & +0.0248 \\
115 & t5 & Delta (radial-MLE $\hat\kappa$) & 87.6\% & 0.079 & +0.0248 \\
115 & t5 & Delta ($\kappa$ true) & 75.0\% & 0.062 & +0.0248 \\
115 & t5 & HD debiased (MLE $\hat\kappa$) & 91.2\% & 0.080 & +0.0202 \\
\bottomrule
\end{tabular}

\end{table}

\paragraph{Necessity of the elliptical assumption.}
The $(1+\kappa)$ correction calibrates the scalar-functional intervals only within the
elliptical family. Table~\ref{tab:non-elliptical} stress-tests it on skewed,
\emph{non}-elliptical innovations (per-coordinate log-normal, and an asymmetric
two-component mixture), alongside Gaussian and elliptical-$t_5$ references, for the
spiked design of Table~\ref{tab:ar-coverage}. Under the elliptical references the
radial-MLE $\widehat\kappa$ interval recovers coverage (e.g.\ $t_5$ at $N{=}115$,
$45\%\!\to\!87\%$); under the skewed DGPs it does not (log-normal $37\%\!\to\!48\%$;
mixture $\approx\!65\%$), because skewness injects third-moment terms that no kurtosis
factor can absorb. The elliptical assumption therefore does real work, and
genuinely non-elliptical data call for the estimator-aware bootstrap or a
distribution-free fallback.

\begin{table}[!ht]
  \centering
  \caption{Coverage of nominal $95\%$ $\AR_K$ intervals under elliptical vs.\
  non-elliptical innovations ($M{=}252$, $K{=}2$, $500$ reps; seed 2024; spiked
  spectrum of Table~\ref{tab:ar-coverage}). ``Gaussian CI'' uses $\kappa{=}0$;
  ``$\widehat\kappa$-MLE CI'' uses the radial-MLE estimate. The correction restores
  coverage for the elliptical rows ($t_5$) but not for the skewed, non-elliptical rows
  (log-normal, mixture).}
  \label{tab:non-elliptical}
  \footnotesize
  \begin{tabular}{lcccc}
\toprule
& \multicolumn{2}{c}{$N=30$} & \multicolumn{2}{c}{$N=115$} \\
\cmidrule(lr){2-3}\cmidrule(lr){4-5}
DGP & Gaussian CI & $\widehat\kappa$-MLE CI & Gaussian CI & $\widehat\kappa$-MLE CI \\
\midrule
Gaussian (elliptical, $\kappa{=}0$) & 93\% & 93\% & 87\% & 87\% \\
$t_5$ (elliptical, $\kappa{=}2$) & 78\% & 91\% & 45\% & 87\% \\
Log-normal (skewed) & 36\% & 62\% & 37\% & 48\% \\
Asym.\ mixture (skewed) & 66\% & 72\% & 64\% & 65\% \\
\bottomrule
\end{tabular}

\end{table}

\paragraph{Verification of the wedge formulas.}
As a diagnostic for the absorption ratio, Figure~\ref{fig:ar-wedge}
overlays the predicted wedge curves $+\Delta_S$ and $-\Delta_O$ as functions
of $c$ on the empirical mean relative $\AR_2$ errors at
$N\in\{32,63,115,202\}$, $M{=}252$ ($500$ reps).
The predictions are also sharp pointwise; at $c\approx0.456$,
$\Delta_S\cdot\AR_2$ predicts a bias of $+0.0077$ against the measured
$+0.0076$ of Table~\ref{tab:ar-coverage}.
Sample and oracle track the predictions across the entire range
$c\in[0.13,0.80]$; at $c{=}0.456$ the predictions are $+5.4\%$ and $-5.1\%$
against empirical values of $+5.2\%$ and $-5.2\%$.
The QIS plug-in lands below the oracle wedge ($-13.3\%$ against $-5.2\%$ at
$c{=}0.456$) because finite-sample QIS shrinks isolated spikes beyond the oracle limit
$\lambda\alpha^2+\sigma^2(1-\alpha^2)$, so the oracle wedge of
Proposition~\ref{prop:wedge} is a lower bound on the practical distortion.
The trace-preserving debiased plug-in of part~(iii) is unbiased to within
simulation error through $c\approx0.46$ (residual at most $0.12\%$), with $1.7\%$
remaining at $c{=}0.80$; a numerator--denominator decomposition attributes this
residual entirely to edge mis-detection at extreme concentration, where
upper-bulk eigenvalues cross the estimated Marchenko--Pastur edge and bias
$\widehat\sigma^2$ down and $\psi^{-1}$ up, while the trace-preserving
denominator stays unbiased (relative error below $3\times10^{-4}$) at every
$c$, as part~(iii) predicts.
The naive QIS-plug-in Wald interval deteriorates exactly as the wedge predicts,
with coverage $86\%$, $65\%$, $26\%$, and $11\%$ along the same grid, as the
deterministic wedge swamps the $O(M^{-1/2})$ interval width.

\begin{figure}[!ht]
  \centering
  \includegraphics[width=0.78\linewidth]{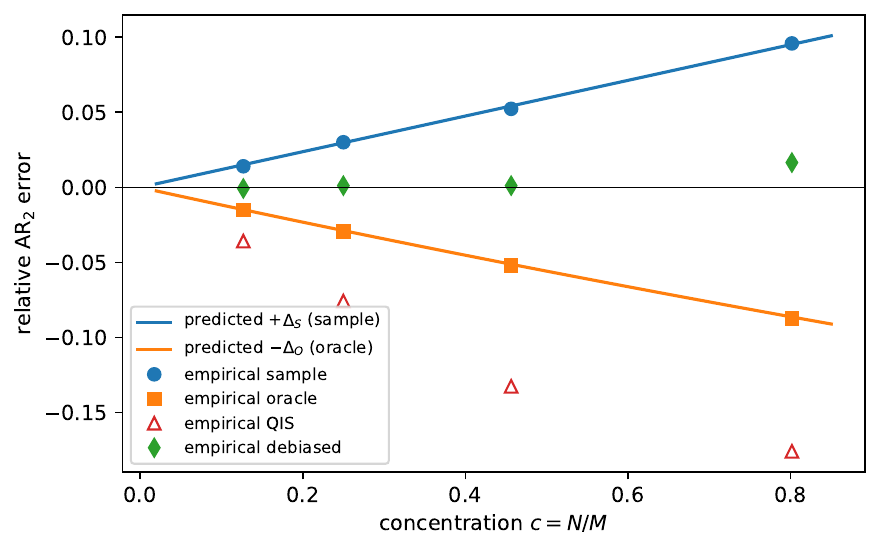}
  \caption{Cleaning--debiasing wedge across concentration $c=N/M$
  (spikes $(6,3.5,1.2)$, bulk $\sigma^2{=}0.5$, $K{=}2$, $M{=}252$, $500$ reps).
  Curves give the predicted $+\Delta_S$ (sample) and $-\Delta_O$ (Frobenius oracle)
  of Proposition~\ref{prop:wedge}; markers are the empirical mean relative
  $\AR_2$ errors. QIS overshoots the oracle wedge; the trace-preserving
  spike-debiased plug-in is unbiased to MC precision through $c\approx0.46$.}
  \label{fig:ar-wedge}
\end{figure}

\subsection{Computational cost}
\label{sec:mc-runtime}

Table~\ref{tab:mc-runtime} reports mean wall-clock time per window for the
covariance fit and one monitor step.

\begin{table}[!ht]
  \centering
  \caption{Mean wall-clock time per window (ms) for the covariance fit and one monitor
  step (seed 99, 30 replications per cell).}
  \label{tab:mc-runtime}
  \footnotesize
  \begin{tabular}{rrlrr}
\toprule
$N/M$ & $M$ & Estimator / task & ms (cov) & ms (monitor) \\
\midrule
1.49 & 77 & Sample & 0.2 & --- \\
1.49 & 77 & LW & 3.0 & --- \\
1.49 & 77 & QIS & 1.9 & 4.4 \\
1.49 & 77 & MP-clip & 1.9 & --- \\
1.00 & 115 & Sample & 0.2 & --- \\
1.00 & 115 & LW & 3.8 & --- \\
1.00 & 115 & QIS & 2.3 & 4.9 \\
1.00 & 115 & MP-clip & 2.1 & --- \\
0.50 & 230 & Sample & 0.2 & --- \\
0.50 & 230 & LW & 4.4 & --- \\
0.50 & 230 & QIS & 2.3 & 5.0 \\
0.50 & 230 & MP-clip & 2.0 & --- \\
0.25 & 460 & Sample & 0.3 & --- \\
0.25 & 460 & LW & 4.6 & --- \\
0.25 & 460 & QIS & 2.5 & 6.1 \\
0.25 & 460 & MP-clip & 2.3 & --- \\
\bottomrule
\end{tabular}

\end{table}

Table~\ref{tab:tier-runtime} adds the cost of the calibrated tests per
evaluation date. The worst-case band is essentially free ($0.02$\,ms), the
analytic null costs $32$--$87$\,ms and requires no eigendecompositions, and the
QIS-aware bootstrap costs $0.1$--$0.8$\,s, fast enough for daily monitoring of
hundreds of assets; by the equivariance argument of \S\ref{sec:rot-equiv} the
sample-estimator bootstrap halves that cost.

\begin{table}[!ht]
  \centering
  \caption{Wall-clock cost per evaluation date (single core).}
  \label{tab:tier-runtime}
  \footnotesize
  \begin{tabular}{llr}
\toprule
$(N,M)$ & Procedure & ms per date \\
\midrule
(50,126) & Band (capped Davis--Kahan) & 0.0 \\
(50,126) & Analytic (400 draws) & 31.6 \\
(50,126) & Bootstrap (QIS, $B{=}60$) & 121.4 \\
(50,126) & Bootstrap (QIS, $B{=}99$) & 200.6 \\
(50,126) & Bootstrap (sample, $B{=}60$) & 55.3 \\
(50,126) & Bootstrap (sample, $B{=}99$) & 92.7 \\
(115,252) & Band (capped Davis--Kahan) & 0.0 \\
(115,252) & Analytic (400 draws) & 86.7 \\
(115,252) & Bootstrap (QIS, $B{=}60$) & 509.4 \\
(115,252) & Bootstrap (QIS, $B{=}99$) & 815.2 \\
(115,252) & Bootstrap (sample, $B{=}60$) & 256.4 \\
(115,252) & Bootstrap (sample, $B{=}99$) & 399.2 \\
\bottomrule
\end{tabular}

\end{table}

\newpage
\section{Discussion}
\label{sec:discussion}
\label{sec:conclusion}

Movements in the spectral functionals and the dominant eigenspace of a rolling
covariance estimate are routinely treated as structural change, yet both
fluctuate under estimation noise, and shrinkage alters the noise law.
This paper supplies the calibration that distinguishes the two.
For the eigenspace, a distribution-free Davis--Kahan band gauges identification,
a first-order null law calibrates observed movement between overlapping windows
and transfers without change to rotation-equivariant shrinkage estimators, and a
local power law fixes the smallest detectable rotation and the number of fresh
observations a two-window monitor needs to see it; a parametric bootstrap covers
the cases the analytic null does not.
The same mechanism calibrates the scalar functionals. First-order immunity to
elliptical kurtosis holds for scale-invariant functionals and only for them, so a
single estimated scalar calibrates the absorption ratio,
the leading share, and the projector null together, and window-level volatility
cancels exactly.
The same machinery exposes a high-dimensional pitfall, that shrinkage cleaning
mis-centers the absorption ratio by an explicit margin, and supplies the
spike-debiased correction that removes it.
The evidence throughout is simulation under a known population covariance; the
equity-panel appendix shows what the procedures produce when the population is
unknown, where their output is a calibrated diagnostic rather than a test of
latent change.

\paragraph{Limitations and open problems.}
The sequence of flags is serially dependent because consecutive windows share
$M-s$ observations; empirically, per-date size holds on pure-null dates
and Benjamini--Hochberg across dates is conservative (\S\ref{sec:tier-power}),
but formal false-discovery theory under this dependence is open.
The analytic test has a documented size distortion at intermediate overlap
($s/M\approx 0.25$), bounded by Proposition~\ref{prop:second-order} when
$\eta<\Delta/4$.
Fully calibrated absorption-ratio inference under elliptical heavy tails in
high dimensions remains open (\S\ref{sec:ar-hd-caution}), and the radial-MLE
kurtosis estimator of Remark~\ref{rem:kappa-mle} is conservative for extreme
tails by construction.
First-order entropy intervals degenerate near flat spectra, where $\nabla H\to
0$ while the plug-in bias peaks (Table~\ref{tab:si-coverage}); second-order or
debiased entropy inference is likewise open.
The $(1+\kappa)$ immunity factor is a fixed-$N$ statement. At $N/M\approx 0.46$
under $t_5$ the empirical inflation exceeds it, and the conservative
$\widehat\kappa$-MLE absorbs the excess only partially.
The monitoring dimension $K$ and the crisis thresholds are fixed by
configuration.

\paragraph{Hyperparameters and reproducibility.}
The monitoring dimension is selected by the dominant eigengap
(Proposition~\ref{prop:stable-K}, capped at $K_{\max}$); the operational scale
$\alpha^\star$ is the smallest grid value whose simulated calm-period flag rate stays
at or below the nominal level (\S\ref{sec:mc-alpha}); and the window length $M$ and
overlap $s$ are reported with each experiment. All Monte Carlo results use fixed seeds
and the replication counts stated in the captions; the parametric bootstrap
(Algorithm~\ref{alg:bootstrap}) resamples $B$ Gaussian (or fitted-elliptical) paths
per evaluation date.

\paragraph{Data availability.}
The equity panel comprises daily adjusted returns for $115$ US large-cap stocks
retrieved from Yahoo Finance through the \texttt{yfinance} interface.

\appendix

\newpage
\section{Proofs}
\label{app:proofs-detailed}

This appendix collects the proofs cited from the main text.

\subsection[Projectors and the $\sin\Theta$ identity]{Projectors and the $\sin\Theta$ identity (Lemma~\ref{lem:proj-sin})}

Let $U,\widehat U\in\R^{N\times K}$ have orthonormal columns ($U^\top U=I_K$,
$\widehat U^\top\widehat U=I_K$), and define $P=UU^\top$,
$\widehat P=\widehat U\widehat U^\top$, and $M=\widehat U^\top U$.
Expanding the Frobenius norm gives
\[
  \|P-\widehat P\|_F^2
  =\mathrm{tr}\bigl[(UU^\top-\widehat U\widehat U^\top)^\top
  (UU^\top-\widehat U\widehat U^\top)\bigr].
\]
By the cyclic property $\mathrm{tr}(AB)=\mathrm{tr}(BA)$ and the idempotence relations
$P^2=P$ and $\widehat P^2=\widehat P$, this becomes
\[
  \|P-\widehat P\|_F^2
  =\mathrm{tr}(P)+\mathrm{tr}(\widehat P)-2\mathrm{tr}(\widehat P P).
\]
Rank-$K$ projectors have trace $K$, so $\mathrm{tr}(P)=\mathrm{tr}(\widehat P)=K$, while
$\mathrm{tr}(\widehat P P)=\mathrm{tr}(\widehat U\widehat U^\top U U^\top)
=\mathrm{tr}(U^\top\widehat U\widehat U^\top U)=\mathrm{tr}(M M^\top)=\|M\|_F^2$.
Hence $\|P-\widehat P\|_F^2=2K-2\|M\|_F^2$.
If $M$ has singular values $\sigma_1,\ldots,\sigma_K$ (the principal cosines), then
$\|M\|_F^2=\sum_{j=1}^K\sigma_j^2=\sum_{j=1}^K\cos^2\theta_j$, while by definition
$\|\sin\Theta\|_F^2=\sum_{j=1}^K\sin^2\theta_j
=K-\sum_{j=1}^K\cos^2\theta_j=K-\|M\|_F^2$.
Therefore $\|P-\widehat P\|_F^2=2\|\sin\Theta\|_F^2$, as claimed.

\subsection[Weyl's inequality]{Weyl's inequality (Lemma~\ref{lem:weyl})}

For symmetric $C$, the Courant--Fischer characterization gives
\[
  \lambda_i(C)=\min_{\dim(L)=N-i+1}\;\max_{x\in L,\,\|x\|=1}x^\top Cx .
\]
For each unit vector $x$ we have $x^\top Cx\le x^\top\Sigma x+\eta$; maximizing over
$x\in L$ and then minimizing over subspaces $L$ with $\dim(L)=N-i+1$ yields
$\lambda_i(C)\le\lambda_i(\Sigma)+\eta$. Exchanging the roles of $C$ and $\Sigma$ gives
the reverse inequality.

\subsection[The overlapping-window noise floor]{The overlapping-window noise floor (Proposition~\ref{prop:noise-floor})}\label{app:noise-floor}

The argument has three ingredients. The shared observations cancel exactly in the
covariance increment, the projector difference is linearized by the Kato expansion, and
the variance of the linear term is computed entrywise in the population eigenbasis.
Write $S_t=\Sigma+E_t$, $S_{t-1}=\Sigma+E_{t-1}$ for the two window sample covariances
and $\Delta E=E_t-E_{t-1}=S_t-S_{t-1}$.

With windows $\{1,\ldots,M\}$ and $\{s+1,\ldots,s+M\}$,
$\Delta E=\frac1M\bigl(\sum_{k=M+1}^{M+s}r_kr_k^\top-\sum_{k=1}^{s}r_kr_k^\top\bigr)$, and
the $M-s$ shared outer products cancel exactly.
The Kato expansion of the top-$K$ projector gives
$\widehat P_{K,t}-P_K=L(E_t)+Q(E_t)$ with linear part
$L(E)=\sum_{i\le K<j}\frac{u_i^\top E u_j}{\lambda_i-\lambda_j}
(u_iu_j^\top+u_ju_i^\top)$ and quadratic remainder $Q$.
Subtracting the two dates yields $\widehat P_{K,t}-\widehat P_{K,t-1}
=L(\Delta E)+\{Q(E_t)-Q(E_{t-1})\}$.

To compute the variance of the increment entries, pass to the population eigenbasis,
where $y_k=U^\top r_k$ has independent components $y_{ik}\sim N(0,\lambda_i)$. Then
$g_{ij}:=u_i^\top\Delta E\,u_j=\frac1M\sum_{k\,\mathrm{new}}y_{ik}y_{jk}
-\frac1M\sum_{k\,\mathrm{old}}y_{ik}y_{jk}$
has mean zero and variance $2s\lambda_i\lambda_j/M^2$ for $i\ne j$, since each of the
$2s$ independent products has variance $\lambda_i\lambda_j$.
Because each pair contributes two symmetric entries,
$\|L(\Delta E)\|_F^2=2\sum_{i\le K<j}g_{ij}^2/(\lambda_i-\lambda_j)^2$, and therefore
$\E\|L(\Delta E)\|_F^2=\frac{4s}{M^2}\sum_{i\le K<j}
\frac{\lambda_i\lambda_j}{(\lambda_i-\lambda_j)^2}$, which is \eqref{eq:noise-floor};
the exact first-order law \eqref{eq:exact-null} restates $g_{ij}$ as a sum of $2s$
Gaussian products.

Two limiting regimes confirm the formula. At $s=M$ the two windows are disjoint and the
mean is twice the one-sample value
\[
  \frac{2}{M}\sum_{i\le K<j}\frac{\lambda_i\lambda_j}{(\lambda_i-\lambda_j)^2}
\]
of \citet{KoltchinskiiLounici2017}.
As $s\to\infty$, $g_{ij}/\sqrt{2s\lambda_i\lambda_j/M^2}\to_d N(0,1)$ with vanishing
cross-pair dependence, giving the weighted-chi-square limit.
The neglected $Q$-difference contributes second-order terms shared between the two
windows and is absorbed by the estimator-aware bootstrap. A simulation check shows that
its effect is visible at intermediate overlap, with empirical size $\approx 10\%$
at $s/M=0.25$.

\emph{Centering and normalization.}
The statement and the exact-increment simulator are written for known-mean
sampling with the normalization $1/M$, whereas the implemented monitor demeans
each window and divides by $M-1$ (\S\ref{sec:setting}).
Demeaning perturbs each window covariance by a rank-one term of order $1/M$, and
the rescaling $M/(M-1)$ is common to both windows, so both effects are second
order in the increment and cancel from the first-order law.
The simulation studies of \S\ref{sec:mc} evaluate the demeaned $(M-1)$-statistic
against the known-mean null, so the reported sizes already include any residual
mismatch.

\subsection[Contour bound on the remainder]{Contour bound on the remainder (Proposition~\ref{prop:second-order})}\label{app:second-order}

The proof bounds the quadratic remainder of the Riesz projector by a Neumann-series
estimate on a contour that separates the top-$K$ eigenvalues from the rest of the
spectrum. Let $\Gamma$ be the circle of center $(\lambda_1+\lambda_K)/2$ and radius
$r=(\lambda_1-\lambda_K+\Delta)/2$, so that $\Gamma$ encloses
$\{\lambda_1,\ldots,\lambda_K\}$ and
$\mathrm{dist}(\Gamma,\mathrm{spec}(\Sigma))\ge\Delta/2$, whence
$\sup_{z\in\Gamma}\|R(z)\|_{\op}\le 2/\Delta$ for $R(z)=(z I-\Sigma)^{-1}$.

If $\eta=\|E\|_{\op}<\Delta/4$, every eigenvalue of $\Sigma+E$ stays at distance
$\ge\Delta/4$ from $\Gamma$ by Weyl's inequality, and the perturbed resolvent admits
the Neumann series $R_E(z)=\sum_{k\ge0}R(z)\,[E R(z)]^k$ on $\Gamma$, convergent since
$\|ER(z)\|_{\op}\le 2\eta/\Delta<1/2$.
The Riesz representation gives $\widehat P_K-P_K=\frac{1}{2\pi i}\oint_\Gamma
[R_E(z)-R(z)]\,dz$, and a residue computation in the eigenbasis identifies the $k=1$
term with the linear map $L(E)$ of Proposition~\ref{prop:noise-floor}.
The remainder $Q(E)=\widehat P_K-P_K-L(E)$ collects the terms with $k\ge2$, so
\[
  \|Q(E)\|_{\op}
  \le\frac{\mathrm{len}(\Gamma)}{2\pi}\sum_{k\ge2}
  \Bigl(\frac{2}{\Delta}\Bigr)^{k+1}\eta^k
  = \frac{2r}{\Delta}\cdot
  \frac{(2\eta/\Delta)^2}{1-2\eta/\Delta}.
\]
To pass from operator to Frobenius norm, note that
$\mathrm{rank}(\widehat P_K-P_K)\le 2K$ as a difference of two rank-$K$ projectors, and
$\mathrm{rank}(L(E))\le 2K$ because the sum over $i\le K$ of symmetrized rank-2 terms
supported on $\mathrm{span}\{u_i\}\oplus\mathrm{span}\{u_j\}_{j>K}$ has column space of
dimension $\le 2K$; hence $\mathrm{rank}(Q)\le 4K$ and
$\|Q\|_F\le 2\sqrt K\,\|Q\|_{\op}$.
Applying the reverse triangle inequality to
$\widehat P_{K,t}-\widehat P_{K,t-1}=L(\Delta E)+Q(E_t)-Q(E_{t-1})$ gives
\eqref{eq:second-order}; the final form uses $(1-2\eta/\Delta)^{-1}\le2$ for
$\eta<\Delta/4$.
The size statement follows by intersecting with the event
$\{\eta_t\vee\eta_{t-1}<h\}$ and a union bound.

\subsection{Consistency of the bootstrap}\label{app:bootstrap}

The strategy is a coupling argument. The bootstrap law is realized as a continuous
function of the plug-in covariance, so consistency of the plug-in transfers to the
bootstrap quantiles. Fix Gaussian arrays $Z\in\R^{(M+s)\times N}$ and represent
the two-window data under covariance $A$ as $X=ZA^{1/2}$; the statistic
$\widehat D_{K,t}=\varphi(Z;A)$ is, for fixed $Z$, a composition of (a) sample
covariances of the two windows (polynomial in $A^{1/2}$), and (b) top-$K$ spectral
projectors, which are continuous in the matrix argument wherever
$\lambda_K>\lambda_{K+1}$ for both window sample covariances.

For fixed $N\le M$ the sample covariance of a Gaussian window has almost surely simple
eigenvalues, so for almost every $Z$ the map $A\mapsto\varphi(Z;A)$ is continuous at
$A=\Sigma$. Hence $A_m\to\Sigma$ implies $\varphi(Z;A_m)\to\varphi(Z;\Sigma)$ a.s.,
i.e.\ $F_{A_m}\Rightarrow F_\Sigma$ by dominated convergence on bounded continuous test
functions.
Weak convergence plus continuity and strict increase of $F_\Sigma$ at $q_{95}(\Sigma)$
give $q_{95}(A_m)\to q_{95}(\Sigma)$; applying this along subsequences of
$C_M\to_p\Sigma$ yields $q_{95}(C_M)\to_p q_{95}(\Sigma)$.
For the size, $\mathbb{P}(\widehat D>q_{95}(C_M))
=\mathbb{P}(\widehat D>q_{95}(\Sigma))+o(1)\to0.05$ by Slutsky and continuity of
$F_\Sigma$ at $q_{95}(\Sigma)$; dependence between $\widehat D$ and $q_{95}(C_M)$ is
immaterial because the threshold converges to a constant.
Finally, the hypothesis $C_M\to_p\Sigma$ holds for QIS/LW at fixed $N$, since the
sample covariance is consistent and the shrinkage intensity vanishes as $N/M\to0$
\citep{LedoitWolf2004,LedoitWolf2022QIS}.

\subsection[Local power and the detectability frontier]{Local power and the detectability frontier (Proposition~\ref{prop:tier-b-power}, Corollary~\ref{cor:theta-min})}\label{app:tier-b-power}

The proof computes the mean of the mixture window exactly, derives the noncentral
first-order law of the projector increment under the local alternative, and then
obtains the frontier and the ceiling from the resulting median shift. Throughout we work in
the eigenbasis of $\Sigma_1=\mathrm{diag}(\lambda)$; the rotation acts on
coordinates $(1,b)$ with $b>K$.

\emph{(i) Mixture mean and attenuation.}
Write $c=\cos\theta$, $s=\sin\theta$.
Window B averages $M-s$ outer products with mean $\Sigma_1$ and $s$ with mean
$\Sigma_2$, so $\E S_B=(1-\phi)\Sigma_1+\phi\Sigma_2=\bar\Sigma$ exactly, and the
estimable signal is $D_{\mathrm{mix}}=\|P_K(\bar\Sigma)-P_K(\Sigma_1)\|_F$.
The perturbation $\Sigma_2-\Sigma_1$ has a single cross-cut entry, at $(1,b)$, equal
to $sc(\lambda_1-\lambda_b)$, so the linear map $L$ of
Proposition~\ref{prop:noise-floor} gives
$L(\Sigma_2-\Sigma_1)=sc\,(u_1u_b^\top+u_bu_1^\top)$ with
$\|L(\Sigma_2-\Sigma_1)\|_F=\sqrt2\,sc$, the gap $\lambda_1-\lambda_b$ cancelling.
Since $\bar\Sigma-\Sigma_1=\phi(\Sigma_2-\Sigma_1)$,
\[
  D_{\mathrm{mix}}=\phi\,\|L(\Sigma_2-\Sigma_1)\|_F+O(\theta^2)
  =\phi\cos\theta\cdot\sqrt2\sin\theta+O(\theta^2)
  =\phi\cos\theta\,D_{\mathrm{true}}+O(\theta^2).
\]
At $\theta=\pi/2$ the cross-cut entry vanishes and the mixture is diagonal,
with entries $(1-\phi)\lambda_1+\phi\lambda_b$ and
$(1-\phi)\lambda_b+\phi\lambda_1$ in coordinates $1$ and $b$.
If the mixed diagonal preserves the top-$K$ ordering, for which $\phi<\tfrac12$
together with
$\phi(\lambda_1-\lambda_b)<\min(\lambda_K-\lambda_b,\lambda_1-\lambda_{K+1})$
is sufficient, the top-$K$ projector is unchanged and $D_{\mathrm{mix}}=0$.
Once the ordering breaks, the top-$K$ set exchanges $u_1$ for $u_b$; at
$\phi=1$ the mixture equals $\Sigma_2$ and
$D_{\mathrm{mix}}(1,\theta)=\sqrt2\,\sin\theta$ for every $\theta$.

\emph{(ii) Noncentral first-order law.}
Under the gap condition the contour argument of Proposition~\ref{prop:second-order}
justifies linearizing both projectors at $\Sigma_1$, so that
$\widehat P_B-\widehat P_A=L(S_B-S_A)+O(\eta^2/\Delta^2)$ terms.
The increment $G=S_B-S_A$ retains the shared-block cancellation, with
$\E G=\phi(\Sigma_2-\Sigma_1)$, whose only off-block entry is
$\E g_{1b}=\phi sc(\lambda_1-\lambda_b)$.
Entry variances equal their null values up to relative $O(\theta^2)$
corrections from the $\Sigma_2$ blocks (e.g.\
$\mathrm{Var}_2(y_1y_b)=\bar\sigma_{11}\bar\sigma_{bb}+\bar\sigma_{1b}^2$),
which contribute to $\E\widehat D^2$ at order $s\theta^2/M^2\cdot
\lambda(\lambda_1-\lambda_b)/\Delta^2$, i.e.\ $O(\theta^2/M)$ relative to the
$O(1)$ noise floor and the $O(\phi^2\theta^2)$ mean term at fixed $\phi$.
Squaring the linear term gives the noncentral form with noncentrality
$\sum\tilde w\delta^2=\phi^2\|L(\Sigma_2-\Sigma_1)\|_F^2=\tfrac12\phi^2\sin^2 2\theta$
by (i), which is \eqref{eq:ncp}.

\emph{(iii) Frontier.}
The signal occupies the single pair coordinate $(1,b)$, so
$\widehat D^2_\theta=Q_0+2\psi\sqrt{\tilde w_{1b}}\,Z_{1b}+\psi^2$ with
$\psi^2$ the noncentrality and $Q_0$ the null statistic.
Since the tilt is mean-zero,
$\mathrm{med}(\widehat D^2_\theta)\approx q^2_{50}+\psi^2$; setting this equal
to $q^2_{1-\alpha}$ and solving
$\tfrac12\phi^2\sin^22\theta_{50}=q^2_{1-\alpha}-q^2_{50}$ yields
\eqref{eq:theta50}.
Under the weighted-$\chi^2$ approximation all null quantiles of
$\widehat D^2$ scale as $s/M^2$, so
$\sin 2\theta_{50}\propto\sqrt{s}/(M\phi)$, and with $\phi=s/M$ this is
$1/\sqrt{s}$ (onset) and $1/\sqrt{M}$ (post), i.e.\
$\theta_{50}\asymp C/\sqrt{n_{\mathrm{post}}}$ and the ratio $\sqrt{M/s}$.
Corollary~\ref{cor:theta-min} follows by substituting
$q_p^2=(4n_{\mathrm{post}}/M^2)\,\tilde q_p$ and $\phi=n_{\mathrm{post}}/M$
into \eqref{eq:theta50}, which gives
$\sin^2(2\theta_{\min})=8(\tilde q_{1-\alpha}-\tilde q_{50})/n_{\mathrm{post}}$,
which is \eqref{eq:theta-min-unified}; the explicit form
\eqref{eq:theta-min-explicit} applies the two-moment normal approximation
$\tilde q_p\approx\sum w_{ij}+z_p\,(2\sum w_{ij}^2)^{1/2}$ to
$Q=\sum w_{ij}Z_{ij}^2$, where the means cancel in the quantile \emph{gap} and the
median contributes $z_{50}=0$, leaving
$\tilde q_{1-\alpha}-\tilde q_{50}\approx z_{1-\alpha}(2\sum w_{ij}^2)^{1/2}$;
the constraint $\sin(2\theta_{\min})\le1$ rearranges to
$n_{\mathrm{post}}\ge n^{*}=8\sqrt2\,z_{1-\alpha}(\sum w_{ij}^2)^{1/2}$.

\emph{(iv) Ceiling.}
The bound $\sin^2 2\theta\le1$ caps \eqref{eq:ncp} at
$\phi^2/2$, attained at $\pi/4$; if $\phi^2/2<q^2_{1-\alpha}-q^2_{50}$ the
median of $\widehat D^2_\theta$ stays below the threshold for every $\theta$.
Non-perturbatively, $\widehat P_B$ concentrates on $P_K(\bar\Sigma)$ (fixed
$N$, $M\to\infty$), so the available signal is $D_{\mathrm{mix}}$ of (i);
replacing $\psi^2$ by $\max_\theta D^2_{\mathrm{mix}}$ in the median-shift
argument gives the persistent ceiling condition.

\subsection[Lipschitz gradient bounds]{Lipschitz gradient bounds (Proposition~\ref{prop:propagate})}\label{app:propagate}

Write $\lambda\in\R^N_+$ with $T(\lambda)=\sum_j\lambda_j$ and $p_i(\lambda)=\lambda_i/T(\lambda)$.
For the leading share $f_1(\lambda)=\lambda_1/T$, the chain rule gives, whenever
$\lambda_1>0$ and $T>0$,
\[
  \frac{\partial f_1}{\partial \lambda_i}
  =\frac{\delta_{i1}T-\lambda_1}{T^2},
  \qquad i=1,\ldots,N.
\]
On any compact set with $\lambda_i\ge\varepsilon>0$ and $T\ge\varepsilon N$ the
gradient is bounded, $\|\nabla f_1\|_2\le C_{f_1}<\infty$.
For the absorption ratio $\AR_K(\lambda)=\bigl(\sum_{i\le K}\lambda_i\bigr)/T$, the same
computation yields
\[
  \frac{\partial \AR_K}{\partial \lambda_i}
  =\frac{\mathbf{1}_{\{i\le K\}}T-\bigl(\sum_{j\le K}\lambda_j\bigr)}{T^2},
\]
which is bounded on the same compact set.
For the entropy $H(\lambda)=-\sum_i p_i\log p_i$, differentiating through
$p_i=\lambda_i/T$ gives
\[
  \frac{\partial H}{\partial \lambda_i}
  =-\sum_{j=1}^N\frac{\partial p_j}{\partial\lambda_i}(\log p_j+1),
\]
with $\partial p_j/\partial\lambda_i=(\delta_{ij}T-\lambda_j)/T^2$, and all terms are
bounded when $p_j\ge\varepsilon/T$.
Finally, for any $\lambda,\tilde\lambda$ in that set the mean-value theorem gives
$|g(\tilde\lambda)-g(\lambda)|\le \sup_\xi\|\nabla g(\xi)\|_2\|\tilde\lambda-\lambda\|_2$,
which gives the Lipschitz constant $L_{g,N}$ of Proposition~\ref{prop:propagate}.

\subsection[Kurtosis immunity and scale invariance]{Kurtosis immunity and scale invariance (Proposition~\ref{prop:scale-immunity})}\label{app:scale-immunity}

The proof records the delta-method variance as the variance of a scalar quadratic
form, derives the immunity direction from Euler's relation, and finally proves
the converse by integrating the gradient condition along rays.
Write $S_M$ for the sample covariance of $M$ i.i.d.\ elliptical observations with
covariance $\Sigma$ and kurtosis parameter $\kappa$.

\emph{(i) Variance formula.}
For $W$ continuously differentiable at $\Sigma$ with symmetric gradient $G$, the delta
method gives $\sqrt{M}(W(S_M)-W(\Sigma))=\sqrt{M}\,\langle G,S_M-\Sigma\rangle+o_p(1)$.
Since $\langle G,S_M\rangle=\mathrm{tr}(GS_M)=M^{-1}\sum_\tau r_\tau^\top G r_\tau$ is an
average of i.i.d.\ scalars, $\sqrt{M}(W(S_M)-W(\Sigma))\to_d N(0,\sigma^2_W)$ with
$\sigma^2_W=\mathrm{Var}(r^\top G r)$.
Expanding by bilinearity and inserting the elliptical fourth moment
\eqref{eq:ell-fourth-moment} \citep{Tyler1981,Browne1984,Muirhead1982},
\[
  \sigma^2_W=\sum_{ijkl}G_{ij}G_{kl}\,\lim_M M\,\mathrm{Cov}(S_{ij},S_{kl})
  =\sum_{ijkl}G_{ij}G_{kl}\bigl[(1+\kappa)(\Sigma_{ik}\Sigma_{jl}+\Sigma_{il}\Sigma_{jk})
  +\kappa\,\Sigma_{ij}\Sigma_{kl}\bigr].
\]
The three index contractions give
$\sum_{ijkl}G_{ij}G_{kl}\Sigma_{ik}\Sigma_{jl}
=\sum_{ijkl}G_{ij}G_{kl}\Sigma_{il}\Sigma_{jk}=\mathrm{tr}[(G\Sigma)^2]$
(using $G=G^\top$) and
$\sum_{ijkl}G_{ij}G_{kl}\Sigma_{ij}\Sigma_{kl}=\langle G,\Sigma\rangle^2$, so
\eqref{eq:si-variance} follows,
$\sigma^2_W=2(1+\kappa)\,\mathrm{tr}[(G\Sigma)^2]+\kappa\langle G,\Sigma\rangle^2$.

\emph{(ii) Immunity.}
If $W(b\Sigma)=W(\Sigma)$ for $b$ in a neighborhood of $1$, differentiating at $b=1$
gives $0=\frac{d}{db}W(b\Sigma)\big|_{b=1}=\langle G,\Sigma\rangle$ (Euler's relation
for homogeneity of degree zero).
The common-shock term therefore vanishes and
$\sigma^2_W(\Sigma,\kappa)=2(1+\kappa)\mathrm{tr}[(G\Sigma)^2]
=(1+\kappa)\sigma^2_W(\Sigma,0)$.

\emph{(iii) Converse.}
Fix $\Sigma\in\mathcal{C}$.
By (i), $\sigma^2_W(\Sigma,\kappa_0)-(1+\kappa_0)\sigma^2_W(\Sigma,0)
=\kappa_0\langle\nabla W(\Sigma),\Sigma\rangle^2$.
The hypothesis makes the left side zero, and $\kappa_0>0$ forces
$\langle\nabla W(\Sigma),\Sigma\rangle=0$ on $\mathcal{C}$.
Now fix $\Sigma$ and let $J=\{b>0:b\Sigma\in\mathcal{C}\}$, an interval by
assumption, and define $\phi(b)=W(b\Sigma)$ on $J$.
Then $\phi'(b)=\langle\nabla W(b\Sigma),\Sigma\rangle
=b^{-1}\langle\nabla W(b\Sigma),b\Sigma\rangle=0$ for every $b\in J$, so $\phi$
is constant on $J$, so $W$ is scale-invariant along every ray segment in
$\mathcal{C}$.

\subsection[Gradients and Euler identities]{Gradients and Euler identities (Lemma~\ref{lem:ar-orth}, Proposition~\ref{prop:ar-clt})}\label{app:ar-orth}

Let $\Sigma=\sum_i\lambda_iu_iu_i^\top$, $T=\mathrm{tr}\,\Sigma$,
$S_K=\sum_{i\le K}\lambda_i$, $p_i=\lambda_i/T$.

For $\AR_K$, write $\AR_K(\Sigma)=\mathrm{tr}(P_K\Sigma)/\mathrm{tr}\,\Sigma$
with $P_K$ the top-$K$ spectral projector, well-defined and differentiable when
$\lambda_K>\lambda_{K+1}$.
For a symmetric perturbation $d\Sigma$, the first-order projector change $dP_K$
has zero diagonal blocks in the eigenbasis (it maps the top block to its
complement), so $\mathrm{tr}(dP_K\,\Sigma)=0$ and
\[
  d\AR_K=\frac{\mathrm{tr}(P_K\,d\Sigma)}{T}
  -\frac{S_K\,\mathrm{tr}(d\Sigma)}{T^2}
  \;\Rightarrow\;
  \nabla\AR_K=\frac{P_K}{T}-\frac{S_K}{T^2}I.
\]
The Euler identity follows, $\langle\nabla\AR_K,\Sigma\rangle=S_K/T-S_KT/T^2=0$.
For the variance, $G\Sigma$ is diagonal in the eigenbasis with entries
$\lambda_i(1/T-S_K/T^2)$ for $i\le K$ and $-\lambda_iS_K/T^2$ for $i>K$, so
$\mathrm{tr}[(G\Sigma)^2]=T^{-4}\bigl[(T-S_K)^2\sum_{i\le K}\lambda_i^2
+S_K^2\sum_{i>K}\lambda_i^2\bigr]$, giving the displayed $\AR_K$ variance in
Proposition~\ref{prop:ar-clt}.

For $f_1$, the special case $K=1$ (with $\lambda_1$ simple) gives
$\nabla f_1=u_1u_1^\top/T-\lambda_1I/T^2$; the Euler identity
$\lambda_1/T-\lambda_1T/T^2=0$ holds, and the variance is
$\sigma^2_{f_1}=2(1+\kappa)T^{-2}[(1-f_1)^2\lambda_1^2+f_1^2\sum_{i>1}\lambda_i^2]$.

For the entropy $H=-\sum_ip_i\log p_i$, differentiating with
$\partial p_j/\partial\lambda_i=(\delta_{ij}-p_j)/T$ gives
\[
  \frac{\partial H}{\partial\lambda_i}
  =-\sum_j(\log p_j+1)\frac{\delta_{ij}-p_j}{T}
  =-\frac{\log p_i+1-\sum_jp_j(\log p_j+1)}{T}
  =-\frac{\log p_i+H}{T},
\]
so $\nabla H=-T^{-1}\sum_i(\log p_i+H)u_iu_i^\top$.
The Euler identity reads $\sum_i\lambda_i\,\partial_iH
=-\sum_ip_i(\log p_i+H)=-(-H+H)=0$, and the variance is
$\sigma^2_H=2(1+\kappa)T^{-2}\sum_i\lambda_i^2(\log p_i+H)^2$.

Finally, in eigenvalue coordinates the $\AR_K$ gradient is
$g_i=(\mathbf 1_{\{i\le K\}}T-S_K)/T^2$ with
$\sum_ig_i\lambda_i=(TS_K-S_KT)/T^2=0$, matching the statement.

\subsection[Exact window-scale pivotality]{Exact window-scale pivotality (Proposition~\ref{prop:exact-pivot})}\label{app:exact-pivot}

With $r_\tau=c\,z_\tau$ for all $\tau$ in the window and a single positive
random variable $c$, the demeaned returns scale identically, so
$S_r=\frac{1}{M-1}\sum_\tau\tilde r_\tau\tilde r_\tau^\top=c^2S_z$
\emph{realization by realization}.
Scale invariance then gives $W(S_r)=W(c^2S_z)=W(S_z)$ pathwise; in particular
the law of $W(S_r)$ coincides with that of $W(S_z)$ whatever the joint law of
$c$, and no independence between $c$ and $\{z_\tau\}$ is used.

For the monitor, suppose both windows share the scale ($r=cz$ on the union of
the two windows) and $\mathcal{A}$ is scale-equivariant. Then
$C_t=\mathcal{A}(c^2S_{z,t})=c^2\mathcal{A}(S_{z,t})$, so eigenvectors and hence
$\widehat P_{K,t}$, $\widehat D_{K,t}$ are unchanged, while
$\eta_t=\|C_t-S_t\|_{\op}$ and $\widehat\Delta_{K,t}$ both scale by $c^2$, so
every ratio $\eta_s/\widehat\Delta_{K,s}$, hence $T_{K,s}$, $\tau^*_{K,t}$, and
all flags, is invariant.
Scale-equivariance of linear shrinkage and QIS follows from their definitions, since
both estimate shrinkage targets and intensities from scale-free spectral
quantities applied multiplicatively to the data scale.

\subsection[The cleaning--debiasing wedge]{The cleaning--debiasing wedge (Proposition~\ref{prop:wedge})}\label{app:wedge}

The proof computes the almost-sure limits of the sample and oracle ratios from spiked-model
asymptotics, characterizes the trace-preserving maps that remove the wedge, and concludes
with the central limit theorem for the debiased estimator.
Throughout, $N/M\to c\in(0,1)$, the spikes are fixed, distinct, and above the
edge $\sigma^2(1+\sqrt c)$, and $u_i$ ($v_i$) denote sample (population)
eigenvectors.

\emph{(i) Sample wedge.}
By \citet{BaikSilverstein2006,Paul2007},
$\widehat\lambda_i\to_{a.s.}\psi(\lambda_i)
=\lambda_i+c\sigma^2\lambda_i/(\lambda_i-\sigma^2)$ for each spike
$i\le K'$, so the numerator of $\widehat{\AR}_K(S)$ satisfies
$\sum_{i\le K}\widehat\lambda_i\to S_K+c\sigma^2\sum_{i\le K}
\lambda_i/(\lambda_i-\sigma^2)$.
For the denominator, $\E\,\mathrm{tr}\,S=\mathrm{tr}\,\Sigma_N$ exactly and
$\mathrm{Var}(\mathrm{tr}\,S)=2\,\mathrm{tr}(\Sigma_N^2)/M=O(N/M)$, so
$\mathrm{tr}\,S/\mathrm{tr}\,\Sigma_N\to_{a.s.}1$ (fourth-moment
Borel--Cantelli along $N\asymp cM$).
Dividing and using $\AR_K(\Sigma_N)=S_K/\mathrm{tr}\,\Sigma_N$ gives the claim.

\emph{(ii) Oracle wedge.}
Decompose
$u_i^\top\Sigma u_i=\sigma^2+\sum_{j\le K'}(\lambda_j-\sigma^2)
\langle u_i,v_j\rangle^2$.
By \citet{Paul2007,BenaychGeorgesNadakuditi2011},
$\langle u_i,v_i\rangle^2\to_{a.s.}\alpha_i^2$ and
$\langle u_i,v_j\rangle^2\to_{a.s.}0$ for $j\ne i$, giving
$d_i^*\to\sigma^2+(\lambda_i-\sigma^2)\alpha_i^2$ for spike directions.
Trace preservation is exact,
$\sum_i d_i^*=\sum_i u_i^\top\Sigma u_i=\mathrm{tr}\,\Sigma$.
Order preservation for separated spikes (spike limits exceed
$\sigma^2$, bulk values tend to $\sigma^2+o(1)$ by delocalization) puts the
top-$K$ oracle values on the spike directions, so
\[
  \widehat{\AR}_K(D^*)\to
  \frac{S_K-\sum_{i\le K}(\lambda_i-\sigma^2)(1-\alpha_i^2)}{\mathrm{tr}\,\Sigma_N}.
\]
The stated identity follows from algebra. With $\rho=(\lambda-\sigma^2)/\sigma^2$,
\[
  1-\alpha^2
  =\frac{c/\rho^2+c/\rho}{1+c/\rho}
  =\frac{c(1+\rho)}{\rho(\rho+c)}
  \;\Rightarrow\;
  (\lambda-\sigma^2)(1-\alpha^2)
  =\sigma^2\rho\cdot\frac{c(1+\rho)}{\rho(\rho+c)}
  =\frac{c\sigma^2\lambda}{\lambda-\sigma^2+c\sigma^2},
\]
using $\sigma^2(1+\rho)=\lambda$.
Summing over $i\le K$ gives $\Delta_O$, and $\Delta_O<\Delta_S$ holds term by
term since the denominators differ by $+c\sigma^2>0$.
For strong spikes ($\lambda_i\gg\sigma^2$) both wedges approach
$Kc\sigma^2/S_K$, so the oracle under-counts systemic concentration by
asymptotically the same amount that the raw sample over-counts it.

\emph{Zero-wedge characterization.}
The claim referenced in \S\ref{sec:ar-hd-caution} is that within
trace-preserving rotation-equivariant estimators whose spike eigenvalues
converge a.s.\ to a deterministic map $m(\lambda_i)$, the relative $\AR_K$
wedge vanishes for all spike configurations in an open set if and only if $m$
is the identity above the bulk edge.
For a trace-preserving estimator with deterministic spike map $m$, the same argument
gives $\widehat{\AR}_K\to\sum_{i\le K}m(\lambda_i)/\mathrm{tr}\,\Sigma_N$
(bulk values cannot exceed the edge image, preserving order for separated
spikes), so the relative wedge vanishes iff
$\sum_{i\le K}m(\lambda_i)=\sum_{i\le K}\lambda_i$.
Varying a single spike $\lambda_1$ over an open interval while holding the
others fixed forces $m(\lambda)=\lambda+\mathrm{const}$, and letting two
configurations share $K-1$ spikes forces the constant to zero; hence $m=$ id
above the edge.
The Frobenius-optimal map satisfies
$m^*(\lambda)-\lambda=-(\lambda-\sigma^2)(1-\alpha^2)<0$ for $c>0$.

\emph{(iii) Debiased CLT.}
Since $\psi$ is strictly increasing above the edge with
$\psi'(\lambda)=1-c\sigma^4/(\lambda-\sigma^2)^2>0$,
$\psi^{-1}$ is well-defined and smooth; consistency follows from (i) and the
continuous mapping theorem.
By \citet{BaiYao2008}, $\sqrt M(\widehat\lambda_i-\psi(\lambda_i))$ are
asymptotically independent Gaussians across separated spikes, so the delta
method through $\psi^{-1}$ makes the numerator
$\sum_{i\le K}\psi^{-1}(\widehat\lambda_i)$ asymptotically normal around $S_K$
at rate $\sqrt M$.
The denominator fluctuation is
$\sqrt M\,(\mathrm{tr}\,S-\mathrm{tr}\,\Sigma_N)/\mathrm{tr}\,\Sigma_N
=O_p(\sqrt{M}\cdot\sqrt{N/M}/N)=O_p(N^{-1/2})\to0$, negligible at the $\sqrt M$
scale.
Since $\AR_K(\Sigma_N)=S_K/\mathrm{tr}\,\Sigma_N=O(1/N)$, the absolute statistic
$\sqrt M(\widehat{\AR}{}^{\mathrm{deb}}_K-\AR_K)=O_p(1/N)\to0$ is degenerate;
the non-degenerate limit is for the relative error,
$\sqrt M(\widehat{\AR}{}^{\mathrm{deb}}_K/\AR_K-1)\to_d N(0,V_{\mathrm{rel}})$
with $V_{\mathrm{rel}}=\mathrm{Var}_\infty[\sqrt M(\sum_{i\le K}\psi^{-1}(\widehat\lambda_i)-S_K)]/S_K^2$
of order one.
The debiased estimator is trace-preserving. The spike inflation
$\sum_i\bigl(\widehat\lambda_i-\psi^{-1}(\widehat\lambda_i)\bigr)$ is returned
to the bulk eigenvalues uniformly, so the debiased spectrum sums to
$\mathrm{tr}\,S$ and the plug-in ratio equals
$\sum_{i\le K}\psi^{-1}(\widehat\lambda_i)/\mathrm{tr}\,S$.
Without the redistribution the denominator would be depleted by
\[
  c\sigma^2\sum_{i\le K'}\frac{\lambda_i}{\lambda_i-\sigma^2}+o_{a.s.}(1),
\]
a relative upward $\AR_K$ error equal, to leading order, to $\AR_{K'}(\Sigma_N)$
times the level-$K'$ sample wedge; it vanishes asymptotically at rate $O(c/N)$
but amounts to roughly $40\%$ of the wedge itself at the panel concentrations of
\S\ref{sec:mc-si-coverage}.

\newpage
\section{Equity panel illustration}
\label{app:panel}

The panel comprises daily adjusted returns on 115 U.S.\ large-cap stocks (retrieved
from Yahoo Finance via the \texttt{yfinance} interface), with covariance estimates
formed on rolling windows of length $M{=}252$ and shrinkage by the QIS estimator.
This appendix reports what the procedures of the main text produce when the
population covariance $\Sigma_t$ is unknown, namely projector movement against
the worst-case band, the calibrated tests by subperiod, and a calm-period
calibration of the analytic threshold.
All confirmatory evidence in the paper is simulation under known $\Sigma$
(\S\ref{sec:mc}); on the panel the perturbation size is the sample-relative
quantity $\eta_t=\|C_t-S_t\|_{\op}$, and flags are diagnostics rather than tests
of latent structural change.

\begin{remark}[Scope of the panel evidence]
\label{rem:empirical-scope}
On the panel, $\|C_t-S_t\|_{\op}$ measures deviation from the window sample $S_t$,
not from $\Sigma_t$, so $\widehat D_{K,t}>\alpha\tau^*_{K,t}$ is a diagnostic flag,
not evidence that the latent $D_{K,t}\neq 0$ (\S\ref{sec:theory-monitor}).
\end{remark}

\subsection{The worst-case band on the panel}
\label{sec:empirical}

Table~\ref{tab:monitor} reports medians of the observed movement
$\widehat D_{K,t}$, the capped empirical-gap band $\widehat\tau^*_{K,t}$
(Corollary~\ref{cor:empirical-gap}), and the weak-identification and rank-change
shares by subperiod.
The band behaves as the theory predicts.
In calm periods it sits far above the observed movement and never flags; at the
calibrated scale $\alpha^\star{=}0.75$ of \S\ref{sec:mc-alpha} the flag rate is
$0.3\%$ over the full sample.
In March--April 2020 the normalized eigengap shrinks to near zero (median
$\gaptwelve\approx 0.02$) while $\widehat D_{K,t}$ stays moderate; the per-date
Davis--Kahan terms saturate and $\widehat\tau^*_{K,t}$ rests at its ceiling
$\sqrt{2\widehat K}$ because $\widehat\Delta_{K,t}\le 2\eta_t$ on $82.6\%$ of
dates (weak identification, Corollary~\ref{cor:empirical-gap}).
On real data the band behaves as intended. When identification is weak it
carries no information beyond the trivial bound, and the calibrated tests below
take over.

\begin{table}[!ht]
  \centering
  \caption{Gap-adjusted monitoring on the panel ($\eta_t=\|C_t-S_t\|_{\op}$).}
  \label{tab:monitor}
  \footnotesize
  \begin{tabular}{lrrrrrr}
    \toprule
    Period & $n$ & Med.\ $\widehat D$ & Med.\ norm.\ gap & Med.\ $\widehat\tau^*$ &
    \% weak id. & \% rank chg. \\
    \midrule
    Calm 17--19 & 754 & 0.0082 & 0.80 & 0.45 & 0.0 & 0.0 \\
    COVID-20 & 92 & 0.035 & 0.02 & 2.45 & 82.6 & 5.4 \\
    Full & 6024 & 0.0095 & 0.81 & 0.44 & 7.7 & 0.5 \\
    \bottomrule
  \end{tabular}
  \par\vspace{0.4em}
  \parbox{\linewidth}{\footnotesize $\widehat\tau^*_{K,t}$ is the capped empirical-gap band
  (Cor.~\ref{cor:empirical-gap}); during COVID it saturates at its ceiling
  $\sqrt{2\widehat K}$ ($\widehat K{=}3$, $\sqrt{6}\approx 2.45$) because
  $\widehat\Delta_{K,t}\le 2\eta_t$ on 82.6\% of dates, reflecting weak identification rather than large
  observed $\widehat D_{K,t}$. Rank-change dates ($\widehat K_t\ne\widehat K_{t-1}$,
  \S\ref{sec:theory-monitor}) are reported separately.}
\end{table}

\subsection{Panel diagnostics by subperiod}
\label{sec:panel-inference}

Table~\ref{tab:panel-inference} applies the analytic and bootstrap thresholds to
the panel ($M{=}252$, daily stride $s{=}1$, QIS monitor; bootstrap on every 21st
date, $B{=}60$; elliptical-scaled nulls of Remark~\ref{rem:elliptical-null} with
per-window moment-based $\widehat\kappa_t$, median $\widehat\kappa\approx 1.5$).
Real returns are neither i.i.d.\ nor exactly elliptical, so the per-date flag
rates are reported as diagnostics rather than as size-controlled tests.
The moment-based $\widehat\kappa$ (median $\approx1.5$) and the radial MLE
($\approx0.2$) disagree for the same reason, and the moment scaling, which absorbs
within-window volatility clustering, is the better-calibrated choice on the panel
(calm flag rate $2.7\%$ versus $7.6\%$).
Calm 2017--19 flags at $2.7\%$ and the full sample at $4.1\%$, near the
nominal $5\%$ that an i.i.d.-elliptical within-window null would target, while COVID-2020
flags at $21.7\%$/$25.0\%$ (analytic/bootstrap) and the GFC at
$11.5\%$/$18.2\%$, a regime gradient the worst-case band cannot express
(Figure~\ref{fig:panel-dhat}).
Median AR$_{\widehat K}$ rises from $0.31$ (calm) to $0.71$ (COVID) with
delta-method CI half-widths of about $0.06$; absorption-ratio movements smaller
than this are not distinguishable from estimation noise at $M{=}252$, a caveat
directly relevant to systemic-risk monitors built on \citet{Kritzman2011}.

\begin{table}[!ht]
  \centering
  \caption{Panel inference by subperiod (QIS, $M{=}252$, $s{=}1$), with observed movement,
  analytic and bootstrap thresholds and flags, absorption ratio with elliptical delta CI.}
  \label{tab:panel-inference}
  \footnotesize
  \begin{tabular}{lrrrrrrr}
\toprule
Period & $n$ & Med.\ $\widehat D$ & Med.\ $q^{\mathrm{an,ell}}_{95}$ & Flags anl.\ (\%) & Flags boot.\ (\%) & Med.\ $\AR_{\widehat K}$ & Med.\ CI width \\
\midrule
Full & 6024 & 0.0095 & 0.0361 & 4.1\% & 4.1\% & 0.327 & 0.121 \\
Calm 17--19 & 754 & 0.0082 & 0.0362 & 2.7\% & 0.0\% & 0.311 & 0.126 \\
GFC 08--09 & 209 & 0.0093 & 0.0251 & 11.5\% & 18.2\% & 0.494 & 0.136 \\
COVID-20 & 92 & 0.0351 & 0.1345 & 21.7\% & 25.0\% & 0.713 & 0.132 \\
Hikes 22--23 & 462 & 0.0071 & 0.0253 & 2.4\% & 9.1\% & 0.407 & 0.122 \\
\bottomrule
\end{tabular}

\end{table}

\begin{figure}[!ht]
  \centering
  \includegraphics[width=\linewidth]{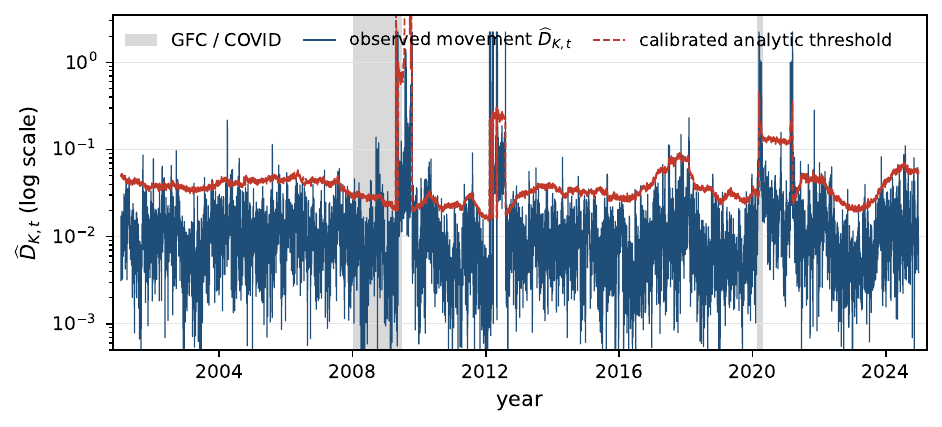}
  \caption{Daily projector movement $\widehat D_{K,t}$ on the panel against the
  elliptical-scaled analytic threshold (moment-based $\widehat\kappa_t$,
  \S\ref{sec:ar-inference}), with the 2008 financial crisis and the 2020 COVID
  window shaded. Observed movement stays below the threshold in calm periods and
  exceeds it in clusters during market stress; the threshold widens, and can
  saturate, when identification is weak, so a few weak-identification dates fall
  above the plotted range.}
  \label{fig:panel-dhat}
\end{figure}

Table~\ref{tab:event-study} conditions forward outcomes on the analytic-test
flag.
Over the 22 trading days following a flagged date, equal-weight portfolio
volatility averages $26.2\%$ annualized versus $15.9\%$ after unflagged dates
(ratio $1.65$; moving-block bootstrap $95\%$ CI $[1.22,2.15]$), and the mean
forward maximum drawdown is $4.8\%$ versus $2.8\%$.
Within the most stressed trailing-volatility quintile the forward-volatility
ratio is $1.70$, so the flags carry information beyond the current volatility
level; in mid-volatility regimes the flag adds little, consistent with the
detectability frontier of \S\ref{sec:tier-power}.

\begin{table}[!ht]
  \centering
  \caption{Event study of forward 22-day outcomes after analytic-test flag vs.\ no-flag dates,
  stratified by trailing 22-day volatility quintile (EW portfolio; annualized vol;
  DD = maximum drawdown).}
  \label{tab:event-study}
  \footnotesize
  \begin{tabular}{lrrrrrr}
\toprule
Trailing-vol quintile & $n_{\mathrm{flag}}$ & $n_{\mathrm{no}}$ & Fwd vol (flag) & Fwd vol (no) & Ratio & Fwd DD (flag/no) \\
\midrule
Q1 & 27 & 1174 & 11.8\% & 10.7\% & 1.10 & 1.9\%/1.8\% \\
Q2 & 31 & 1169 & 14.4\% & 12.0\% & 1.20 & 3.0\%/2.1\% \\
Q3 & 38 & 1163 & 14.1\% & 14.4\% & 0.98 & 2.5\%/2.8\% \\
Q4 & 61 & 1139 & 18.0\% & 16.3\% & 1.10 & 3.0\%/2.7\% \\
Q5 & 90 & 1111 & 45.2\% & 26.5\% & 1.70 & 8.5\%/4.7\% \\
\midrule
Pooled & 247 & 5756 & 26.2\% & 15.9\% & 1.65 & 4.8\%/2.8\% \\
\bottomrule
\end{tabular}

\end{table}

\subsection{Calm-period calibration of the analytic test}
\label{app:monitor-calib}

As a complementary check, the threshold $\tau^{\mathrm{cal}}$ is set directly
from calm 2017--19 history (95th percentile or block-bootstrap upper quantile of
$\widehat D_{K,t}$, block length 22).
Table~\ref{tab:monitor-calib-inference} reports the resulting flag prevalence by
subperiod against the uncalibrated worst-case band; the calm-calibrated
threshold reproduces the regime gradient of Table~\ref{tab:panel-inference}.

\begin{table}[!ht]
  \centering
  \caption{Panel monitoring flag prevalence, worst-case band (uncalibrated $\tau^*_{K,t}$) vs calm-calibrated analytic threshold ($\tau^{\mathrm{cal}}$).}
  \label{tab:monitor-calib-inference}
  \footnotesize
  \begin{tabular}{lrrr}
\toprule
Period & $n$ & Band (\%) & Analytic quantile (\%) \\
\midrule
Full & 6024 & 0.0 & 7.3 \\
Calm 2017--19 & 754 & 0.0 & 5.0 \\
GFC 2008--09 & 209 & 0.0 & 14.4 \\
COVID 2020 & 92 & 0.0 & 46.7 \\
Hikes 2022--23 & 462 & 0.0 & 2.2 \\
High-vol tertile & 1922 & 0.0 & 10.7 \\
\bottomrule
\end{tabular}

\end{table}

\bibliographystyle{plainnat}
\bibliography{references}

\end{document}